\tikzstyle{bullet}=[fill={rgb,255: red,64; green,64; blue,64}, draw=black, shape=circle, inner sep=1pt, minimum size=3pt, text=white]
\tikzstyle{white-circle}=[fill=white, draw=black, shape=circle, inner sep=1pt, minimum size=3pt]
\tikzstyle{box}=[fill=white, draw=black, shape=rectangle]
\tikzstyle{wide copoint}=[fill=white, draw, shape=isosceles triangle, shape border rotate=90, inner sep=0pt, minimum width=1.5cm, minimum height=6.12mm]
\tikzstyle{wide point}=[fill=white, draw, shape=isosceles triangle, shape border rotate=-90, inner sep=0pt, minimum width=1.5cm, minimum height=6.12mm, yshift=-0.0mm]
\tikzstyle{black_diamond}=[fill=black, draw=black, shape=rectangle, rotate=45]
\tikzstyle{small_label}=[scale=0.7, shape=circle]
\tikzstyle{measurement}=[fill=white, draw=black, shape=semicircle, append after command={
\tikzstyle{small label fill}=[fill=yellow, draw=none, shape=rectangle, scale=0.7, inner sep=0]
\tikzstyle{red small label}=[fill=none, draw=none, shape=circle, text=red, scale=0.7, tikzit draw=red, tikzit fill=white]
\tikzstyle{blue small label}=[fill=none, draw=none, shape=circle, text=blue, scale=0.7, tikzit fill=white, tikzit draw=blue]
\tikzstyle{ground}=[draw, shape=ground, minimum width=0.6cm, minimum height=0.4cm, inner sep=0pt]
\tikzstyle{red bullet}=[fill=red, draw=red, shape=circle, inner sep=0pt, minimum size=3pt]
\tikzstyle{blue bullet}=[fill=blue, draw=blue, shape=circle, inner sep=0pt, minimum size=3pt]
\tikzstyle{arrow}=[arrows={-Latex[width=5pt, length=5pt]}]
\tikzstyle{latexarrow}=[arrows={-Stealth[scale=1.5]}]
\tikzstyle{classical}=[-, fill=none, draw=black, double]
\tikzstyle{measure-line}=[{|-|}, line width=0.3pt]
\tikzstyle{dashed}=[-, dash pattern=on 2pt off 2pt, draw opacity=0.7]
\tikzstyle{thick}=[-, line width=1pt]
\tikzstyle{thick red}=[-, line width=1pt, draw=red]
\tikzstyle{thick blue}=[-, line width=1pt, draw=blue]
\tikzstyle{brace}=[-, decorate, decoration={brace, amplitude=10pt}]
\tikzstyle{yellow fill}=[-, fill=yellow, draw=none, tikzit draw={rgb,255: red,210; green,214; blue,0}]
\tikzstyle{normal with red glow}=[-, draw, glow=red]
\tikzstyle{mid arrow}=[-, draw=black, postaction=decorate, decoration={{
\newcommand{\red}[1]{{\color{red}#1}}
\newcommand{\blue}[1]{{\color{blue}#1}}
\declaretheoremstyle[
  headfont=\normalfont\bfseries,
  numberwithin=section,
  bodyfont=\normalfont,
  headformat={\NAME\ \NUMBER \NOTE}
]{mydefstyle}
\declaretheorem{definition}[style=mydefstyle, title={Definition}, refname={definition,definitions}, Refname={Definition,Definitions}]
\declaretheoremstyle[
  headfont=\normalfont\bfseries,
  sibling=definition,
  bodyfont=\normalfont,
  headformat={\NAME\ \NUMBER \NOTE}
]{mythmstyle}
\declaretheorem{theorem}[style=mythmstyle, title=Theorem, refname={theorem,theorems}, Refname={Theorem,Theorems}]
\declaretheorem{lemma}[style=mythmstyle, title=Lemma, refname={lemma,lemmas}, Refname={Lemma,Lemmas}]
\declaretheorem{corollary}[style=mythmstyle, title=Corollary, refname={corollary,corollaries}, Refname={Corollary,Corollaries}]
\renewcommand{\vec}[1]{\underline{#1}}
\newcommand{\LL}{\mathcal{L}}
\newcommand{\gen}[1]{\left\langle #1 \right\rangle}
\newcommand{\negen}[1]{\langle #1 \rangle}
\newcommand{\N}{\mathbb{N}}
\newcommand{\Z}{\mathbb{Z}}
\newcommand{\R}{\mathbb{R}}
\newcommand{\C}{\mathbb{C}}
\newcommand{\then}{\fatsemi}
\DeclareMathOperator*{\E}{\mathbb{E}}
\newcommand{\ket}[1]{\left|#1\right\rangle}
\newcommand{\fket}[1]{\ket{#1}}
\newcommand{\neket}[1]{|#1\rangle}
\newcommand{\deq}{\coloneqq}
\newcommand{\oset}[3][0ex]{%
  \mathrel{\mathop{#3}\limits^{
    \vbox to#1{\kern-2\ex@
    \hbox{$\scriptstyle#2$}\vss}}}}
\newcommand{\isoto}{\oset[-0.3ex]{\!\!\sim}{\to}}
\newcommand{\etal}{%
  \unskip\nobreak~\textit{et~al.}%
  \@ifnextchar\cite{\nobreak~}{\@\ }%
}
\newcommand{\iicc}[1]{\llbracket #1 \rrbracket}
\newcommand{\downset}[1]{{#1{\downarrow}}} % \downarrow within new group to make it kern correctly
\newcommand{\Fs}[2]{\mathcal{F}^{#1} \C^{#2}}
\newcommand{\fwtikzfig}[1]{%
  \resizebox{\linewidth}{!}{
    \tikzfig{#1}
  }%
}
\tikzset{glow/.style={%
  preaction={draw,
             line cap=round,
             line join=round,
             opacity=0.3,
             line width=4pt,
             #1}}}
\tikzset{glow/.default=yellow}
\DeclareRobustCommand{\tikzredarrowsback}{%
  \mathrel{\raisebox{0.5ex}{%
      \tikz[baseline=(current bounding box.center)]{
        \draw [style=none, red, glow={red!75},
        postaction={decorate,
          decoration={markings,
            mark=between positions 0.2 and 1 step 0.4 with {\arrow[scale=1.2]{stealth}}}}] (0,0) to ++(-0.5,0);
      }}}
}
\newcommand{\removelatexerror}{\let\@latex@error\@gobble}
\newenvironment{myalgorithm}[2]{
  \begingroup
  \newcommand*{\mycaption}{#1}
  \newcommand*{\mypostfix}{#2}
  \begin{figure}[htb!]
    \begingroup
    \removelatexerror
    \begin{algorithm}[H]
      \DontPrintSemicolon
}{
    \end{algorithm}
    \endgroup
    \caption{\mycaption}
    \mypostfix
  \end{figure}
  \endgroup
}
\setlist[enumerate]{
  labelwidth=*,
  leftmargin=0pt,
  itemindent=\parindent+\labelwidth,
  itemsep=0pt,
  parsep=\parsep,
  listparindent=\parindent,
  topsep=\parsep
}
\setlist[itemize]{
  labelwidth=*,
  leftmargin=0pt,
  itemindent=\parindent+\labelwidth,
  itemsep=0pt,
  parsep=\parsep,
  listparindent=\parindent,
  topsep=\parsep
}
\newcommand{\captionitem}[1]{\hspace*{\parindent}\textbf{(#1)}}
\begin{document}

\title{Boundaries for quantum advantage with single photons and loop-based time-bin interferometers}
\author{Samo Nov\'{a}k}
\orcid{0000-0002-2713-9593}
\affiliation{ORCA Computing, London, UK}

\author{David D. Roberts}
\orcid{0009-0009-1502-5845}
\affiliation{ORCA Computing, London, UK}

\author{Alexander Makarovskiy}
\affiliation{ORCA Computing, London, UK}

\author{Ra\'{u}l Garc\'{i}a-Patr\'{o}n}
\orcid{0000-0003-1760-433X}
\affiliation{%
  Laboratory for the Foundations of Computer Science,
  School of Informatics,
  University of Edinburgh
}
\affiliation{Phasecraft Ltd., London, UK}

\author{William R. Clements}
\email{wclements@orcacomputing.com}
\affiliation{ORCA Computing, London, UK}

\date{25 November 2024}
\begin{abstract}
  Loop-based boson samplers interfere photons in the time degree of freedom using a sequence of delay lines. Since they require few hardware components while also allowing for long-range entanglement, they are strong candidates for demonstrating quantum advantage beyond the reach of classical emulation.
  We propose a method to exploit this loop-based structure to more efficiently classically sample from such systems.
  Our algorithm exploits a causal-cone argument to decompose the circuit into smaller effective components that can each be simulated sequentially by calling a state vector simulator as a subroutine.
  To quantify the complexity of our approach, we develop a new lattice path formalism that allows us to efficiently characterize the state space that must be tracked during the simulation.
  In addition, we develop a heuristic method that allows us to predict the expected average and worst-case memory requirements of running these simulations.
  We use these methods to compare the simulation complexity of different families of loop-based interferometers,
  allowing us to quantify the potential for quantum advantage of single-photon Boson Sampling in loop-based architectures.
\end{abstract}

\keywords{}

\maketitle

\section{Introduction}

\begin{figure*}[tp]
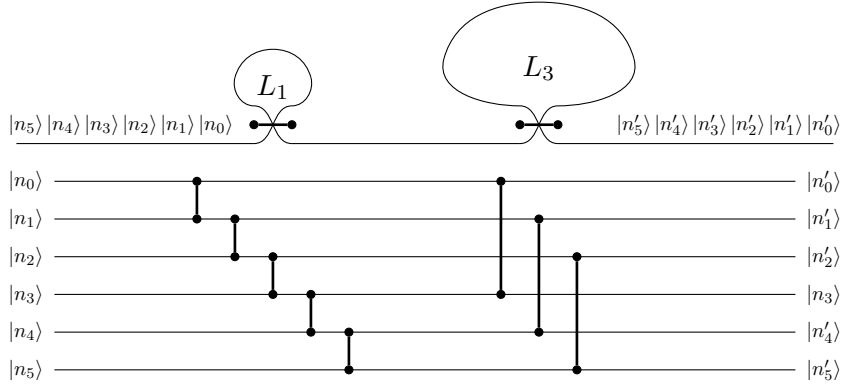
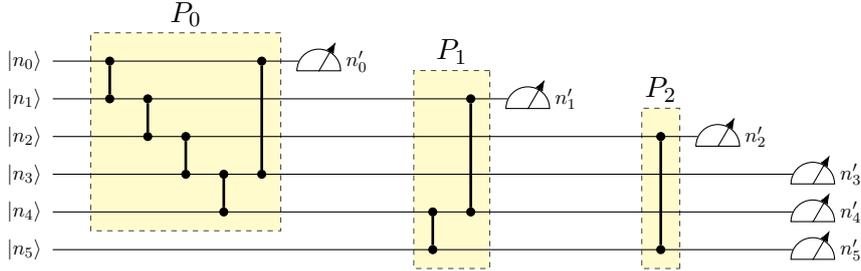
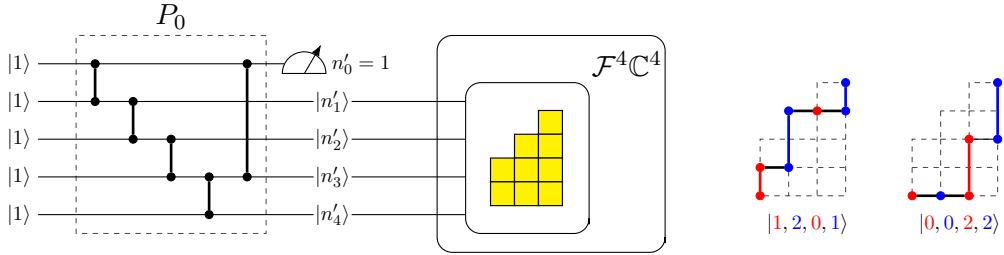

  \begin{subfigure}[b]{\textwidth}
    \ctikzfig{L1-3_circuit-finite}
    \caption{A system made of loops $L_1 \then L_3$}
    \label{fig:big_picture:circuit}
  \end{subfigure}

  \vspace*{2em}

  \begin{subfigure}[b]{1\textwidth}
    \ctikzfig{L1-3_progressive}
    \caption{Progressive decomposition of $L_1 \then L_3$}
    \label{fig:big_picture:progressive}
  \end{subfigure}

  \vspace*{2em}

  \begin{subfigure}[b]{1\textwidth}
    \ytableausetup{boxsize=3mm}
    \ctikzfig{L1-3_progressive-LP}
    \caption{State space of possible outputs after the zeroth component $P_0$ of $L_1 \then L_3$ and measurement $n'_0 = 1$}
    \label{fig:big_picture:lattice-paths}
  \end{subfigure}

  \caption{\textbf{Overview of the results.}
    \captionitem{a}~A time-bin boson sampler consisting of loops of different lengths composed in sequence. The input photons arrive in regular intervals, represented by time-bin modes with photon numbers $\ket{n_0, n_1, \dots, n_{m-1}}$. The loops couple earlier time-bin modes to later ones, and the lengths of the loops are given in time-bin units: thus a loop of length $\ell$ allows photons in modes $a$ and $a + \ell$ to interfere. This is shown below the physical schematic as a circuit. Note that by convention, the first $\ell$ time-bins are deterministically loaded into a loop of length $\ell$, and we omit drawing the initial vacuum modes inside the loop.
    \captionitem{b}~\emph{Progressive decomposition} of the circuit, where each yellow box $P_a$ contains only beamsplitters in the causal cone of the measurement of mode $a$, except for those already captured by previous causal cones. This decomposition allows us to evolve the wavefunction only through components $P_a$ that all have a bounded effective number of modes, measure their top mode, and collapse the wavefunction into a subspace.
    \captionitem{c}~An example of our lattice path formalism that describes the reachable state space of a loop-based system throughout simulation. Here, we show the state space after evolving the input wavefunction $\ket{1^5}$ through component $P_0$ and measuring 1 photon in the top mode, thus collapsing the wavefunction into a subspace shown as the yellow lattice diagram. Each basis state $\ket{n'_1, n'_2, n'_3, n'_4}$ of the space that can be found with nonzero probability corresponds to a path within the yellow diagram that starts in the bottom left corner, ends at top right corner, and takes integer steps up or right. The numbers of photons correspond to vertical steps taken by a path, as shown for two example paths on the right-hand-side. Basis states of the full Fock space $\Fs 44$ not contained in the yellow diagram are not reachable.
  }
  \label{fig:big-picture}
\end{figure*}

Boson sampling is a promising candidate in the search for quantum advantage because it presents a problem solvable by a quantum computer that generally cannot be solved classically. In single-photon Boson Sampling (SPBS) proposed by Aaronson \& Arkhipov~\cite{aaronson_arkhipov_2010}, a multi-mode Fock state made of a collection of single photons is sent through a passive interferometer and measured by photon number detectors. The probability distribution governing this process depends on the permanent of a matrix, the computation of which is \#P-hard~\cite{valiant_complexity_1979}. Under reasonable assumptions, sampling from such a distribution is classically hard~\cite{aaronson_arkhipov_2010}. As such, boson sampling is generally intractable to simulate using a classical computer beyond a few tens of photons. However, a quantum device can generate samples efficiently.

To show that a specific SPBS system achieves quantum advantage, we must determine the complexity of classically simulating the sampling process using currently known methods, and show that this is beyond the reach of current classical computers. In order to achieve sufficiently high simulation complexity, we need the interferometer to allow a large number of modes to interfere with each other, generally requiring many components to build, for example the triangular~\cite{Reck_1994} or rectangular~\cite{clements_optimal_2016} universal schemes. If the connectivity is too low, the entanglement is local and the system can be efficiently simulated, e.g. by tensor network methods~\cite{garcia-patron_simulating_2019}.

However, one family of architectures that allows such connectivity but are easier to build are loop-based systems, used for example by Madsen\etal\cite{madsen_quantum_2022} in an experiment that claimed to achieve quantum advantage. These architectures work in the time degree of freedom where modes, called time-bins, correspond to regular time intervals in which photons arrive~\cite{motes_scalable_2014,Lubasch_2018}. The interferometer consists of time-delay loops of various lengths, such as the ones illustrated in \Cref{fig:big_picture:circuit}, which allow photons in distinct time-bin modes to interfere.

Investigation of the complexity of simulating such loop-based systems is needed because, while general SPBS is known to be hard, the restricted circuit topology may be exploited to speed up classical simulation. This would raise the bar for quantum advantage in such loop-based systems. Recent examples include~Lubasch\etal\cite{Lubasch_2018}, who focused on the case where all loops have the same length, and Deshpande\etal\cite{deshpande_quantum_2021} who showed that two loops of different lengths are also not complex enough for advantage.

However, in the case of more general loop-based systems, further study is required to determine if and when these systems are simulable or not. Our work advances this field of study by establishing new conditions on the sequence of delay lines that should be used to achieve advantage. Our main contributions are described in the following.

Using a representation of a loop as a staircase-shaped sequence of possibly nonlocal gates, as illustrated in \Cref{fig:big_picture:circuit},\footnote{Note we use the same convention as~\cite{motes_scalable_2014} where we preload the loops with the first time bins which leads to the circuit as shown, omitting the initial vacuum modes of the loops.} we highlight the limited connectivity of the interferometer.
Following a light-cone argument, we decompose the circuit into smaller components shown in \Cref{fig:big_picture:progressive}. Each component has a bounded effective number of modes, and only contains a small number of beamsplitters; thus \emph{strongly} simulating a single component (by computing its state vector evolution) is easier. Our method measures a single mode after simulating each component, and collapses the wavefunction correspondingly, keeping the memory requirement tractable even for large systems. The wavefunction evolution through each component can be computed using any wavefunction simulator capable of working in the Fock (number) basis, giving a way for further specific optimizations or extensions.

To quantify the complexity of the above simulation method, and hence help in the search for quantum advantage, we develop a new formalism based on lattice paths that fully describes the state space throughout the simulation. An extension of the work of Br\'adler \& Wallner~\cite{bradler_wallner_2021}, our formalism maps one possible output state as one path in the lattice, and the entire reachable space corresponds to a region bounded by the maximal path; see an example in \Cref{fig:big_picture:lattice-paths}. Our formalism provides tools to reason not only about purely loop-based systems: one can use it to characterize the state space in an arbitrary circuit of beamsplitters and measurements, as long as the first operation is a loop of length one time-bin, and the input consists of Fock states.

The structure of our paper is the following:
First, in \Cref{sec:comp-with-other}, we compare our approach to existing methods found in the literature.
In \Cref{sec:background}, we present the background material necessary to understand the paper. Then we study the circuit topology of loop-based systems, and use it to design a new simulation algorithm in \Cref{sec:loop-structure}. In \Cref{sec:a-new-formalism} we develop our new lattice path-based formalism to characterize the state space, and in \Cref{sec:quantifying-complexity}, we show how our formalism can be used to obtain the memory complexity statistics. We use these tools to evaluate the potential for quantum advantage in a family of loop-based systems where the lengths are $1, \ell, \ell^2$. Finally, in \Cref{sec:discussion}, we contextualize our findings and identify advantage candidates that warrant further study.

We provide a companion Git repository\footnote{\label{fn:git}\url{https://github.com/orcacomputing/loop-progressive-simulator}} with a reference implementation of the simulation algorithm, as well as the lattice-path computation and complexity estimation used to obtain results in \Cref{sec:quantifying-complexity}.

\subsection{Comparison with other approaches}
\label{sec:comp-with-other}

\begin{table*}[ht]
  \centering
  \begin{tabular}{|l|c|c|c|}
    \hline
    Circuit type $\Rightarrow$ & \multicolumn{2}{c|}{\multirow{2}{*}{power-law ($\Lambda$ loops, base $\ell$)}} & full-depth Haar-random \\
    Algorithm $\Downarrow$ & \multicolumn{2}{c|}{} & triangular~\cite{Reck_1994} or square~\cite{clements_optimal_2016} \\
    \hline
    our method & $\ell = 1$ & $\mathcal{O}(m \Lambda^3 2.6^\Lambda)$ & $\mathcal{O}(m^4 2.6^{m})$ \\
    (average instance) & $\ell > 1$ & $\mathcal{O}(m \Lambda \ell^{2(\Lambda - 1)} 2.6^{\ell^{\Lambda - 1}})$ & n/a \\
    \hline
    {Clifford \& Clifford~\cite{Cliffords_2017} incl.} & \multicolumn{3}{c|}{\multirow{2}{*}{$\mathcal{O}(m 2^{m/2}) + \mathcal{O}(m^3)$}} \\
    tree decompositions~\cite{oh_classical_2022,novak_2024} & \multicolumn{3}{c|}{} \\
    \hline
    \texttt{SLOS\_full}~\cite{heurtel_strong_2023} & \multicolumn{3}{c|}{$\mathcal{O}\left( m 2.6^m \right)$} \\
    \hline
  \end{tabular}
  \caption{Computational scaling of our method compared to other simulation and sampling algorithms. Here $m$ is the number of modes, and we consider systems where the number of photons is half the number of modes. For the power-law architecture, we consider a sequence of $\Lambda$ loops of lengths $1,\ell,\ell^2,\dots,\ell^{\Lambda-1}$. The base of exponentials $2.6 \approx \sqrt{27}/2$ comes from the use of the Stirling's formula to simplify $\binom{\frac32 x - 1}{x}$ as $x$ grows, see (\ref{eq:stirling-3/2}).}
  \label{fig:comparison-table}
\end{table*}

First, we summarize how our approach compares to other relevant work in the literature. Our approach provides an efficient method for simulating loop-based circuits with a very large number of photons and modes, as long as there are few loops and their lengths are kept short. This regime is not covered by any other classical simulation algorithm to date. We describe these other algorithms in the following. Note that we focus here on exact sampling algorithms as the most relevant comparison, and do not consider methods that exploit noise to approximate boson sampling such as~\cite{oh2024classical}.

As a first point of reference, Clifford and Clifford~\cite{Cliffords_2017} developed an algorithm generating a sample from an $m$-mode $n$-photon experiment in time $\mathcal{O}(n 2^n) + \mathcal{O}(mn^2)$. This general method does not exploit any internal structure of the interferometer and only depends on the number of modes and photons. As such, it becomes impractical in regimes above 40 photons, even when the interferometer is highly structured. Our algorithm can sample from states with many more photons in simple loop-based systems.

For settings where local-only interactions induce a sparse graph structure, tree decomposition techniques can reduce the complexity of the Clifford and Clifford algorithm to $\mathcal{O}(m n^2 \omega^2 2^\omega)$~\cite{oh_classical_2022}, with a further reduction to $\mathcal{O}(n^2 2^\omega \omega^2) + \mathcal{O}(\omega n^3)$ achievable in the 1D lattice case~\cite{novak_2024}. Here $\omega$ corresponds to the (highest) treewidth of the graphs representing measurement outcomes. However, in general, these tree decomposition techniques are not transferrable to our setting: in the presence of loops of arbitrary lengths, long-range interference occurs which leads to less sparsity in the graph representing the connectivity of modes. As such, they cannot be used for the loop-based circuits considered in this work.

Another point of reference is the \texttt{SLOS\_full}~\cite{heurtel_strong_2023} \emph{strong} state vector simulator. The \texttt{SLOS\_full} algorithm achieves a running time $\mathcal{O}(n \binom{n + m - 1}{m - 1})$ where the binomial is the dimension of the Fock space. It is general in the sense that it does not exploit any internal structure. This, however, means that at some number of photons and modes, the time and memory requirements become intractable. Our algorithm, on the other hand, calls a strong simulation subroutine only on a smaller subcircuit, with a smaller number of photons, and then samples some of the photons, thus keeping the problem tractable even for many photons and modes in a loop-based system. Note that on a Haar-random circuit, our method is slower than \texttt{SLOS\_full} since we evolve the state vector over all the beam splitters one by one instead of just using the matrix description of the overall unitary to calculate the state.

In \Cref{fig:comparison-table}, we summarize the comparison in a table showing the time complexities of our algorithm and the other algorithms described above. As illustrative examples we focus on two cases. First, we consider the \emph{power-law} architecture with the sequence of loops of lengths $1, \ell, \ell^2, ... , \ell^{\Lambda-1}$ (with integer $\ell$ and $\Lambda$) that we study in \Cref{sec:comparison-power-law-archs}, with an input state $\ket{\vec n} = \ket{1,0,1,0,\dots}$. Ours is the only method that does not scale exponentially with the number of photons or modes, with scaling instead with $\ell$ and~$\Lambda$. We also contrast this with a Haar-random circuit, modelled as a triangular or rectangular schemes~\cite{Reck_1994,clements_optimal_2016} which can also be implemented in a loop-based system (setting $\ell=1$), and the same input state. For these Haar-random circuits, other methods are more appropriate. The derivation of the complexities presented in this table can be found in \Cref{sec:theor-compl-progr}.

We note that the theoretical hardness of loop-based circuits has been explored by Go\etal~\cite{go_exploring-shallow-depth_2024,go_computational_2024}. In~\cite{go_exploring-shallow-depth_2024}, the authors present a circuit with a ``non-local hypercubic structure'' which can be implemented with a power-law architecture with sequential loops of length $1,2,4, ... , 2^{\left\lfloor \log_2(m) \right\rfloor}$ where $m$ is the number of modes. This circuit is shown to exhibit properties that are conducive to hardness, which include the ability to mimic a Haar-random unitary up to the first moment and the so-called \emph{hiding} property exploited in the usual hardness proofs for boson sampling. In~\cite{go_computational_2024}, a circuit with two repetitions in sequence of this power-law architecture was shown to be hard. Our work bridges some of the gap between known classical simulation methods for one loop \cite{Lubasch_2018} and two loops \cite{deshpande_quantum_2021} and the hardness results from Go\etal by extending the range of simulable loop-based architectures.

\section{Background}
\label{sec:background}

We start by introducing some background to further motivate the study of loop-based time-bin interferometers and to introduce our new methods to simulate and characterize them. We give a short overview of SPBS in \Cref{sec:short-intro-bs}, we motivate and define loop-based time-bin interferometers in \Cref{sec:loops}, and we present the background for the lattice path formalism in \Cref{sec:LP-background}.

\subsection{Single-photon Boson Sampling}
\label{sec:short-intro-bs}

Single-photon Boson Sampling (SPBS) is a quantum optical experiment defined by Aaronson \& Arkhipov in~\cite{aaronson_arkhipov_2010} where we send a collection of $n$ indistinguishable single photons into a linear interferometer on $m$ modes (ports) and measure where they exit.
The quantum state of light in such an experiment is described by a vector in \emph{Fock space} $\Fs nm$. This is spanned by the number basis made of vectors $\ket{n_0, \dots, n_{m-1}}$ where each $n_a$ is the number of photons in mode $a$. The dimension of a Fock space of $n$ indistinguishable photons distributed in $m$ modes is:
\begin{equation}
  \label{eq:dim-of-Fock-space}
  \dim \Fs nm = \binom{m - 1 + n}{n}
\end{equation}

The SPBS problem is defined as follows: generate Fock basis states measured after sending an input $\fket{\vec n}$, usually with $n_a \le 1$ for each mode, into a linear interferometer. The interferometer is a mode-to-mode transformation described by a unitary matrix $U : \C^m \to \C^m$. There are many ways to construct these, most commonly decomposing them as a system of single-mode phase-shifters and two-mode beamsplitters~\cite{Reck_1994,clements_optimal_2016}. We represent a beamsplitter $B(\theta) : \C^2 \to \C^2$ as the following rotation matrix~\cite{Campos_1989}:
\begin{equation}
  \label{eq:BS-on-modes}
  B(\theta) =
  \begin{pmatrix}
    \cos \theta & \sin \theta \\
    -\sin \theta & \cos \theta
  \end{pmatrix}.
\end{equation}
In a system of more than two modes, we specify that a beamsplitter acts on modes $a,b$ as $B_{a,b}(\theta)$. Note that we omit any reflection or transmission phases, only representing the coupling angle $\theta$. Furthermore, we will often care only about a coupling being possible but not the concrete value of~$\theta$, in which case we omit it from the notation.

The input-output transition amplitudes of the interferometer $U$ acting on the Fock space are permanents of certain matrices derived from~$U$~\cite{aaronson_arkhipov_2010}.
The computation of a permanent of a general matrix belongs to the complexity class \#P-hard~\cite{valiant_complexity_1979,aaronson_arkhipov_2010}. This means that classically simulating Boson Sampling is likewise hard, and that is the reason why Boson Sampling devices, which can generate these samples natively, are candidates for quantum advantage.

\subsection{Loop-based time-bin interferometers}
\label{sec:loops}

One specific type of architecture used for Boson Sampling, proposed by Motes\etal~\cite{motes_scalable_2014} and experimentally implemented for example by Madsen\etal~\cite{madsen_quantum_2022} and Carosini\etal~\cite{carosini_2024}, is based on loops. Here, some output port of a device is connected back to an input port of the same device using an optical loop; see an example in \Cref{fig:big_picture:circuit}.
The modes of a loop-based interferometer, called \emph{time-bins}, correspond to regular time intervals when photons arrive or leave~\cite{Lubasch_2018}. This is useful because it allows us to build complex interferometers without using many physical components~\cite{PhysRevA.109.042613}, and the combination of being practically realizable, but at the same time allowing large interferometers, makes loop-based architectures candidates for quantum advantage demonstrations.

\subsubsection{Single loop}
\label{sec:single-loop}

We define an optical loop as a device consisting of a delay line and a single beamsplitter which may change its transmissivity in time. The loop connects one output port of the beamsplitter to an input; the remaining two ports are the input and output of the device.

We understand this as a circuit consisting of a staircase-shaped sequence of beamsplitters, as in \Cref{fig:big_picture:circuit}. The length of the loop is given in time-bin units, so if the loop has length $\ell$, denoted $L_\ell$, then it couples mode $a$ to $a + \ell$. Note that for every loop $L_\ell$, including when these are composed in \cref{sec:sequential-composites-loops}, we employ the convention of Motes\etal\cite{motes_scalable_2014} where photons from the initial $\ell$ modes deterministically enter the loop, and we omit the latter's original vacuum modes. This gives us the circuit in \cref{fig:big_picture:circuit}, and the following definition:

\begin{definition}[optical loop]
  Let $\ell \ge 1$. Then define a loop of length $\ell$ as the following composite of time-bin beamsplitters:
  \[ L_\ell = B_{0,\ell} \then B_{1, 1+\ell} \then \cdots \then B_{a, a+\ell} \then \cdots, \]
  where the composition is written from left to right ($A \then B \deq B A$ for maps $A,B$), and we omit the coupling angles. In practice, the total number of modes $m$ is finite and the above composite stops at $B_{m-1-\ell, m-1}$.
\end{definition}

The simplest example is the loop $L_1 = B_{0,1} \then B_{1,2} \then \cdots$, which looks like a staircase of nearest-neighbour beamsplitters when drawn as a circuit (see the left part of~\Cref{fig:big_picture:circuit}). This loop is important throughout the paper, because it establishes a shared pool of photons to be distributed between modes; we come back to this idea in \Cref{sec:LP-background,sec:a-new-formalism}.

\subsubsection{Sequential composites of loops}
\label{sec:sequential-composites-loops}

A single photonic loop $L_\ell$ is a fairly simple system, and in particular it can be efficiently simulated~\cite{Lubasch_2018}. This means that in order to achieve quantum advantage, we need to look elsewhere, and the natural idea for a loop-based setup is to compose multiple loops together.

There are different ways that multiple loops could be joined together, and in our work we focus on \emph{sequential} composites of loops $L_{\ell_1} \then L_{\ell_2} \then \dots \then L_{\ell_\Lambda}$, where $\Lambda$ is the number of loops, and the lengths $\ell_j$ may be different. An example of such architecture is $L_1 \then L_3$ shown in~\Cref{fig:big_picture:circuit}.
Note that if all lengths $\ell_j = 1$, then the individual loops are similar to slices of the Reck scheme~\cite{Reck_1994}, a well-known universal decompositions of optical unitaries to a system of beamsplitters and phase shifters.

\begin{definition}[power-law composites]
  \label{def:power-law}
  An interesting family of sequential composites of loops are those with lengths $\vec\ell = (1, \ell, \ell^2, \dots, \ell^{\Lambda - 1})$, which we call \emph{power-law} composites. We call $\ell$ the \emph{base loop length}.
\end{definition}

Power-law systems are important because of their simple form that still allows long-range entanglement in shallow circuits ($\Lambda$ small). The property that each length is an exponent of $\ell$ generates connectivity between modes that can be understood as the volume of a $\Lambda$-dimensional hypercube with sides of length $\ell$, where modes are vertices at integer coordinates and edges correspond to beamsplitters~\cite{deshpande_quantum_2021,go_exploring-shallow-depth_2024}.
Such architectures have been recently used in an advantage demonstration~\cite{madsen_quantum_2022}, and in Boson Sampling hardness results~\cite{go_computational_2024,go_exploring-shallow-depth_2024}.

\subsection{Lattice path formalism for bosonic Fock spaces}
\label{sec:LP-background}

The circuit topology of loop-based systems restricts where photons can go, and this can be captured by a formalism where the state space corresponds to a region in an integer lattice. Possible output number states correspond to lattice paths within this region (see for example \Cref{fig:big_picture:lattice-paths}), and thus the lattice-path representation can be used to count the state space using combinatorial methods.

In this section, we describe the original lattice-path formalism developed by Br\'adler \& Wallner~\cite{bradler_wallner_2021} that can be used to count and enumerate the state space of a system of loops that all have length one. We expose their results using a different notation and terminology that will suit our needs later; and we then expand their formalism to general loop-based systems in \Cref{sec:a-new-formalism}.

\subsubsection{Lattice paths}
\label{sec:lattice-paths}

Informally, a \emph{lattice} is a collection of discrete points in space that repeat in regular intervals. We will be interested in lattices contained within finite rectangles in the $\N^2$ quadrant of $\Z^2$, and in the paths allowed within these rectangles.

\begin{definition}[lattice path]
  \label{def:lattice-paths}
  Let $W, H \in \N$. A~\emph{lattice path} in the rectangle of dimensions $(W,H)$ is a path in $\N^2$ that starts at the origin $(0,0)$, ends at point $(W,H)$, and only takes steps right $(+1, 0)$ or up $(0, +1)$.
  The set of all $(W,H)$-lattice paths is denoted $\LL(W,H)$.
\end{definition}

\begin{figure}[t]
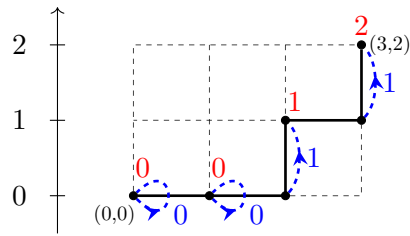

    \ctikzfig{LP_RRURU_lambda-and-n}
  \caption{Example of a lattice path from $\LL(3,2)$ in both the height and step representations.
    \captionitem{\red{red}}~Height representation $\lambda = (0,0,1,2)$ where the coordinates indicate the height of the path above the horizontal axis. This is the maximum $y$ coordinate of a point passed by the path at each $x$ coordinate.
    \captionitem{\blue{blue}}~Step representation $\vec n^\lambda = \Delta\lambda = (0,0,1,1)$ with coordinates $\vec n^\lambda_a = \lambda_a - \lambda_{a-1}$ that represent the size of the vertical steps the path takes, shown by blue dashed arrows.
  }
  \label{fig:LP-example}
\end{figure}

To describe individual paths in $\LL(W,H)$, we define two equivalent representations, each with a different domain of usefulness. We show an example path and both representations in \Cref{fig:LP-example}.

\begin{definition}[height representation]
  In \emph{height representation}, a path is written as a nondecreasing sequence of integers $\lambda = (\lambda_0, \dots, \lambda_{W})$. Each component $0 \le \lambda_a \le H$ is defined as
  \[
    \lambda_a = \max\{ y_i : (a, y_i) \in \N^2 \ \text{is passed by}\ \lambda \},
  \]
  where $(a, y)$ are points with $x=a$ passed by the path; see example in \Cref{fig:LP-example} in \red{red}. Note that since a path must end at point $(W,H)$, the final component is always $\lambda_W = H$.
  Height representation is denoted by lowercase Greek letters $\lambda,\kappa,\mu,\dots$
\end{definition}

\begin{definition}[step representation]
  \label{def:lp-step-rep}
  A lattice path in $\LL(W,H)$ represented by $\lambda$ in height representation has \emph{step representation} $\vec n^\lambda = (n^\lambda_0, \dots, n^\lambda_W)$ where the component $n^\lambda_a$ is the number of steps up that $\lambda$ takes at $x=a$. Equivalently, it is the discrete difference of the height representation, written $\vec n^\lambda = \Delta \lambda$ with components
  \[ n^\lambda_a = \lambda_a - \lambda_{a-1}, \]
  see example in \Cref{fig:LP-example} in \blue{blue}. By convention, $\lambda_{-1} = 0$.
  The step representation is suggestively denoted by $\vec n$ to indicate the connection to bosonic Fock states, and identified by a superscript whenever required.
\end{definition}

Both representations are clearly equivalent, and to go from step representation $\vec n$ to height representation $\lambda^{\vec n}$ (note the superscript to identify the path), we take the summation:
\begin{equation}
  \label{eq:n-to-lambda}
  (\lambda^{\vec n})_a = \sum_{b=0}^a n_b.
\end{equation}

\subsubsection{Fock states represented by lattice paths}
\label{sec:fock-stat-repr}

In \Cref{def:lp-step-rep}, we define the step representation of a lattice path, suggestively denoting it $\vec n$ highlighting the connection to Fock (number) states. We show two example basis states in the right path of \Cref{fig:big_picture:lattice-paths}. We now make this connection formal:

\begin{lemma}[bosonic states are lattice paths]
  \label{lem:Fock-space-gen-by-LP}
  The bosonic Fock space $\Fs nm$ of $n$ indistinguishable photons in $m$ modes is spanned by lattice paths in a rectangle of size $(m-1, n)$:
  \begin{equation}
    \label{eq:Fock-space-lattice-paths}
    \Fs nm = \gen{ \LL(m-1, n) }_\C,
  \end{equation}
  where we use the notation $\gen{S}_\C$ (or just $\gen S$) for a $\C$-vector space spanned by elements of a set $S$ seen as an orthogonal basis.
\end{lemma}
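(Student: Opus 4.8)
The plan is to exhibit an explicit bijection between the number basis of $\Fs nm$ and the set $\LL(m-1,n)$, carried by the step representation of \Cref{def:lp-step-rep}, and then invoke that a correspondence of bases induces the claimed identification of spans. Recall that $\Fs nm$ carries the number basis consisting of the states $\ket{n_0, \dots, n_{m-1}}$ indexed by tuples $(n_0, \dots, n_{m-1})$ of non-negative integers with $\sum_{a=0}^{m-1} n_a = n$; so it suffices to identify these index tuples with lattice paths in $\LL(m-1,n)$.

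First I would observe that the step representation $\vec n^\lambda = (n^\lambda_0, \dots, n^\lambda_{m-1})$ of a path $\lambda \in \LL(m-1,n)$ is precisely such a tuple. It has exactly $m$ components since $W = m-1$ and the indices run from $0$ to $W$; each component $n^\lambda_a = \lambda_a - \lambda_{a-1}$ is a non-negative integer because the height sequence $\lambda$ is nondecreasing; and the components sum to $n$, since the sum telescopes to $\lambda_{m-1} - \lambda_{-1} = n - 0$, using that every path ends at $(m-1, n)$ together with the convention $\lambda_{-1} = 0$.

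Next I would check that $\lambda \mapsto \vec n^\lambda$ is a bijection onto the set of all such tuples. Both surjectivity and injectivity follow from the inverse assignment $\vec n \mapsto \lambda^{\vec n}$ given by the partial-sum formula \eqref{eq:n-to-lambda}: for any tuple of non-negative integers summing to $n$, the partial sums form a nondecreasing sequence ending at $n$, hence a valid element of $\LL(m-1,n)$, while the difference and summation operations are mutually inverse. This yields a one-to-one correspondence $\ket{n_0, \dots, n_{m-1}} \leftrightarrow \lambda^{\vec n}$ between the number basis and $\LL(m-1,n)$, which is exactly the content of \eqref{eq:Fock-space-lattice-paths}.

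I do not expect a genuine obstacle here, as the statement essentially repackages the definitions of the height and step representations; the only point demanding care is matching the two sets of constraints, namely that a lattice path in the $(m-1,n)$ rectangle produces exactly $m$ up-step counts summing to $n$, in agreement with the mode count and the total photon number. As a consistency check one can count both sides: the number of right/up step sequences is $\binom{(m-1)+n}{n}$, which agrees with $\dim \Fs nm$ from \eqref{eq:dim-of-Fock-space}.
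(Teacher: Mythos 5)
Your proposal is correct, and it inverts the emphasis of the paper's own proof. The paper argues primarily by cardinality: it counts $|\LL(m-1,n)| = \binom{m-1+n}{n}$ as the number of orderings of $n$ up-steps and $m-1$ right-steps, matches this against $\dim \Fs nm$ from \eqref{eq:dim-of-Fock-space}, and only then adds a one-sentence remark that the step representation distributes the $n$ photons among the $m$ modes. You instead make the basis correspondence the centerpiece — verifying that $\lambda \mapsto \vec n^\lambda$ lands in the set of non-negative $m$-tuples summing to $n$ (via the telescoping sum and the nondecreasing property) and that the partial-sum map \eqref{eq:n-to-lambda} is its inverse — and relegate the count to a consistency check. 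Your version is the more complete argument: a dimension count alone identifies the spaces only up to abstract isomorphism, whereas the lemma is used throughout the paper as a concrete identification of basis states with paths, which is exactly what your explicit bijection supplies. The paper's version is shorter and leans on the reader accepting the closing remark as supplying that identification. No gaps in either.
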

\begin{proof}
  We show that the dimension is correct. A lattice path in $\LL(W,H)$ must start at the origin and take $H$ steps up and $W$ steps right to reach $(W,H)$. It takes a total of $W+H$ steps, and any order of the steps is allowed, so the number of paths is the combination number $\binom{W+H}H$. Setting $W=m-1$ and $H=n$:
  \[
    |\LL(m-1, n)| = \binom{m-1 + n}{n}
    \overset{\eqref{eq:dim-of-Fock-space}}{=} \dim \Fs nm.
  \]
  Finally, note that all lattice paths end at height~$n$, so each path in step representation, which contains the sizes of the steps up, corresponds to distributing the $n$ photons into $m$ modes.
  We conclude that lattice paths in $\LL(m-1,n)$ indeed correspond to basis states of $\Fs nm$.
\end{proof}

So far, we have a convenient way to visually represent number states that implements the constraint that there are in total $n$ photons distributed between $m$ modes. In the next section, we show that lattice paths can help us implement more constraints on the state space.

\subsubsection{Loops give downsets}
\label{sec:loops-give-downsets}

The circuit topology of a loop-based system allows incoming photons to end up only in some modes. We can capture this using a region in the lattice that represents the subspace of the valid number states. As shown in \Cref{fig:example-max-photons-to-LP}, the important new information encoded by this region are the constraints on the maximal number of photons in a given mode, which form a nondecreasing sequence that defines a lattice path, and all paths below it are allowed. This is enough to describe the state space because photons are indistinguishable and we only know how many are in each mode.

We show how this works for a system with length \mbox{$\ell_j = 1$} for each loop, which is one of the main results of~\cite{bradler_wallner_2021}. Our work in \Cref{sec:a-new-formalism} extends this to more general systems.

We take the simplest example using a system of a single loop $X = L_1$ acting on an input state $\fket{1^5}$ on 5 modes in \Cref{fig:example-max-photons-to-LP}.
At each of the output wires of the circuit, we write down the set of possible numbers of photons measured there: these have the form $\iicc{0, \mu_a} = \{0, 1, \dots, \mu_a\}$, where $\mu_a$ denotes the maximum number of photons that could exit in mode~$a$. Intuitively, $\mu_a$ can be found by following the possible paths of the photons through the circuit $X$. More formally, if the input state is $\fket{\vec n}$, we can express the maximum number of photons in output mode~$a$ as
\[
  \mu_a = \sum_b
  n_b,
\]
where the summation runs over the input modes~$b$ from which photons can reach output mode $a$.

\begin{figure}[htp]
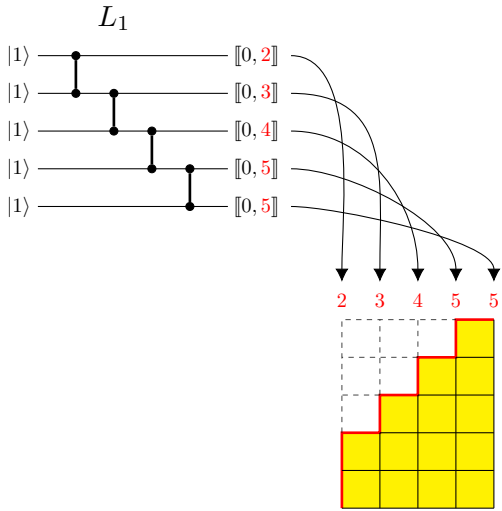

  \centering
  \ctikzfig{L1_1s_maxLP}
  \caption{An example showing how we obtain a maximal lattice path $\mu$ from a given circuit and input state, in this case $L_1$ and $\fket{1^5}$. On the output modes are sets of possible photon numbers measured there, and we show how these define the lattice path $\mu$ (red). The possible measurement outcomes are all the lattice paths lower than $\mu$ -- these are the paths constrained to the yellow-shaded region under $\mu$.
  }
  \label{fig:example-max-photons-to-LP}
\end{figure}

Due to the staircase structure of $L_1$, the sequence $\mu = (\mu_0, \dots, \mu_{m-1})$ is nondecreasing and terminates in $n$, the total number of photons. This means it corresponds to a height representation of a lattice path in $\LL(m-1, n)$, shown in \red{red} in the example of \Cref{fig:example-max-photons-to-LP}. The set of possible output basis states is exactly the set of paths lower than $\mu$, i.e. those constrained to the yellow-shaded region in the figure. We refer to $\mu$ as the \emph{maximal lattice path}, and we call the set of paths below or equal to it the \emph{downset} $\downset\mu$. This exactly describes the state space which has the following form:
\begin{definition}[cumulative space]
  \label{def:cumulative-space}
  Let $\mathcal{K}$ be a subspace of $\Fs nm$. If there exists a maximal lattice path $\mu \in \LL(m-1, n)$, corresponding to a basis state $\ket{\vec n^\mu}$ in $\Fs nm$, such that $\mathcal{K} = \gen{\downset\mu}_\C$, then we call $\mathcal{K}$ a \emph{cumulative space}.
\end{definition}

\subsubsection{Interpretation}

The reason to make the above definition and call it a cumulative space is that the loop $L_1$ acts as an \emph{accumulator} of photons. It collects incoming photons, and may release them into any subsequent time-bin. Notice that the maximal path $\mu$ corresponds to photons leaving the system as soon as possible, i.e. never entering the loop.\footnote{Except of course the first photon, which is preloaded into the loop by convention (see \cref{sec:single-loop}). This photon also immediately exits.} The paths below $\mu$ correspond to some photons entering the loop and leaving later than their starting time-bin, with the extreme being the bottom path $\lambda^\bot = (0,\dots,0,n)$ where all photons stay in the loop until the last time-bin mode.

Finally, we shortly note that in this context of lattice paths, \emph{lower than} has a precise meaning in terms of a partial order on the set $\LL(m-1, n)$, called the \emph{Young order}. Two paths $\lambda$ and $\mu$ satisfy $\lambda \le \mu$  iff $\lambda_a \le \mu_a$ for each component $a$. The heights of $\lambda$ are lower in the lattice than those of $\mu$, so indeed $\lambda$ belongs to the region below $\mu$ in \Cref{fig:example-max-photons-to-LP}, which is the downset
\[ \downset\mu = \{ \lambda \in \LL(m-1, n) : \lambda \le \mu \}. \]
In the main body, we give a mostly non-order-theoretic presentation, and instead we only elaborate on the formal details in \Cref{app:lattice-path-formalism-details}.

\section{Loop-based structure allows simulation of certain interferometers}
\label{sec:loop-structure}

We present a new wavefunction evolution algorithm to classically simulate loop-based interferometers exploiting their staircase structure by performing partial measurements when modes become unused.

\subsection{Progressive simulation algorithm}
\label{sec:simulation-algorithm}

Observe in \Cref{fig:big_picture:circuit} that due to the staircase shape of the circuits corresponding to loops, each mode has a time in the circuit after which it is never affected by any subsequent operation. Our algorithm is a hybrid between strong and weak simulation: it evolves the wavefunction through small components of the circuit, and after each of them, it simulates the measurement of a single mode that is never needed again, collapsing the wavefunction to a smaller subspace in the process. Due to the collapse, the memory required to store the state vector stays more manageable. Furthermore, we show that the strategy of measuring single modes and collapsing the state means that we only strongly simulate circuits with a low effective number of modes. Therefore our algorithm can tackle even large loop-based systems that could not be simulated otherwise.

\subsubsection{Decomposition of the circuit}
\label{sec:progressive-decomposition}

\begin{figure*}[ht]
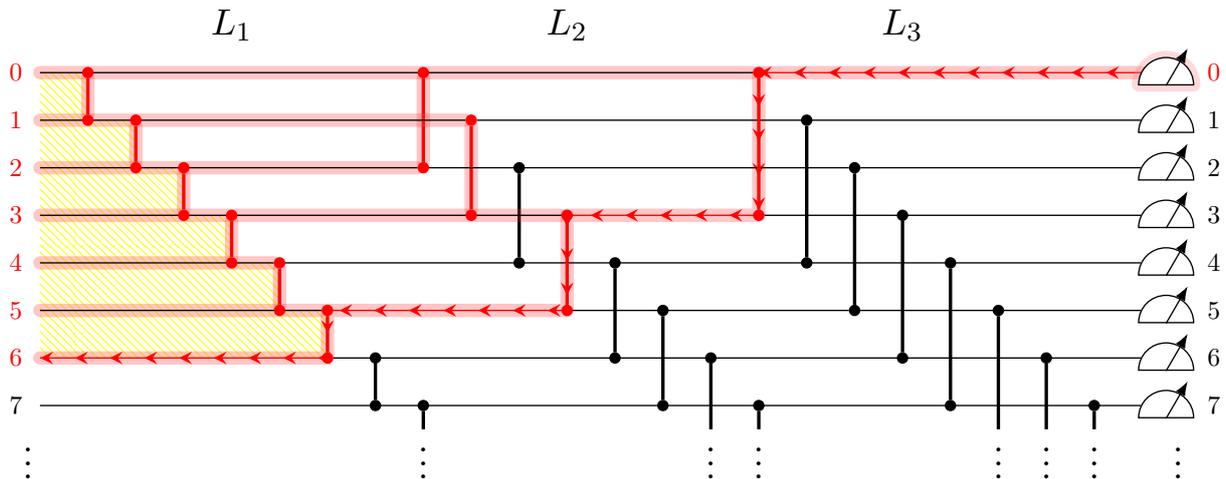

  \fwtikzfig{L1-2-3_preimage-of-0}
  \caption{A view of the input modes and beamsplitters influencing output mode 0 in a system $L_1 \then L_2 \then L_3$. We highlight in red the wires along which photons may reach output mode 0 on the right. The path reaching the maximum input mode is further emphasized with arrows ($\tikzredarrowsback$). The relevant beamsplitters are drawn in red as well -- these are the beamsplitters that form the component $P_0$. Observe that the first loop $L_1$ is a staircase of nearest-neighbour beamsplitters, so if a mode $a$ is relevant, so are the modes $<a$; this fact is emphasized in the yellow area.}
  \label{fig:preimage-of-0}
\end{figure*}

We describe the idea using the simple system $L_1 \then L_3$ shown in~\Cref{fig:big_picture:circuit}.
We push the measurement of mode 0, and all beamsplitters that influence it, to the left. At the same time, we push all subsequent beam splitters to the right of the measurement, obtaining \Cref{fig:big_picture:progressive}. Two beamsplitters can be pushed past each other whenever they do not act on the same modes. Note that moving measurements to the left, we are identifying their causal cones, as shown in the more complex example of \Cref{fig:preimage-of-0} where we highlight the paths in the circuit through which photons could arrive to output mode 0.

In \Cref{fig:preimage-of-0}, notice that to find all the modes and beamsplitters relevant to output $0$, we just need to find the farthest reachable mode.  We follow the path with arrows ($\tikzredarrowsback$) from output $0$ backward, and use each beamsplitter to jump to a later time-bin mode until we reach the first loop of length 1. Since this loop corresponds to a staircase of nearest-neighbour beamsplitters, all the time-bin modes identified by lower numbers are also connected to output $0$ (shown by yellow fill).

We iterate the procedure of moving measurements left for all output modes in the order $0,1,\dots$, transforming the example circuit in \Cref{fig:big_picture:circuit} to the form of \Cref{fig:big_picture:progressive} which we call the \emph{progressive decomposition} to components $(P_a)_{a=0}^{m-1}$. Each component $P_a$ contains only the beamsplitters inside the causal cone of mode~$a$ that have not appeared in the previous components $P_0, \dots, P_{a-1}$. This means that even though the set of beamsplitters relevant to an output mode $a$ includes some of the beamsplitters relevant for mode $a-1$, the component $P_{a}$ contains only the ``new'' beamsplitters.

After rewriting the circuit as above, observe the effective number of modes in all the components~$P_a$ is bounded by a number which only depends on the loop lengths. In \Cref{fig:big_picture:progressive}, this is the maximum height of a box $P_a$.

\begin{theorem}[relevant modes]
  \label{thm:relevant-modes}
In a progressive decomposition of a loop-based system $X$ made of loops of lengths $\vec\ell = (\ell_1, \dots, \ell_\Lambda)$, each component $P_a$ acts non-trivially on at most the number of \emph{relevant} modes
  \begin{equation}
    \label{eq:relevant-modes}
    R = 1 + \sum_{i = 1}^\Lambda \ell_i.
  \end{equation}
\end{theorem}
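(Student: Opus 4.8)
The plan is to pin down the set of modes on which $P_a$ can act non-trivially and to show it is contained in a window $\{a, a+1, \dots, a + \sum_{i} \ell_i\}$ of exactly $R$ modes. The two ends of this window come from two independent features of the construction: the upper end from tracing the causal cone of output mode $a$ backward through the loops, and the lower end from the order in which modes are measured in the progressive decomposition. Establishing these two bounds separately and intersecting them is the whole proof.

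For the upper bound I would first isolate a single-loop reachability lemma: in a loop $L_\ell = B_{0,\ell} \then B_{1,1+\ell} \then \cdots$, an output mode $c$ is influenced only by input modes of index at most $c + \ell$. This reads straight off the staircase definition, since the last beamsplitter touching mode $c$ is $B_{c,c+\ell}$ (its position $c$ in the sequence is later than that of $B_{c-\ell,c}$), whose other leg is the fresh input $c+\ell$, while unwinding the earlier beamsplitters only reaches modes $c, c-\ell, c-2\ell, \dots$, all $\le c$. Composing $L_{\ell_1} \then \cdots \then L_{\ell_\Lambda}$ and applying this lemma backward, loop by loop, the largest reachable input index increases by $\ell_\Lambda$, then $\ell_{\Lambda-1}$, and so on, so that the causal cone of output $a$ involves only modes of index at most $a + \sum_{i=1}^\Lambda \ell_i$. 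I would make this precise by an induction on $\Lambda$ tracking the quantity ``largest input index reachable from a given output index.''

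For the lower bound I would use that the progressive decomposition measures modes in the order $0, 1, 2, \dots$, with mode $b$ measured at the end of component $P_b$. Any beamsplitter touching a mode $b < a$ lies trivially in the causal cone of output $b$, and since it must act before mode $b$ is measured it is captured by one of the earlier components $P_0 \cup \cdots \cup P_b \subseteq P_0 \cup \cdots \cup P_{a-1}$. Hence none of the \emph{new} beamsplitters assigned to $P_a$ can touch a mode of index $< a$. Combining the two bounds, every beamsplitter in $P_a$ touches only modes in $\{a, \dots, a + \sum_i \ell_i\}$, a set of size $1 + \sum_i \ell_i = R$, which is the claim; note this also explains why the bound is uniform in $a$.

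I expect the single-loop reachability lemma to be the main obstacle: one must argue carefully, from the left-to-right ordering of beamsplitters both within a loop and across the sequence, that no ``back-and-forth'' routing through a loop pushes the reachable index past $c+\ell$, and that the per-loop increments genuinely accumulate to $\sum_i \ell_i$ rather than interfering. By contrast, the lower-bound argument is essentially bookkeeping about the measurement schedule and should be routine once the decomposition into the $P_a$ is stated precisely.
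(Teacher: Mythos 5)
Your proposal is correct and follows essentially the same route as the paper's proof in Appendix~A: bound the causal cone of output mode $a$ from above by tracing backward through the loops, gaining at most $+\ell_j$ per loop to reach the maximal relevant mode $a + \sum_j \ell_j$, and bound it from below by noting that any beamsplitter touching a mode $b<a$ is already captured by an earlier component of the progressive decomposition. Your single-loop reachability lemma (reachable inputs of output $c$ under $L_\ell$ are $\{c+\ell, c, c-\ell, \dots\}$) is a slightly more explicit version of the step the paper states informally via \Cref{fig:preimage-of-0}.
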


This follows from the intuition above; we also provide a formal proof in \Cref{app:loop_theorems}.

\subsubsection{Simulation}

\begin{myalgorithm}{%
    Pseudocode of the progressive simulation algorithm. This algorithm is independent of how the state is stored (e.g. as a dictionary).
  }{\label{fig:pseudocode-prog-sim}}
  \SetAlgoRefName{PS}
  \KwIn{Number of modes $m$, vector of loop lengths $\vec \ell = (\ell_1, \dots, \ell_\Lambda)$, vector of beamsplitter angles $(\theta_i)_{i}$, input state $\vec n = (n_0, \dots, n_{m-1})$.}
  \KwOut{A single sample $\vec n' = (n'_0, \dots, n'_{m-1})$.}

  $(P_a)_{a=0}^{m-1} \gets$ progressive decomposition of $L_1 \then \cdots \then L_\Lambda$ obtained by pushing measurements to the left and reordering beamsplitters (together with their angles~$(\theta_i)_i$) accordingly \;

  $\ket{\psi} \gets \ket{\vec n}$\;

  \For{$a = 0, \dots, m-1$}{
    $\ket\psi \gets$ evolve $\ket{\psi}$ through component~$P_a$ as a subroutine \; \label{alg:progressive-simulation:ln:evolution}

    $p(n'_a) \gets$ compute the marginal probability of measuring $n'_a$ photons in output mode $a$, for each possible $n'_a$\;

    $n'_a \gets$ random sample from the distribution $p$\;

    $\ket\psi \gets$ collapse $\ket\psi$ to subspace compatible with measurement $n'_a$\;
  }

  \KwRet{$\vec n' = (n'_0, \dots, n'_{m-1})$}\;

  \caption{Progressive simulation}
  \label{alg:progressive-simulation}
\end{myalgorithm}

We can understand the progressive decomposition as a sequence of smaller circuits $P_a$, each of which only acts on at most $R$ modes, and only includes a few beamsplitters. We evolve the state vector through only these beamsplitters, and then we perform a partial measurement of the top mode $a$ of the component, collapsing the wavefunction to a subspace compatible with the measurement outcome. The collapsed wavefunction becomes the input state of the next component.
We iterate the procedure for all components, thus progressively generating the full sample $\ket{\vec n'}$.
We summarize the algorithm, labelled Algorithm~PS, as pseudocode in \Cref{fig:pseudocode-prog-sim}.

Observe in \cref{alg:progressive-simulation:ln:evolution} in the pseudocode that we call the strong simulation as a subroutine. This gives us the freedom to choose any wavefunction simulation algorithm, as long as it can accept a superposition state as input. This is required because the projected state after measuring a mode~$a$, which is the input to $P_{a+1}$, is generally a superposition of several number states.

\subsection{Evaluating the algorithm}
\label{sec:evaluating-algorithm}

To evaluate how useful the progressive algorithm is, we aim to estimate the cost of using it as a function of different parameters of the system to be simulated.

We specifically focus on the memory complexity because this gives us a lower bound on the running time for any choice of exact wavefunction simulator. This is the case because any wavefunction simulator must at least spend time to read the entire stored state. Focusing on memory is further justified by the fact that the current best algorithm \texttt{SLOS\_full} performs wavefunction simulation in time linearly bounded by memory~\cite{heurtel_strong_2023}.

The three main contributions to memory complexity come from the following:
\begin{enumerate}
\item The effective number of modes in each subcircuit $P_a$, given by the value $R$ from \Cref{thm:relevant-modes}.
\item The total number of photons in the system.
\item The possible positions of photons, given by the input state and the circuit topology.
\end{enumerate}

Point number 3 is the focus of the next \Cref{sec:a-new-formalism} and one of the main technical contributions of the present paper. However, already using the first two points, we can roughly estimate which architectures are easy to simulate. For example, we can clearly hold a billion coordinates in the memory of a high-end computer, so we set the dimension to  $10^9$. Assume, as an upper bound, that we need the full Fock space with dimension in eq.~\eqref{eq:dim-of-Fock-space}. We need to simulate $R$ modes, and for simplicity assume we have $R$ photons to maximize the binomial in \eqref{eq:dim-of-Fock-space}. Then R should satisfy
\[
  \binom{2 R - 1}{R} \approx 10^9,
\]
which is the case when $R \approx 16$. This implies we can easily simulate systems made of loops of lengths, for example, $\vec\ell = (1, 14)$, $(1, 5, 9)$, or $(1,2,4,8)$.

However, this method is a very coarse approximation. Not only may there be fewer or more photons in a component $P_a$ to be simulated, invalidating the assumption that $n \approx R$ above, but the circuit topology restricts where those photons can be. This means the actual state space is generally very different.
We present a more granular and robust way to estimate complexity in the following sections.

\section{A new formalism to enumerate the state space}
\label{sec:a-new-formalism}

To quantify the complexity of our progressive simulation algorithm, in \Cref{sec:evol-thro-long} we expand the formalism of Br\'adler \& Wallner~\cite{bradler_wallner_2021} from loops of length one, presented in \Cref{sec:LP-background}, to more general systems where loops have different lengths. Then in \Cref{sec:meas-arbitr-modes}, we show how to encode the outcomes of mid-circuit partial measurements. Together, these allow us to describe the state space throughout simulation, an important tool in computing the complexity in \Cref{sec:quantifying-complexity}.

\subsection{Evolution through long loops}
\label{sec:evol-thro-long}

\begin{figure}[tb]
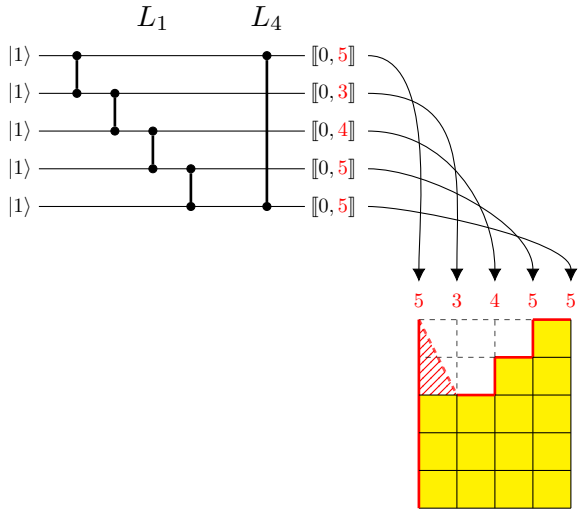

  \centering
  \ctikzfig{L1-4_defect}
  \caption{Example circuit that cannot be described by the original lattice path formalism of \protect\cite{bradler_wallner_2021}. The maximum numbers of photons form a sequence $\mu = (5,3,4,5,5)$ which is not nondecreasing, and hence not a lattice path in height representation. As a lattice path, $\mu$ would have to descend (hatched part in the picture) and this is not allowed.}
  \label{fig:lattice-path-defect}
\end{figure}

Recall from \Cref{sec:loops-give-downsets} that we describe the possible outputs of a loop-based system by a sequence $\mu = (\mu_0, \dots, \mu_{m-1})$ where each $\mu_a$ is the maximum number of photons that can be measured in output mode $a$. For a system of loops of length one, Br\'adler and Wallner~\cite{bradler_wallner_2021} show that $\mu$ defines a lattice path in height representation, and that the state space is fully characterized by the set of lattice paths lower than $\mu$. In order for $\mu$ to be a valid lattice path, it must be a nondecreasing sequence.

This breaks when we allow a loop of longer length. See for example in \Cref{fig:lattice-path-defect} that the loop $L_4$ allows the exchange of photons between the top and bottom mode, leading to a maximal ``path'' $\mu = (5,3,4,5,5)$. As $\mu$ is not nondecreasing, it is not in fact a valid lattice path. We show that this issue can be addressed by introducing mode permutations into this formalism.

\subsubsection{Lattice paths with mode permutations}

% placing the figure here so that it goes to the page where section 4.1.2 is
\begin{figure*}[tb]
  \centering
  \ctikzfig{non-nn-bs}
  \caption{An example of the PCS evolution rule from \Cref{thm:PCS-BS-evolution}, following the same steps as the accompanying text. We start with a PCS $\gen{\downset\mu}^\sigma$ obtained by the circuit and input from \Cref{fig:lattice-path-defect}. On the left, we display the maximum path $\mu$ and the permutation $\sigma$ acting on the PCS. Next, we apply the permutation $\sigma$ to the top path to obtain the vector $v$ describing the maximum photon numbers in real (unpermuted) modes, and we highlight that this is not a lattice path. Then we apply a beamsplitter between modes $0$ and $2$, obtaining the new vector of maximum photon numbers $w$, which is again not a lattice path. Finally, we sort the modes using a permutation $\tau^{-1}$ to obtain a new maximum lattice path $\kappa$, landing in a new PCS $\gen{\downset\kappa}^\tau$. Note that by construction the permutation acts to permute modes from PCS to the physical space, and hence sorting the modes from physical to PCS in the last step is done by $\tau^{-1}$.}
  \label{fig:non-nn-bs}
\end{figure*}

To encode a loop of length $\ell_j > 1$, we introduce mode permutations in the lattice path representation so that they appear in an order of nondecreasing maximum photon number. Recall that in \Cref{sec:loops-give-downsets}, the nondecreasing property of the maximal path $\mu$ arises due to a loop $L_1$ accumulating incoming photons, and releasing them in any subsequent time-bin modes. Individual lattice paths in $\downset\mu$, via their step representation, capture ways to distribute the fixed number of photons between modes. However, the order of modes in the lattice path representation has no physical meaning, serving only as labels that can be reordered.

In the following, we show that even with modes permuted, we can still fully describe the state space by a downset $\downset\mu$. To do this, we generalize the notion of cumulative space from \Cref{def:cumulative-space}:

\begin{definition}[PCS]
  \label{def:PCS}
  A subspace $\mathcal{H} \subseteq \Fs nm$ is called a \emph{permuted cumulative space (PCS)} if there exists a cumulative space $\gen{\downset\mu} \subseteq \Fs nm$ and a permutation of modes $\sigma \in S_m$ that acts as an isomorphism from $\gen{\downset\mu}$ to $\mathcal{H}$ sending
  \[
    \ket{n_0, \dots, n_{m-1}} \mapsto \neket{n_{\sigma(0)}, \dots, n_{\sigma(m-1)}}.
  \]
  We write $\mathcal{H} = \gen{\downset\mu}^\sigma$.
\end{definition}

\subsubsection{Non-nearest-neighbor beamsplitter}

To show how to use permuted spaces to reason about general loop-based systems, we focus first on describing the action of a basic building block, a single beamsplitter between arbitrary modes $a$ and $b$ that are not necessarily nearest neighbors. In \Cref{fig:non-nn-bs}, we show an example of the same steps that follow in the explanation.

We define a vector $v$ where $v_c = \mu_{\sigma(c)}$~is the maximum possible number of photons in mode $c$. Due to the cumulative nature of the input state space, the maximum number of photons in either modes $a$ and $b$ after applying the beamsplitter is $\max\{v_a, v_b\}$. We write a new vector of maximum photon numbers $w_c$ with components:
\[
  w_c =
  \begin{cases}
    \max\{ v_a, v_b \} & \text{if}\ c \in \{a,b\}, \\
    v_c & \text{if}\ c \notin \{a, b\}.
  \end{cases}
\]

Since $a$ and $b$ are not necessarily nearest neighbors, and because $v$ may not be nondecreasing in the first place, it is possible that $w$ is not a nondecreasing sequence. Hence $w$ is not a valid lattice path. We sort it in nondecreasing order and thus permute the modes. We call the permuted sequence $\kappa$ which is a valid lattice path in height representation, and we call $\tau$ the inverse of the permutation we used (where the permutation maps $\kappa \mapsto w$).

We summarize the procedure as the following evolution rule:
\begin{theorem}[PCS evolution]
  \label{thm:PCS-BS-evolution}
  Suppose $\mathcal{H} = \gen{\downset\mu}^\sigma$ is a PCS with a mode permutation $\sigma$ applied. Then applying a beamsplitter between arbitrary modes $a,b$ maps $\mathcal{H}$ to a PCS $\gen{\downset{\kappa}}^\tau$ with a new mode permutation $\tau$, where
  \begin{equation}
    \label{eq:PCS-BS-evolution}
    \kappa_{\tau(c)} =
    \begin{cases}
      \max\{ \mu_{\sigma(a)}, \mu_{\sigma(b)} \} & \text{if}\ c \in \{a, b\}, \\
      \mu_{\sigma(c)} & \text{if}\ c \notin \{a, b\};
    \end{cases}
  \end{equation}
  and where $\tau$ assures that $\kappa$ is non-decreasing.
\end{theorem}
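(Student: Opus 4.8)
The plan is to strip away the bookkeeping permutations and reduce everything to a statement about applying a single beamsplitter to an honest cumulative space $\gen{\downset\mu}$, reintroducing $\sigma$ and $\tau$ only at the end. Since $\sigma$ merely relabels modes, applying the physical beamsplitter on modes $a,b$ to $\mathcal{H} = \gen{\downset\mu}^\sigma$ is, after conjugating by $\sigma$, the same as applying a beamsplitter between the cumulative modes $i \deq \sigma(a)$ and $j \deq \sigma(b)$ to $\gen{\downset\mu}$; I would prove the rule there and transport it back through the isomorphism of \Cref{def:PCS}. Throughout I use the characterization of \Cref{def:cumulative-space} coming from the height/step correspondence \eqref{eq:n-to-lambda}: a basis state $\fket{\vec n}$ lies in $\gen{\downset\mu}$ iff all of its prefix sums are bounded, $\sum_{b=0}^{a} n_b \le \mu_a$ for every $a$. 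In particular the largest photon number attainable in a single mode $a$ equals $\mu_a$, realized by the path that stays flat until $a$ and then jumps; this justifies calling $v_c = \mu_{\sigma(c)}$ the maximum photon number in physical mode $c$.

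The technical core is a combined-capacity lemma: in $\gen{\downset\mu}$ the maximum total photon number that modes $i$ and $j$ can hold simultaneously is $\max\{\mu_i,\mu_j\}$, not $\mu_i+\mu_j$. Taking $i<j$ so that $\mu_i\le\mu_j$, I would prove this by an extremal path plus a matching bound. The path with $\lambda_{i-1}=0$, $\lambda_i=\mu_i$, $\lambda_{j-1}=\mu_i$, $\lambda_j=\mu_j$ is nondecreasing and satisfies $\lambda_b\le\mu_b$ (using that $\mu$ is nondecreasing), and it carries $n_i + n_j = \mu_j$. Conversely $n_i + n_j \le \sum_{b=i}^{j} n_b = \lambda_j - \lambda_{i-1} \le \mu_j$, so $\mu_j$ is optimal. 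Transporting through $\sigma$, the combined capacity of the physical modes $a,b$ is $\max\{v_a,v_b\}$. Because a beamsplitter conserves the total photon number on the two modes it couples and, for a generic coupling angle, spreads that total into every admissible split, each of $a$ and $b$ can afterwards individually carry up to $\max\{v_a,v_b\}$ photons while every other mode is untouched. This is exactly the post-beamsplitter capacity vector $w$ of the surrounding text.

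It remains to show the image is again a permuted cumulative space whose maximal path is the sorted $w$. Since $w$ need not be nondecreasing, I sort it into a genuine height representation $\kappa$ and take $\tau$ with $\kappa_{\tau(c)} = w_c$, which is precisely \eqref{eq:PCS-BS-evolution}; the candidate image is then $\gen{\downset\kappa}^\tau$. I would prove equality by two inclusions, both phrased via the prefix-sum characterization in the \emph{sorted} coordinates. For containment, every output basis state agrees with its input off $\{i,j\}$ and only reshuffles the conserved combined count; after rewriting its prefix sums in the order fixed by $\kappa$, and using that the two coupled modes now occupy positions whose capacity is the combined maximum, each sorted prefix sum is seen to stay $\le\kappa$. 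For completeness, given any target in $\downset\kappa$ I would build a preimage in $\downset\mu$ by selecting a split of the target's combined count on $\{i,j\}$ that respects the original prefix bounds — possible exactly because that count is at most $\max\{\mu_i,\mu_j\}$ — and then invoke the generic non-vanishing of the beamsplitter amplitude to conclude the target is reached.

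The main obstacle is the exactness of the last paragraph, and specifically the completeness direction: proving the beamsplitter populates the \emph{entire} downset $\downset\kappa$ rather than a proper subspace. The delicate part is purely combinatorial — one must show that the sorting permutation $\tau$ relocates the two boosted modes so that the prefix-capacity constraints of the image agree with those of $\downset\kappa$ position by position. This is exactly where a naive ``bound each mode separately'' argument fails, since the raw vector $w$ is not a valid path and its intermediate prefix sums can exceed the corresponding $w$-entry; only after sorting into $\kappa$ do the constraints line up. Once this bookkeeping is in place, the genericity of the coupling angle makes the relevant amplitudes nonzero and both inclusions close, yielding the stated PCS.
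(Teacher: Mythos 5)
Your proposal is correct and follows essentially the same route as the paper: the paper's own proof is just the informal argument preceding the theorem (define the capacity vector $v_c=\mu_{\sigma(c)}$, argue the combined capacity of the coupled modes is $\max\{v_a,v_b\}$ ``due to the cumulative nature'' of the space, then sort to obtain $\kappa$ and $\tau$), and your combined-capacity lemma $n_i+n_j\le \lambda_j-\lambda_{i-1}\le\mu_j$ with the matching extremal path is exactly the right formalization of that one-line appeal. You go further than the paper by also addressing the completeness direction (that the image fills the whole downset $\downset\kappa$ rather than a proper subset), which the paper leaves entirely implicit; that extra care is welcome, though for the theorem's intended use (bounding the reachable state space) only the containment direction, which both you and the paper establish, is essential.
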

\begin{proof}
  This follows from the reasoning above the theorem statement.
\end{proof}

It follows that to obtain the state space at the output of a loop-based system, we apply the above rule iteratively. This simplifies to the following:

\begin{corollary}
  \label{cor:final-lattice-path}
  Suppose $X$ is a loop-based system made of loops of lengths $\vec\ell = (1, \ell_2, \dots, \ell_\Lambda)$, where $\ell_1 = 1$ and other lengths are arbitrary. Then there exists a lattice path in height representation $\mu$ and a mode permutation $\sigma \in S_m$ such that the space of possible output Fock states resulting from applying $X$ to the input $\ket{\vec n}$ is a PCS $\gen{\downset\mu}^\sigma$ with
  \[ \mu_{\sigma(a)} = \sum_{b} n_b \]
  where the summation runs over input modes $b$ that are reachable from output mode $a$ (see \Cref{sec:progressive-decomposition}).
  The permutation $\sigma$ ensures $\mu$ is nondecreasing.
\end{corollary}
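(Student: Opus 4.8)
The plan is to treat the whole system $X = L_1 \then L_2 \then \dots \then L_\Lambda$ as a single long sequence of beamsplitters and to induct over them, invoking \Cref{thm:PCS-BS-evolution} as the one-step evolution rule while tracking, in parallel, the reachability sets that generate the maximum photon numbers. First I would dispatch the base case. The input $\ket{\vec n}$ spans a one-dimensional subspace, which is \emph{not} a permuted downset, so the PCS evolution rule cannot be started beamsplitter-by-beamsplitter from the bare input; this is precisely where the hypothesis $\ell_1 = 1$ is used. Because the opening block $L_1$ is a staircase of nearest-neighbour beamsplitters, the Br\'adler--Wallner result recalled in \Cref{sec:loops-give-downsets} turns $\ket{\vec n}$ into the cumulative space $\gen{\downset{\mu^{(1)}}}$ with $\mu^{(1)}_a = \sum_{i \le a} n_i$. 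This is a PCS with the identity permutation (\Cref{def:cumulative-space}), and its associated reachability sets are the nested intervals $S_a = \{0, 1, \dots, a\}$.

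For the inductive step I would feed each remaining beamsplitter $B_{a,b}$ through \Cref{thm:PCS-BS-evolution}, which guarantees that the space stays a PCS and updates the physical maximum photon numbers by $v_a, v_b \mapsto \max\{v_a, v_b\}$ in the two affected modes while leaving the others fixed. Alongside the PCS structure furnished by the theorem, I would carry the bookkeeping hypothesis that at every stage the physical maximum in mode $c$ equals the reachability sum $v_c = \sum_{i \in S_c} n_i$, where $S_c$ is the set of input modes connected to $c$ through the beamsplitters processed so far, and -- crucially -- that the family $\{S_c\}_c$ is totally ordered by inclusion. Since a beamsplitter on modes $a,b$ physically replaces $S_a$ and $S_b$ by $S_a \cup S_b$ in both modes, the task is to reconcile the theorem's $\max$ rule with this union.

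The heart of the argument, and the step I expect to be the main obstacle, is this chain invariant. It holds in the base case because the intervals $\{0,\dots,a\}$ are nested, and it is preserved by a merge: if $\{S_c\}_c$ is a chain and the two comparable sets $S_a \subseteq S_b$ are replaced by $S_a \cup S_b = S_b$, the resulting family is still a chain, since the new common value $S_b$ already belonged to the chain and therefore remains comparable to every $S_c$. Given the invariant, $\max\{v_a, v_b\} = \max\{\sum_{i\in S_a} n_i,\ \sum_{i\in S_b} n_i\} = \sum_{i\in S_a\cup S_b} n_i$, because nonnegativity of photon numbers makes the larger sum the one indexed by the larger set. Hence the $\max$ rule of \Cref{thm:PCS-BS-evolution} coincides exactly with the physical union rule, and the hypothesis $v_c = \sum_{i\in S_c} n_i$ propagates. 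The same computation certifies that the iterated PCS is neither too small nor too large: were two reachability sets ever incomparable, the true physical maximum $\sum_{i\in S_a\cup S_b} n_i$ would strictly exceed $\max\{v_a, v_b\}$ and the downset would undercount the reachable state space -- so the chain property is exactly what validates the representation.

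Finally I would collect the output of the induction. After the last beamsplitter, $S_a$ is precisely the set of input modes reachable from output mode $a$ through the entire circuit, in the causal-cone sense of \Cref{sec:progressive-decomposition}, so the physical maximum in mode $a$ equals $\sum_{i \in S_a} n_i$. Reading this off the terminal PCS $\gen{\downset\mu}^\sigma$ through the identity $v_a = \mu_{\sigma(a)}$ yields the claimed formula $\mu_{\sigma(a)} = \sum_b n_b$ summed over the reachable input modes $b$. The permutation $\sigma$ is exactly the sorting permutation that arranges the modes in order of nondecreasing reachability sum, which is what renders $\mu$ a valid nondecreasing height representation and completes the proof.
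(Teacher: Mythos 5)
Your proposal is correct and follows essentially the same route as the paper, which offers no explicit proof beyond the remark that one iterates \Cref{thm:PCS-BS-evolution} starting from the cumulative space produced by the initial $L_1$; your chain-of-reachability-sets invariant, showing that the theorem's $\max$ update reproduces the union of causal cones and hence the stated sum formula, is a worthwhile piece of bookkeeping that the paper leaves implicit. The only slip is in the base case: under the paper's preloading convention the $L_1$ staircase gives $S_a = \{0,\dots,a+1\}$, so $\mu^{(1)}_a = \sum_{i \le a+1} n_i$ (cf.\ \Cref{fig:example-max-photons-to-LP}), not $\{0,\dots,a\}$ --- though these are still nested initial segments, so the chain invariant and the rest of your argument go through unchanged.
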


\subsubsection{Discussion}

The tools above allow us to fully and efficiently describe the output state space of any system of loops where $\ell_1 = 1$, on any input basis state, and any number of modes. This includes even large systems where the state space would be too big for a computer memory, since in our formalism, we need only store two objects: $\mu$ and $\sigma$. The condition that $\ell_1 = 1$ stems from the fact that our evolution rule in \Cref{thm:PCS-BS-evolution} requires its input space to be a PCS: from an arbitrary input (not a PCS), in particular a single number basis state, we obtain a PCS by applying a loop $L_1$ (see \Cref{sec:loops-give-downsets}).
However, requiring that $\ell_1=1$ is reasonable in the context of time-bin loop-based systems, and it is always the case in the present paper.

We use the lattice path formalism to count the dimension of the state space characterized by a downset. Obtaining a general closed-form formula that counts any downset is difficult, though possible when the loop lengths are restricted (e.g. Br\'adler \& Wallner proved a closed-form formula when all lengths are $\ell_j=1$~\cite{bradler_wallner_2021}). However, in \Cref{app:counting-lp}, we provide a dynamic programming algorithm that can efficiently count lattice paths in general systems without such restriction.

The applicability of \Cref{thm:PCS-BS-evolution} goes beyond staircase-shaped circuits, as it allows us to characterize the evolution of the state space beamsplitter-by-beamsplitter, particularly in the order given by the progressive decomposition (see \Cref{sec:progressive-decomposition}). The last ingredient missing to fully describe the state space throughout all of the steps of the progressive algorithm are partial mid-circuit measurements, which we develop in the next section.

\subsection{Measuring arbitrary modes}
\label{sec:meas-arbitr-modes}

In the progressive simulation algorithm, after evolving the statevector through a component $P_a$ of the progressive decomposition, we perform a measurement of its top mode, after which it is removed from the simulation. To fully describe the state space at every step during the simulation, we need to capture this partial measurement in our lattice path formalism. Even though we always measure the top mode, due to the permutation of a PCS this may be represented by a different mode in the formalism. We thus adapt the formalism so that it  can encode the measurement of an arbitrary mode of a cumulative space.

Our key contribution in this section is showing that a state space described by lattice paths can still be described by lattice paths after a measurement is performed. Specifically, we show the measurement of $x$ photons in mode $a$ sends a maximal path $\mu$ over $m$ modes and $n$ photons to the new maximal path $\mu^{(n_a = x)}$ over $m-1$ modes and $n-x$ photons.

We follow three steps, shown in a running example in \Cref{fig:lp-measurement}: First, we encode the projection to a subspace where each basis state has $n_a=x$ (\Cref{sec:projection}). This is not a single downset anymore, but it is made of a collection of regions in the lattice, each defined by $\lambda_{a-1}$, the number of photons in modes $\iicc{0,a-1}$. Second, we delete the measured mode $a$ and the $x$ photons inside by squashing the lattice diagram (\Cref{sec:contraction}) which is vital for merging the aforementioned regions back together. The final step is the merge itself: we show that taken together, the squashed regions give a new PCS, and hence our efficient representation can encode measurements (\Cref{sec:merge}).

\begin{figure}[htp]
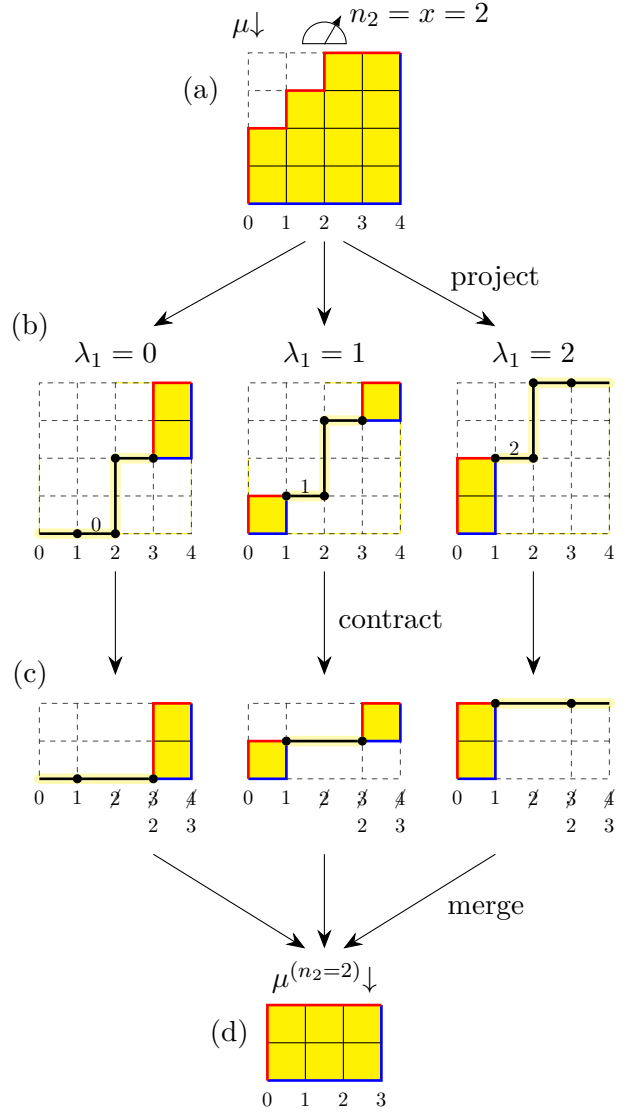

  \ctikzfig{measurement-example-projection}
  \caption{Encoding a measurement outcome in the lattice path formalism.
    \captionitem{a}~State space before a measurement which is a cumulative space $\gen{\downset\mu}$ with $\mu = (2,3,4,4,4)$. We measure $x = 2$ photons in mode $a = 2$.
    \captionitem{b}~First, we \emph{project} to a subspace compatible with the measurement, keeping only those lattice paths $\lambda$ that have $n^\lambda_a = x$. The set of these paths is, in general, not a single skew diagram in the lattice, but a union of several skew diagrams $\{ \mu^q/\nu^q \}_q$ where $q = 0, \dots, q_{\max}$. Each $\mu^q/\nu^q$ represents the subspace with $q$ photons in modes $<a$. These constraints give $\mu^q/\nu^q$ the S-shaped region centered at mode~$a$, starting at height $q$, and with step $x$.
    \captionitem{c}~Next, we delete the measured mode and the photons within. We implement this as a \emph{contraction} of the S-shaped region: squash the diagram vertically to ensure that at mode $a$, the step has size zero. Then erase the vertical line that corresponds to mode $a$, merging the boxes to the left and to the right of it. Relabel the modes to the right of $a$ as $b \mapsto b-1$.
    \captionitem{d}~Finally, we \emph{merge} the contracted diagrams. This results in a contiguous region in the lattice, the downset $\downset{\mu^{(n_a = x)}}$.
  }
  \label{fig:lp-measurement}
\end{figure}

\subsubsection{Projection}
\label{sec:projection}

When we measure $x$ photons in mode $a$, we project the statevector to the subspace that is compatible with that measurement. This is the step (a) $\to$ (b) in the running example in \cref{fig:lp-measurement}. We only keep those lattice paths that have a step of size $x$ in mode $a$:
\begin{equation}
  \label{eq:measurement-projection-single}
  \{ \lambda \in \downset\mu : n^\lambda_a = x \}.
\end{equation}
\noindent The above is generally not the set of paths in some downset, or even a contiguous region in the lattice: the projection removes certain paths, creating gaps in the region. This is a problem for our representation, since this would no longer allow us to efficiently characterize very large state spaces using a single maximal path $\mu$.

The solution is to decompose (and later recompose) the above set \eqref{eq:measurement-projection-single} into lattice regions defined by the number of photons in modes $\iicc{0,a-1}$, encoded by restricting the paths $\lambda$ to have height $\lambda_{a-1} = q$ (see \eqref{eq:n-to-lambda}), for each possible choice of~$q$. These regions, depicted in \Cref{fig:lp-measurement}b, correspond to a collection of disjoint subspaces of the state space, and are represented by regions of the lattice that can be efficiently described by their boundaries, i.e. the minimum and maximum paths. We call them \emph{skew diagrams} $\mu^q/\nu^q$ where $q$ is the choice of $\lambda_{a-1}$. Where a downset $\downset\mu$ includes all lattice paths in the region below $\mu$, a skew diagram $\mu^q/\nu^q$ contains paths between and including $\mu^q$ and $\nu^q$:
\begin{equation}
  \label{eq:skew-diagram-interval}
  \mu^q/\nu^q = \{ \lambda : \nu^q \le \lambda \le \mu^q \}.
\end{equation}
In \Cref{fig:lp-measurement}, we draw $\mu^q$ \red{red} and $\nu^q$ \blue{blue} wherever they differ. Notice that they all include an S-shape that implements the step of size $x$ in mode $a$ starting at height $q$.
The value of $q$ ranges between $0$ and $q_{\max} = \min\{ \mu_{a-1}, \mu_a - x \}$, the latter easily found as the highest $y$ coordinate in the diagram where the start of the S-shape of height $x$ can be placed within a diagram $\downset\mu$.

Informally, the systematic way to obtain all $\mu^q/\nu^q$ is to start with $q=0$, place the S-shape at the corresponding position in the lattice, and restrict the yellow region to be compatible with the S-shape. Then increase $q$, sliding the S-shape upward, and repeat the procedure until we reach the top of the lattice ($q = q_{\max}$). We show the formal details of computing all skew diagrams $\mu^q/\nu^q$ in \Cref{app:lp-projection}.

\subsubsection{Contraction}
\label{sec:contraction}

The next step is to delete the measured mode $a$ and the photons it contained from all the skew diagrams produced in the previous step. To do this, we remove the $x$ measured photons by removing $x$ rows from each diagram (i.e. remove boxes between vertical positions $q$ and $q+x$), then remove one column at position $a$. This implements a contraction of the S-shaped region in $\mu^q/\nu^q$ which is shown as the step (b) $\to$ (c) in the running example in \Cref{fig:lp-measurement}. We are left with contracted skew diagrams that represent subspaces of a system of $m-1$ modes and $n-x$ photons. We provide a formal presentation in \Cref{app:lp-contraction}.

If the state space is a PCS with a nontrivial permutation $\sigma$, we implement the contraction on $\sigma$ by deleting the input $a$, output $\sigma(a)$, and shifting the labels of modes following the deleted modes by $-1$. We show an example in \Cref{fig:permutation-contraction}.

\begin{figure}[t]
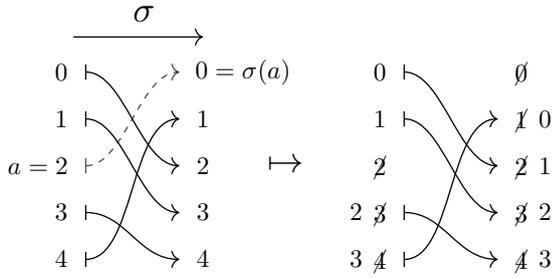

  \fwtikzfig{permutation_contraction}
  \caption{Example action of the contraction of mode $a=2$ on the permutation $\sigma$ of a PCS. On the LHS, we delete input mode $a$ on the left, as well as its target, the output mode $\sigma(a)$ on the right, related by the dashed arrow. All the modes below these, on each side, have their labels shifted by $-1$. We obtain the contracted permutation on the RHS.}
  \label{fig:permutation-contraction}
\end{figure}

\subsubsection{Merge}
\label{sec:merge}

Finally, we merge all of the contracted skew diagrams, one for each $q=0,\dots,q_{\max}$, to again obtain a representation where a single lattice path characterizes the whole state space as a downset. We show this in the step \mbox{(c) $\to$ (d)} of the running example in \Cref{fig:lp-measurement}.

Informally, to see why this yields a downset, consider the following: After contraction of the S-shapes, each skew diagram encodes only the choice of $\lambda_{a-1} = q$, that is, it enforces that the paths of that diagram pass the point $(a-1, q)$. However, the paths are otherwise free in the other modes, as long as they are consistent with the former constraint, as well as the original constraints imposed by $\mu/\nu$; see \Cref{fig:lp-measurement}c. We can imagine ``scanning'' the contracted lattice, setting $q = 0$, and writing down all the paths that satisfy the constraint $\lambda_{a-1}=q$, then advancing to $q = 1$, etc. Note in particular that we have a skew diagram for each $q \in \iicc{0,q_{\max}}$ without gaps. This captures all the valid lattice paths after a measurement, as seen in the example in \Cref{fig:lp-measurement}d. However, this part is formally nontrivial, and we provide more detail in \Cref{app:lp-merge}.

\subsubsection{Full description of measurement}

Putting the three steps together, we obtain one of our main results. It gives us the ability to efficiently encode the effect of a measurement of an arbitrary mode, with a given measurement outcome, on the state space using the lattice path formalism:

\begin{theorem}[measurement]
  \label{thm:lp-measurement}
  Starting with a cumulative space $\gen{\downset\mu}$, the state space left after measuring $x$ photons in mode $a$ is a cumulative space $\negen{\downset{\mu^{(n_a = x)}}}$ where the maximal path $\mu^{(n_a = x)}$ is the contraction of $\mu^{q_{\max}}$. The latter is the maximum path of the skew diagram corresponding to $q = q_{\max} = \min\{ \mu_{a-1}, \mu_a - x \}$.
\end{theorem}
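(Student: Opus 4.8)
The plan is to verify that the three-step construction (projection, contraction, merge) from the preceding subsections produces exactly a cumulative space, and that its maximal path is the one claimed. The key claim to pin down is that after the merge, the resulting region is a genuine downset $\downset{\mu^{(n_a=x)}}$, and that its top boundary equals the contraction of $\mu^{q_{\max}}$. First I would recall the projection step: measuring $x$ photons in mode $a$ restricts $\downset\mu$ to the set $\{\lambda \in \downset\mu : n^\lambda_a = x\}$, which \Cref{sec:projection} decomposes as a disjoint union of skew diagrams $\mu^q/\nu^q$ indexed by $q = \lambda_{a-1} \in \iicc{0,q_{\max}}$, with $q_{\max} = \min\{\mu_{a-1}, \mu_a - x\}$. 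I would then invoke the contraction step (\Cref{sec:contraction}), which deletes the $x$ measured photons and mode $a$, sending each $\mu^q/\nu^q$ to a skew diagram over $m-1$ modes and $n-x$ photons.

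The main technical content, and the step I expect to be the crux, is the \emph{merge}: showing that the union of all contracted skew diagrams is precisely a single downset. I would argue this in two directions. For the upper boundary, observe that increasing $q$ slides the S-shape upward, so the maximal paths $\mu^q$ are ordered increasingly in the Young order; hence the top boundary of the union is achieved at $q = q_{\max}$, giving $\mu^{(n_a=x)}$ as the contraction of $\mu^{q_{\max}}$. For downward-closure, I would show that any path $\lambda' \le \mu^{(n_a=x)}$ in $\LL(m-2, n-x)$ arises as the contraction of some pre-measurement path $\lambda \in \downset\mu$ with $n^\lambda_a = x$: given $\lambda'$, set $q = \lambda'_{a-1}$ (the height just before the deleted mode), reinsert a vertical step of size $x$ at position $a$, and check that the reconstructed $\lambda$ satisfies $\lambda \le \mu$ componentwise and ends at height $n$. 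The delicate point is ensuring no gaps appear between consecutive values of $q$ — that is, that every intermediate $q \in \iicc{0,q_{\max}}$ genuinely contributes and the contracted diagrams tile a contiguous region rather than leaving holes. This relies on the contraction forcing the step at mode $a$ to size zero, so that adjacent boxes on either side of the deleted column merge seamlessly; I would make this precise by the explicit boundary computations deferred to \Cref{app:lp-merge}.

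Finally, I would assemble the pieces: the projection is faithful (it keeps exactly the measurement-compatible basis states), the contraction is a bijection onto Fock states of $\Fs{n-x}{m-1}$ that correctly forgets mode $a$, and the merge shows the image is the downset of $\mu^{(n_a=x)}$. Combining these establishes that the post-measurement state space equals $\negen{\downset{\mu^{(n_a=x)}}}$, a cumulative space, with the stated maximal path. Since the detailed combinatorial verifications of the projection, contraction, and merge are each carried out in \Cref{app:lp-projection}, \Cref{app:lp-contraction}, and \Cref{app:lp-merge} respectively, the proof here amounts to citing those three lemmas in sequence and identifying the maximal path as the $q = q_{\max}$ case.
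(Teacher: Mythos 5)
Your proposal follows the same projection–contraction–merge structure as the paper's proof, correctly identifies the gap-free merge as the crux, and defers the same technical verifications to the appendix lemmas on projection, contraction, and the union of consecutive contracted intervals. The only cosmetic difference is that you sketch downward-closure by directly reconstructing a preimage path (reinserting a vertical step of size $x$ at mode $a$ and checking $\lambda \le \mu$), whereas the paper argues via the lower boundary of consecutive contracted helper diagrams; both are sound and identify the same maximal path $\mu^{(n_a = x)}$ as the contraction of the $q = q_{\max}$ diagram's top path.
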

The proof can be found in \Cref{app:lp-merge}.

\subsection{Tracking the state space in progressive simulation}
\label{sec:tracking-state-space}

Together, the maximal lattice path $\mu^t$ and the mode permutation $\sigma^t$ at step $t$ fully describe the state space throughout the simulation, where each step is either an individual beamsplitter or a measurement. If step $t$ is a beamsplitter, then we obtain $\mu^{t+1}$ and $\sigma^{t+1}$ using \Cref{thm:PCS-BS-evolution}. Note that if the beamsplitter is part of the initial $L_1$, as described in \Cref{sec:coupling-new-modes}, we must also append a new mode to the PCS. On the other hand, if step $t$ is a measurement, we choose its outcome number of photons, and obtain $\mu^{t+1}$ using \Cref{thm:lp-measurement} and $\sigma^{t+1}$ as described in \Cref{sec:contraction}.

Our extension of the lattice path formalism allows us to determine the size of the state space that the progressive simulator must keep in memory to reach a specific output measurement $\ket{\vec n'}$. However, this does not tell us how $\ket{\vec n'}$ should be chosen. To quantify metrics such as the average-case memory complexity of simulating a given architecture, we would need to sample $\ket{\vec n'}$ from a generally hard boson sampling distribution. That would be counterproductive because we want to estimate the complexity of sampling from this distribution in the first place.
In \Cref{sec:quantifying-complexity}, we show how to solve this issue and thus use the lattice path formalism to quantify the complexity of the progressive algorithm itself.

\section{Quantifying the simulation complexity of loop-based interferometers}
\label{sec:quantifying-complexity}

Now that we have a tool to characterize the state space at every point in our simulation, including individual beamsplitters as well as measurements, we are almost ready to use this to estimate the complexity of simulating loop-based systems, defined in terms of the memory requirement:

\begin{definition}[memory complexity]
  \label{def:memory}
  Given a loop-based system with loop lengths $\vec\ell$, total number of modes $m$, and an input state $\ket{\vec n}$, we define a function $M$ that captures the memory required to run a simulation of that system producing the output pattern $\vec n'$:
  \[ M(\vec n') = \max_t |\downset{\mu^t}| \]
  where $\mu^t$ is the maximal lattice path at step $t$ of the simulation.
\end{definition}

This metric captures the maximum memory needed to store the state vector over the course of the simulation. We note that $M$ does not assume a choice of strong simulator or how the data is represented: it is simply the number of nonzero coordinates to be stored. Since the number of photons $n'_a$ measured in mode $a$ at any given time affects the size of the state vector after the measurement, the value of $M$ does depend on the choice of $\vec n'$.

As this expression is defined in terms of a single output $\vec n'$, to quantify the complexity of simulating an experiment, we must estimate the statistics of $M(\vec n')$ over a range of possible values of $\vec n'$. One useful metric is the average complexity of an experiment:
\begin{equation}
  \label{eq:expectation-M-p}
  \bar M \deq \E_{\vec n' \sim p} \left[ M(\vec n') \right]
  = \sum_{\vec n'} p(\vec n') M(\vec n'),
\end{equation}
This is the expectation over all possible samples $\vec n'$ taken with respect to the physical probability distribution $p$, i.e. the actual outcome probability. 

The distribution $p$ is hard to compute or sample from for all but the smallest systems, so we need a tractable way to evaluate the expectation \eqref{eq:expectation-M-p}. To solve this issue, in \Cref{sec:good-heuristic}, we build a heuristic probability distribution that is easy to sample from and can be used to predict the memory complexity. We validate it by comparison against samples from small and medium-size architectures that can be simulate. Then in \Cref{sec:complexity-vs-modes,sec:comparison-power-law-archs}, we use the heuristic to study the behavior of complexity with respect the total number of modes and the lengths of loops in large systems in order to assess their potential for quantum advantage.

\subsection{A good heuristic probability distribution}
\label{sec:good-heuristic}

To evaluate the expectation $\bar M$ in \cref{eq:expectation-M-p} without having to sample from the true distribution $p$ for a large system, we propose instead using a heuristic probability distribution $p_H$ that is easy to sample from and provides a good estimate of $\bar M$. Given the same input state and loop structure that gave rise to $p$, we now assume that all possible outcomes $\vec n'$ are equally likely, thus defining $p_H$ as the uniform probability:
\[
  p_H(\vec n') = \frac1{\text{number of possible outcomes}}.
\]

There is a simple and efficient way to generate samples $\vec n'$ from the distribution $p_H$ mode-by-mode using the progressive decomposition:
\begin{lemma}
  \label{lem:heuristic-marginals}
  Suppose $\mu$ is a maximal path after simulating the circuit up to the end of component~$P_a$, just before the measurement of mode $a$. In particular, $\mu$ already encodes the previously obtained measurement outcomes $n'_0, \dots, n'_{a-1}$. Then the marginal heuristic probability of measuring $x$ photons in mode~$a$ is
  \begin{equation}
    \label{eq:heuristic-chain-rule}
    p_H(n'_a = x \mid n'_0, \dots, n'_{a-1})
    = \frac{| \downset{\mu^{(n_a = x)}} |}{\left| \downset{\mu} \right|},
  \end{equation}
  where previous measurements $n'_0, \dots, n'_{a-1}$ are fixed, the value $x \in \iicc{0, \mu_a}$, and $\mu^{(n_a = x)}$ is the maximal lattice path after the measurement.
\end{lemma}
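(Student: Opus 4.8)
The plan is to establish the marginal probability formula by working directly from the definition of the heuristic distribution $p_H$ as the uniform distribution over possible output patterns, and reducing the conditional marginal to a ratio of counts of lattice paths. The key observation is that under $p_H$, every valid output $\vec n'$ carries equal weight, so conditioning on the partial outcome $n'_0, \dots, n'_{a-1}$ and asking for the probability of $n'_a = x$ amounts to counting how many full completions are compatible with this choice, relative to the total number compatible with the fixed prefix.

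\medskip

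\textbf{Step 1: Identify the relevant state spaces.} First I would use \Cref{thm:lp-measurement} to note that $\mu$ encodes the cumulative space $\gen{\downset\mu}$ of possible outputs consistent with the already-fixed measurements $n'_0, \dots, n'_{a-1}$. By \Cref{lem:Fock-space-gen-by-LP} (bosonic states are lattice paths), each basis state of this space corresponds to exactly one path in $\downset\mu$, and the step-representation component $n^\lambda_a$ of such a path records the number of photons in mode $a$. Thus the set of full completions compatible with the prefix is in bijection with the paths of $\downset\mu$, whose cardinality is $|\downset\mu|$.

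\medskip

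\textbf{Step 2: Count the completions compatible with $n'_a = x$.} Measuring $x$ photons in mode $a$ restricts attention to those paths $\lambda \in \downset\mu$ with $n^\lambda_a = x$, i.e. the projection described in \Cref{sec:projection}. The crucial point, supplied by \Cref{thm:lp-measurement}, is that after projecting, contracting (deleting mode $a$ and its $x$ photons), and merging, these paths are in bijection with the paths of the contracted downset $\downset{\mu^{(n_a = x)}}$, a cumulative space over $m-1$ modes and $n - x$ photons. Hence the number of completions with $n'_a = x$ is exactly $|\downset{\mu^{(n_a = x)}}|$.

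\medskip

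\textbf{Step 3: Form the ratio.} Since $p_H$ is uniform, the conditional probability is the ratio of compatible completions to total completions:
\[
  p_H(n'_a = x \mid n'_0, \dots, n'_{a-1}) = \frac{|\downset{\mu^{(n_a = x)}}|}{|\downset\mu|},
\]
which is precisely \eqref{eq:heuristic-chain-rule}. To close the argument I would verify the normalization: summing the numerator over $x \in \iicc{0,\mu_a}$ must recover $|\downset\mu|$, which follows because the projections onto distinct values of $n^\lambda_a$ partition $\downset\mu$ disjointly and exhaustively — every path takes some well-defined number of up-steps at mode $a$.

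\medskip

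\textbf{The main obstacle} I expect is not the counting logic, which is essentially bookkeeping, but ensuring that the bijection between projected-then-contracted paths and the paths of $\downset{\mu^{(n_a = x)}}$ is genuinely a bijection — that the merge step of \Cref{sec:merge} neither double-counts nor omits paths. This is exactly the content that \Cref{thm:lp-measurement} is designed to guarantee, so the proof proper can invoke that theorem as a black box; the work is confined to checking that the uniform-measure interpretation of $p_H$ correctly turns ``counts of compatible completions'' into ``conditional probabilities,'' and that the range $x \in \iicc{0,\mu_a}$ together with $q_{\max} = \min\{\mu_{a-1}, \mu_a - x\}$ is the complete set of admissible outcomes.
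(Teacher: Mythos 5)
Your proposal is correct and follows essentially the same route as the paper, whose own proof is only a two-line sketch: uniformity of $p_H$ reduces the conditional marginal to counting paths in $\downset\mu$ with $n^\lambda_a = x$, and \Cref{thm:lp-measurement} identifies that count with $|\downset{\mu^{(n_a=x)}}|$. Your additional normalization check (that the projections onto distinct values of $n^\lambda_a$ partition $\downset\mu$) is a worthwhile detail the paper leaves implicit.
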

\begin{proof}[Sketch proof]
  The maximal path $\mu$ already encodes the outcomes $n'_0, \dots, n'_{a-1}$ and the probability of outcomes is uniform, so we just need to count the paths in $\downset\mu$ with $n_a = x$.
\end{proof}

Note that the uniform heuristic $p_H$ is very different from the true SPBS distribution $p$, but is a good first choice because it requires no information about $p$. We show in \Cref{sec:validation-heuristic} that $p_H$ does in fact provide a good estimate of the statistics of the memory complexity, giving similar results to the real distribution.

\begin{figure*}[!ht]
  \includegraphics[width=\linewidth]{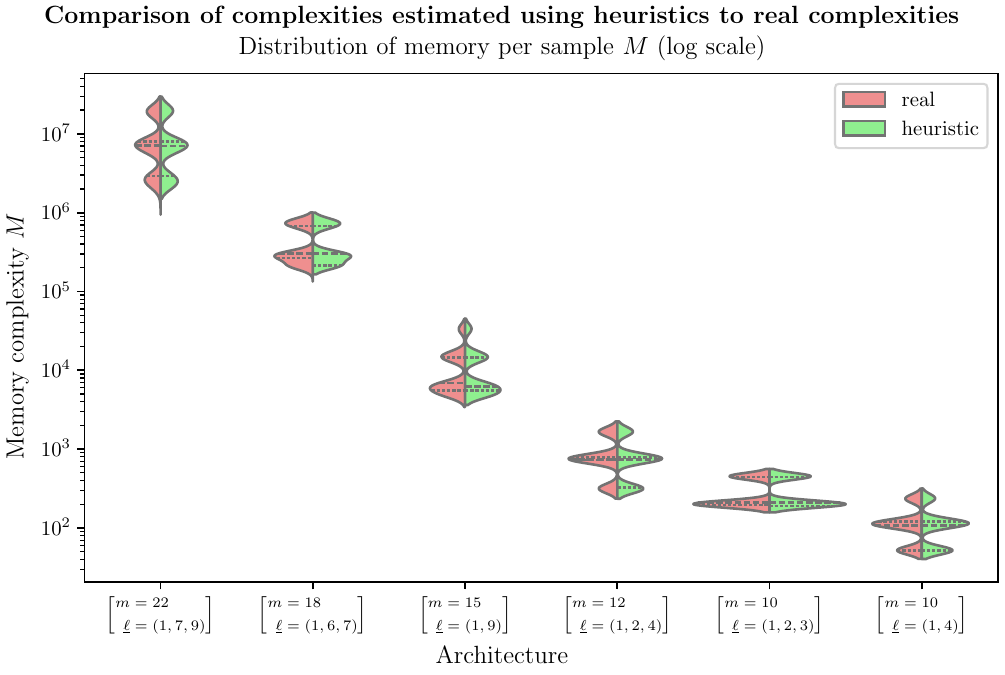}
  \caption{Validation of the heuristic probability distribution $p_H$. The horizontal axis shows different architectures $(m, \vec\ell, \ket{\vec n}$) where $\ket{\vec n} = \ket{1,0,1,0,\dots}$ with $\lceil \frac m2 \rceil$ photons. The vertical axis shows the memory requirement to simulate samples from a given architecture using our method, and is scaled logarithmically. The data are shown as mirrored violin plots, showing the distribution of memory with respect to true probability $\bar p$ in red on the left, and with respect to the heuristic $p_H$ in green on the right. Inscribed in the violins are the quartiles: median dashed, $25$-th and $75$-th percentiles dotted.
  }
  \label{fig:heuristic-validation}
\end{figure*}

\subsubsection{Validation of the heuristic}
\label{sec:validation-heuristic}

We empirically validate on small, easy-to-simulate circuits that the heuristic $p_H$ can be used to compute the average memory complexity. Since a real boson sampling distribution $p$ depends on the choice of beamsplitter parameters $(\theta_i)_i$, in our numerical experiments we average over several choices of $(\theta_i)_i$ drawn uniformly at random. Our validation consists of showing the following approximation:
\begin{equation}
  \label{eq:heuristic-approximates-real}
  \E_{\vec n' \sim p_H} [ M(\vec n') ]
  \approx
  \E_{\substack{(\theta_i)_i \\ \vec n' \sim p}} [ M(\vec n') ]
\end{equation}
In the following, we call $\bar p$ the probability distribution on the RHS above which averages over~$(\theta_i)_i$.

We choose, here and in the following sections, the input state $\ket{\vec n} = \ket{1,0,1,0,\dots}$ with $\lceil \frac m 2 \rceil$ photons. This choice is motivated by two considerations: First, current experimental efforts operate roughly in this regime, for example~\cite{he_time-bin-encoded-spbs_2017} with input similar to ours, or~\cite{Wang_boson-sampling-20-input_2019} with a slightly different ratio between the numbers of photons and modes. This regime is interesting in near-term experiments because there is a large space of possible measurement outcomes even without perfect photon number resolution. Second, recent theoretical research supports the investigation of this regime (and generally $m = \mathcal{O}(n)$), showing that it is hard to simulate classically~\cite{bouland_complexity-theoretic_2023}.

As shown in \Cref{fig:heuristic-validation}, we find over several choices of loop architectures that the distribution of memory complexities using heuristic $p_H$ closely matches the distribution obtained using~$\bar p$. For visual convenience, we only show a small selection of experiments represented along the horizontal axis. The vertical axis represents the memory complexity $M(\vec n')$, and is scaled logarithmically. We take $N=1000$ samples from each the true distribution $\bar p$ and the heuristic $p_H$, and from these samples we estimate the corresponding probability distributions of $M$ using \emph{kernel density estimation (KDE)}.\footnote{KDE uses an important parameter called \emph{bandwidth}, and we use the \emph{Scott's rule}~\cite{turlach_bandwidth_1999} to automatically select it.} We plot these estimates as mirrored violin plots: true distribution on the left in red, and heuristic on the right in green. For further comparison, each violin is inscribed with the quartiles: dashed line for median and dotted lines for $25$-th and $75$-th percentiles.

The results shown in \Cref{fig:heuristic-validation} confirm that the uniform heuristic $p_H$ can be used to give good predictions for the memory complexity. This means we can use it to reason about systems where generating samples using the real distribution is not possible.

\subsection{Complexity of a large loop-based system}
\label{sec:complexity-vs-modes}

\begin{figure*}[t]
  \includegraphics[width=\linewidth]{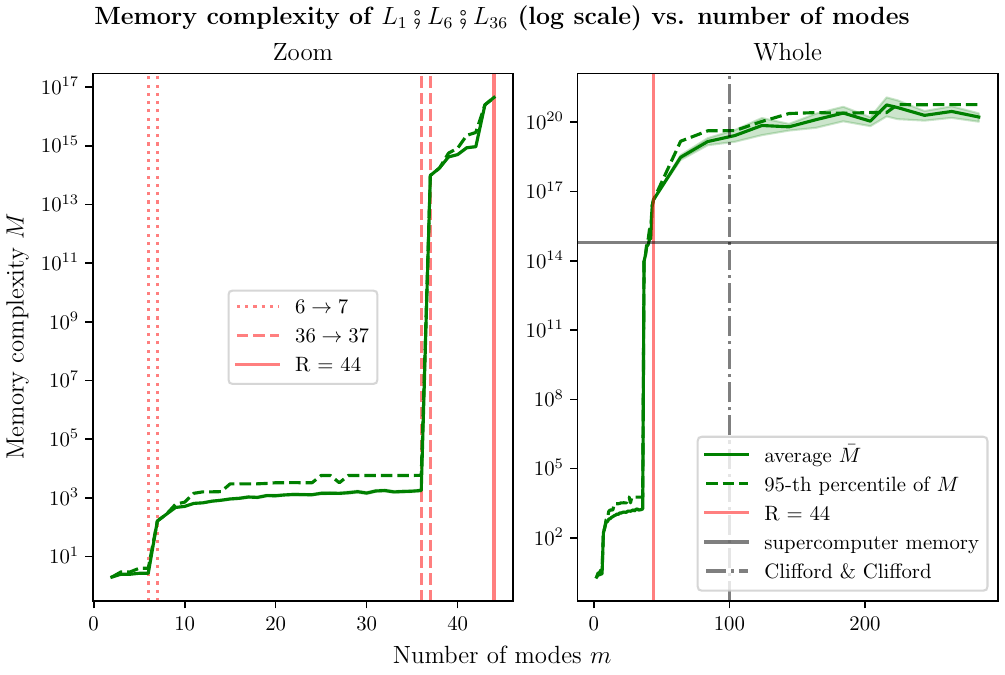}
  \caption{Memory complexity of simulating a loop-based system of lengths $\vec\ell = (1,6,36)$ with a varying number of modes $m$ between $2$ and $284$ along the horizontal axis. We consider an input state of $\ket{1,0,1,0,\dots}$ where there are half as many photons as modes. On the vertical axis, we show the memory complexity estimated using the heuristic~$p_H$. We show two subplots of the same data, where the LHS picture only shows the range of $m$ between $2$ and $44$ to make this region more visible. In each plot, we show the mean memory $\bar M$ as a full line and the $95$-th percentile of the samples of memory as a dashed line. We observe that the memory increases abruptly between three regions, corresponding to each loop entering the simulation. We also observe that the complexity starts to saturate after $m = R = 44$, shown by the full red line. For comparison, we plot an estimate of the total memory available to a state of the art supercomputer (horizontal grey line), assuming that each coordinate is encoded as a double-precision complex number (16 bytes). We highlight the number of modes ($m = 100$ with our input) up to which the Clifford \& Clifford algorithm can simulate the circuit (vertical dash-dotted line).}
  \label{fig:modespace-experiments}
\end{figure*}

Now that we have the tools to estimate the memory complexity of simulating arbitrary loop-based systems with our progressive algorithm, we can use them to analyze the largest loop-based systems built to date. We focus on a system with loop lengths $\vec\ell = (1, 6, 36)$ used by Madsen\etal in their \emph{Borealis} GBS experiment~\cite{madsen_quantum_2022}, and we study the complexity of this architecture in the context of SPBS as a function of the total number of modes $m$. To do this, we replace the original Gaussian input with $\ket{\vec n} = \ket{1,0,1,0,\dots}$.

In \Cref{fig:modespace-experiments}, we plot the memory complexity as a function of the total number of modes $m$ between~$2$ and $284$. For each value of~$m$, we take $N = 1000$ samples of memory complexity using the heuristic probability $p_H$. We study the average memory $\bar M$ (full line in the figure), as well as the $95$-th percentile of the samples of~$M$ (dashed line). The latter represents the memory required to obtain almost all possible samples and we use it as a proxy for quantum advantage: if our classical computer has more memory than the $95$-th percentile, then we can successfully run the progressive simulator at least $95\ \%$ of the time. 

In the figure, we draw a comparison between the memory complexity and the memory available to a state of the art supercomputer, shown as a horizontal gray line. Our chosen value of $10^{15}$ is inspired by the OLCF-5 (Frontier) system~\cite{noauthor_frontier_nodate}, taking a loose assumption that all memory within the system can be used to store the quantum state. This overestimates the real capacity, but is independent of any implementation details. For comparison with the values $M(\vec n')$, we assume each nonzero coordinate is stored as a double-precision complex float.

Observe that there are three different regions in terms of $m$ where the memory complexity has different behaviour, and there are abrupt jumps between these. The regions correspond to values of $m$ where different numbers of loops are active. For example, if $m \le 6$, then $L_1 \then L_6 \then L_{36}$ effectively becomes a single loop system $L_1$, because there are not enough modes for the longer loops to have any effect. However, as $m$ increases to~$7$, the second loop becomes active, and we see a jump in complexity. A similar situation repeats for the transition from $m = 36$ to $37$, where the third loop becomes active.

The memory complexity starts to saturate once the total number of modes reaches the value $R$. Recall that $R$ is the maximum number of modes in each component $P_a$ of the progressive decomposition, in this case $R = 44$. In the region of $m$ slightly larger than $R$, the components $P_a$ are reaching their maximum possible size and the complexity reaches a plateau shortly after. We gain little additional complexity by increasing $m$ further, though for the system with $\ell = 6$, already $m \approx R$ is in the regime where our progressive algorithm fails even on a supercomputer.

\subsection{Comparison of power-law architectures}
\label{sec:comparison-power-law-archs}

\begin{figure*}[t]
  \includegraphics[width=\linewidth]{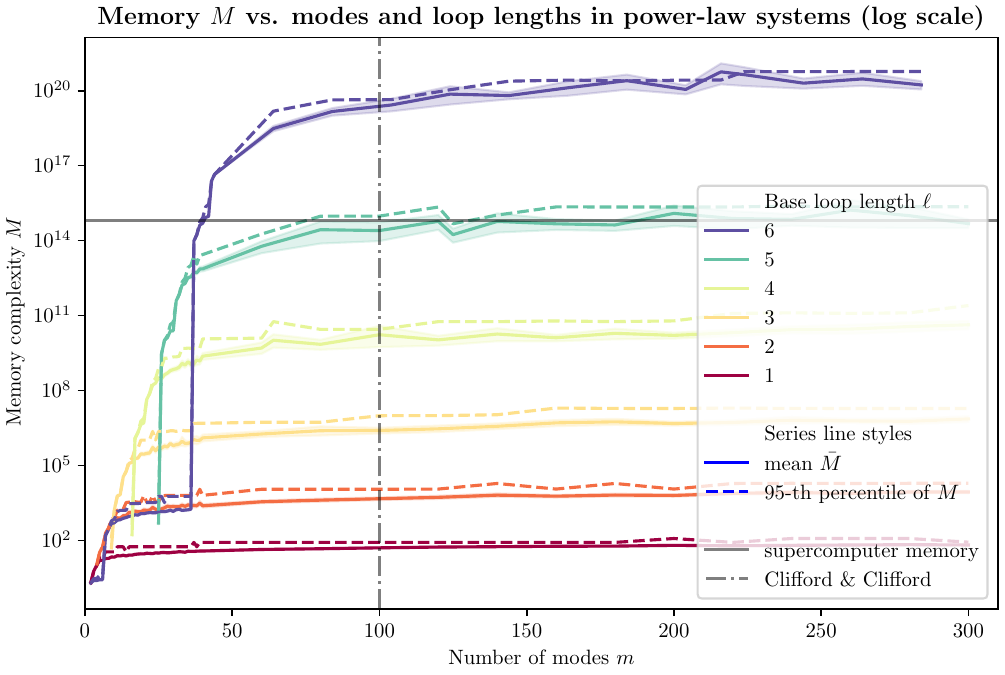}
  \caption{Comparison of power-law architectures $\vec\ell = (1, \ell, \ell^2)$ for different values of the base loop length $\ell$, signified by the colors of the lines. We consider an input state of $\ket{1,0,1,0,\dots}$ where there are half as many photons as modes. The plot is similar to \Cref{fig:modespace-experiments}: we show the total number of modes on the horizontal axis, the memory complexity on the vertical axis, and we plot the mean $\bar M$ and $95$-th percentile, comparing them to the supercomputer memory drawn as horizontal grey line. We observe that $\ell=5$ is a threshold for complexity when using the progressive simulator. We furthermore highlight the number of modes ($m = 100$ with our input of $n = 50$ photons) up to which the Clifford \& Clifford algorithm succeeds in simulating the circuit. Thus we divide the plot into quadrants representing regimes where different simulators are appropriate, and we observe the bottom right quadrant containing systems simulable by our algorithm, but not by the Clifford \& Clifford algorithm.}
  \label{fig:comparison-power-law}
\end{figure*}

We generalize our analysis to three-loop power-law architectures from \Cref{def:power-law} where the base length $\ell$ may be different and we study how this affects the complexity.
Similarly to the previous section, we also vary the total number of modes $m$, and we use the input state $\ket{\vec n} = \ket{1,0,1,0, \dots}$ with $\lceil \frac m2 \rceil$ photons. We again compare the memory complexity to the capacity of a state-of-the-art supercomputer to better contextualize the potential for quantum advantage.

We present the results in \Cref{fig:comparison-power-law} where we show several colored series of data, each corresponding to a different value of $\ell$. Similarly to the previous \cref{fig:modespace-experiments}, we show the average and $95$-th percentile memory as functions of $m$.
We find that three-loop power-law systems with base length $\ell < 5$ and the above input are classically simulable using a large enough computer, even for a large number of modes. Thus we consider them not complex enough to reach quantum advantage.

When $\ell=5$, we reach a threshold in complexity. Observe in \Cref{fig:comparison-power-law} that the memory requirement to simulate a system with $\ell=5$ with $\lceil \frac m2 \rceil$ photons using our progressive simulator just surpasses the capacity of a state-of-the-art supercomputer. We see this happen when $m \ge 80$.

It follows that $\ell=5$ is a lower bound for quantum advantage: Systems with $\ell < 5$ are clearly simulable by the progressive simulation algorithm. However, while systems with $\ell \ge 5$ are not simulable by our algorithm, they may still be simulable using other methods. For instance, in addition to having a sufficiently complex loop structure a quantum system must also have roughly 40 to 50 photons at least to avoid classical simulability with the algorithm of Clifford and Clifford~\cite{Cliffords_2017}. We also highlight this threshold, which falls on $m=100$ modes for our input $\ket{1,0,1,0,\dots}$, in \Cref{fig:comparison-power-law}. Moreover, as we discuss in the next section, experimental imperfections such as photon loss and distinguishability can reduce the requirements for classical simulation.

\section{Discussion and conclusion}
\label{sec:discussion}

Here, we discuss ways in which our classical simulation method could be extended to broader settings beyond ideal boson sampling with single photons.

\paragraph{Extension to loss}
Here, we show that this method can be extended to simulate lossy circuits. Loss on a mode can be modeled as a beam splitter between the mode and the environment. We can thus evolve an initially $n$-mode pure state through loss on mode $a$ by applying a beam splitter between mode $a$ and the environment. We then simulate a measurement process on the environmental mode, by calculating the probabilities of measuring different photon numbers on the environmental mode and sampling from these possible outcomes. This measurement collapses the initial $n$ modes back to a pure state with potentially fewer photons. Though the probabilistic outcome of the environmental measurement makes this process stochastic, the resulting statistics match those obtained by tracing out the environmental mode and using the density matrix formalism. However, our method avoids the quadratic overhead stemming from the use of a density matrix. Repeating this process at every loss location in the circuit allows us to exactly simulate sampling from the lossy state using our classical simulation algorithm.

Moreover, the complexity of the lossless case provides an upper bound to the complexity of simulating a lossy circuit. Adding additional beam splitters to simulate loss adds an overhead which is only linear in the number of optical components. However, the state space is also significantly smaller than the lossless case since we now simulate a state with fewer photons. Though our lattice path formalism for determining the exact size of the state space cannot be directly applied in the case of loss, the lossless case thus provides an upper bound.

\paragraph{Extension to distinguishable photons}
Our method can also be extended to simulating boson sampling with partial distinguishability. Following~\cite{renema2018efficient, sparrow2018quantum}, a simple model for a system with partially distinguishable photons involves considering that each photon is in a superposition of being in an indistinguishable mode and a distinguishable mode. Sampling from this model can be simulated by randomly assigning the input photons into a distinguishable set and an indistinguishable set at the beginning of the simulation, with an assignment probability determined by the initial superposition state. Evolving the indistinguishable photons through the circuit and collecting the output measurements, then simulating the distinguishable photons and summing both measurements, would yield the correct output statistics. The complexity of this method is dominated by the complexity of simulating this subset of indistinguishable photons, which is upper-bounded by the complexity of simulating the full set of indistinguishable photons given by our lattice path formalism. We note that more involved models for partial distinguishability have been developed, which we anticipate could also be used within our simulation method \cite{shchesnovich2015partial,annoni2025incoherent}.

\paragraph{Gaussian Boson Sampling}

We focus on SPBS, but there are other variations of Boson Sampling that have been investigated, such as Gaussian Boson Sampling (GBS). For example, in \Cref{sec:complexity-vs-modes}, we studied the system of loops $\vec\ell = (1,6,36)$ which was inspired by a GBS advantage experiment by Madsen\etal\cite{madsen_quantum_2022}. GBS is underpinned by a different formalism than SPBS, so our methods are not directly transferable. However, it would be interesting to explore whether the staircase structure of the interferometer (\Cref{fig:big_picture:circuit}) could be exploited to accelerate GBS simulations.

\subsection{Summary and outlook}

In this work, we study the potential for quantum advantage in loop-based boson sampling systems, and to this end, we provide new tools to quantify their classical simulation complexity. First, we exploit the circuit topology to design a new simulation method that requires that only small subcircuits are strongly simulated. This can use any wavefunction Fock basis simulator, which implies that a relevant choice of complexity metric is the memory required to store the state, i.e. the dimension of the reachable state space.

To fully and efficiently characterize the reachable state space, we extend the lattice path formalism of Br\'adler \& Wallner~\cite{bradler_wallner_2021}. Our extension can encode the action of beamsplitters between arbitrary modes, and similarly the outcomes of measurements of arbitrary modes. To use the formalism to predict the memory complexity of simulating large systems, we define a heuristic probability on the possible outcomes that is easy to sample from. We validate the statistics given by the heuristic by comparing its predictions to empirical data from simulations of small to medium-scale systems.

Using these tools, we study the complexity of power-law architectures where the loops have lengths $1, \ell, \ell^2$ and we use input state $\ket{1,0,1,0,\dots}$, to represent the common experimental scenario where there are half as many photons as modes. A system with $\ell < 5$ (and any number of modes $m$) can be efficiently simulated using our method, and thus does not achieve advantage. We provide a lower bound for advantage in such systems at $\ell \ge 5$.

To conclude, our work provides new insight into the boundaries for quantum advantage using single photons in loop-based time-bin interferometers. We develop new tools that provide new lower bounds for the arrangement of the loops. We anticipate that this work will help guide future experimental implementations of boson sampling, and points the way for further research into quantum advantage with loop-based systems under different experimental scenarios.

\section*{Acknowledgements}

We would like to thank the members of the team at ORCA for helpful discussions. This work was supported by the Quantum Data Centre of the Future, project number 10004793, funded by the Innovate UK Industrial Strategy Challenge Fund (ISCF).
R.G.-P. was supported by the EPSRC-funded project Benchmarking Quantum Advantage.

\appendix

\section{Results on the loop-based structure}
\label{app:loop_theorems}

To formally prove \Cref{thm:relevant-modes}, we formalize the notion of travelling through a circuit $X$ backwards. For an output mode $a$, we denote $X^\dagger(a)$ the set of all input modes, photons from which can reach the output $a$ with nonzero probability, constrained only by the circuit topology. That is, the set $X^\dagger(a)$ is a preimage of the relation encoding connectivity from input to output modes.

We recall the statement of the theorem below:

\noindent
{\normalfont\bfseries Theorem\ \ref*{thm:relevant-modes}\ {\normalfont (relevant modes)}.}
In a progressive decomposition of a loop-based system $X$ made of loops of lengths $\vec\ell = (\ell_1, \dots, \ell_\Lambda)$, each component $P_a$ acts non-trivially on at most the number of \emph{relevant} modes
\begin{equation}
  \tag{\ref*{eq:relevant-modes}}
  R = 1 + \sum_{i = 1}^\Lambda \ell_i.
\end{equation}

\begin{proof}
  Recall that a component $P_a$ is made of beamsplitters relevant to measuring output mode~$a$. We use these beamsplitters to reach the modes in $X^\dagger(a)$ while traversing the circuit backward: a beamsplitter allows us to jump to a different mode, as shown in the example of \Cref{fig:preimage-of-0}.

  To bound the number of relevant modes, we need to identify the furthest reachable mode, i.e. the one identified by the maximum number $\max X^\dagger(a)$.
  The mode $\max X^\dagger(a)$ is found by starting at output $a$ and following backward the circuit $X = L_1 \then L_{\ell_2} \then \cdots \then L_{\ell_\Lambda}$, taking a step $+\ell_j$ (towards a mode identified by a higher number) at each loop $L_{\ell_j}$; for example, see the red path with arrows $\tikzredarrowsback$ in \Cref{fig:preimage-of-0}. This way, reach the maximum mode: $\max X^\dagger(a) = a + \sum_{j=1}^\Lambda \ell_j$.

  Finally, recall that by construction of the progressive decomposition, the component~$P_a$ does not act on modes $\iicc{0, a-1}$: Any beamsplitters affecting these are captured by $P_0, \dots, P_{a-1}$, and furthermore, the modes $\iicc{0,a-1}$ are measured before the component~$P_a$ is simulated. Having previously identified the maximum reachable mode $\max X^\dagger(a)$, we conclude that the maximum number of modes that $P_a$ can act on is:
  \[
    \left| \iicc{a, a+{\textstyle \sum_{j=1}^\Lambda \ell_j}} \right| = 1 + \sum_{j=1}^\Lambda \ell_j.
    \qedhere
  \]
\end{proof}

Finally, we note that this upper bound is saturated whenever a component $P_a$ starts with a staircase of nearest-neighbour beamsplitters given by $L_1$ -- namely, this is the case for $P_0$ in a system with $\ell_1=1$. The reason is that the set of inputs reachable via $L_1$ from any output mode~$b$ is $L_1^\dagger(b) = \iicc{0, b+1}$. This is because the staircase of nearest neighbour beamsplitters allows us to take one step from $b$ to $b+1$, and as many steps towards lower ($< b$) modes as we like, as shown in \Cref{fig:preimage-of-0} by the yellow region.

Other progressive components ($P_a$ for $a>0$) in a system with $\ell_1=1$, and more generally any component if $\ell_1 \ne 1$, may act on fewer modes than $R$. The procedure of travelling through the circuit backward, and using beamsplitters to take steps up or down, yields a set which is not an interval, but has gaps. We do not need more detail on this for our work, however, the set $X^\dagger(a)$ can be exactly characterized using the rich theory of numerical monoids, see~\cite{rosales2009numerical}.

\section{Formal details of the lattice path formalism}
\label{app:lattice-path-formalism-details}

In order to give a more formal proof of \Cref{thm:lp-measurement}, particularly the part where we merge contracted skew diagrams, we need to fill in some details omitted in \Cref{sec:loops-give-downsets,sec:a-new-formalism}. We need tools from order theory: we present generalities inlined in the text, and we only number definitions or lemmas specific to our work.

\subsection{Young order}

We start by recalling that a \emph{partial order} $\le$ on a set $P$ is a generalization of a total order (e.g.~the usual ordering on the real line~$\R$) where some elements may be incomparable. The set $P$ is a \emph{partially ordered set (poset)}; for more details, see~\cite{stanley_enumerative_2012}.
The order relevant to us is the following:

\begin{definition}[Young order]
  Take two paths $\lambda, \kappa \in \LL(W, H)$ in height representation. We define a partial order, called the \emph{Young order}, where $\lambda \le \kappa$ iff $\lambda_a \le \kappa_a$ for each mode $a$.
\end{definition}

It is called the Young order because lattice paths can be seen as generalized Young diagrams when we fill the boxes below them.\footnote{Note the choice of orientation is such that lattice paths are nondecreasing sequences.} Then the order can be understood geometrically as the inclusion of Young diagrams; we show an example inequality below:
\ctikzfig{Young_order_geometric}
The Young order is well-known in the literature and has many properties useful to us later~\cite{stanley_enumerative_2012}.

We present an example of the order for $\LL(2,2)$ in \Cref{fig:LP-poset-Fock-space-2-3} which shows the order's Hasse diagram: a directed graph where vertices are elements of $\LL(2,2)$ and there is an arrow $\lambda \to \kappa$ iff $\lambda$~\emph{is covered by}~$\kappa$ (written $\lambda \lessdot \kappa$). The latter means that $\lambda < \kappa$ and there is no element between them. In the geometric picture, $\lambda \lessdot \kappa$ when the Young diagram of $\lambda$ has exactly one box less than $\kappa$.

\begin{figure}[t]
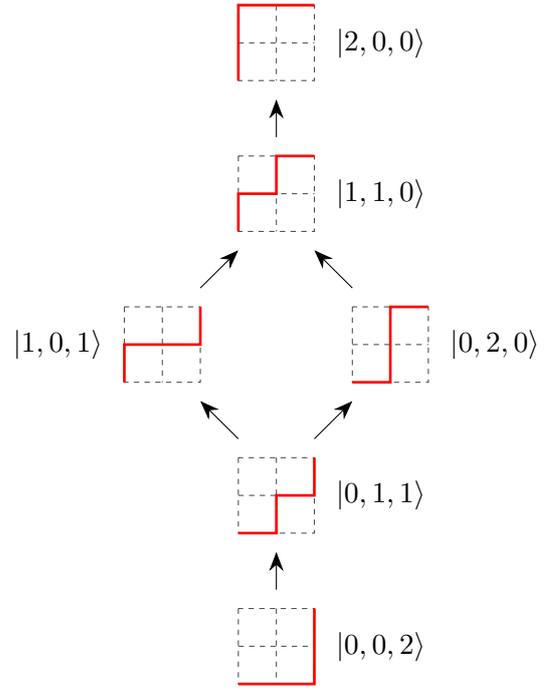

  \centering
  \ctikzfig{LP_Fock-space_2-3}
  \caption{The Hasse diagram of the set of paths $\LL(2,2)$, which corresponds to the bosonic Fock space $\Fs 23$, with the Young order. The arrows show the covering relation $\lambda \lessdot \kappa$. Next to each lattice path, we display the number state it represents.}
  \label{fig:LP-poset-Fock-space-2-3}
\end{figure}

\subsection{Representing measurements}
\label{app:lp-measurements}

The part of the formalism that encodes measurements uses \emph{intervals} in the Young poset, which in our paper are called \emph{skew diagrams} in analogy to skew Young diagrams. Recall that an interval is a subset of the poset defined by two paths $\nu \le \mu$:
\[ \mu/\nu \equiv [\nu, \mu] \deq \{ \lambda : \nu \le \lambda \le \mu \}. \]
When necessary, we write it vertically as $\frac\mu\nu$.

\subsubsection{Projection to skew diagrams}
\label{app:lp-projection}

Recall that to project to a subspace compatible with some measurement, say $n_a = x$, in \Cref{sec:projection}, we impose an S-shaped region onto the original downset $\downset\mu$. For the following steps, it is useful to always reason in terms of skew diagrams; note that the downset is the skew diagram $\mu/\bot$ where $\bot$ is the bottom element of the poset, i.e. the path $\lambda^\bot = (0,\dots,0,n)$.

The S-shape is parametrized by its starting height, controlling the number of photons in the modes to the left of the S, thus implementing the constraint $\lambda_{a-1} = q$ on the lattice paths $\lambda$ in the post-measurement subspace. It then takes a step of size $x$, encoding the number of photons found in mode $a$, implementing the requirement that $n_a = x$. Away from the S-shape, the skew diagram should stay as unchanged as possible from the original downset. A~convenient way to construct this is to define \emph{measurement helpers}, skew diagrams which only encode information about the measurement, and leave everything else as free as possible. We show an example in \Cref{fig:helper-diagram}, and we precisely define helper diagrams below. Then we take an intersection of the helper with the original skew diagram $\mu/\bot$ to obtain a projection.

\begin{definition}[measurement helper]
  \label{def:measurement-helper}
  To represent a subspace of the full Fock space $\gen{\LL(m-1,n)}_\C$ that has in total $\lambda_{a-1} = q$ photons in modes $\iicc{0, a-1}$ and~$x$ photons in mode~$a$, define a \emph{helper skew diagram} $E(q) = \varepsilon^q / \eta^q$ where the top path is
  \[
    \varepsilon^q
    = (\underbrace{q, \dots, q}_{a},
    q + x,
    \underbrace{n, \dots, n}_{m - a - 1})
  \]
  and the bottom is
  \[
    \eta^q
    =
    \begin{cases}
      (\underbrace{0, \dots, 0}_{a - 1}, q,
      \underbrace{q + x, \dots, q + x}_{m - a - 1}, n) & \text{if}\ a > 0, \\
      (\underbrace{q + x, \dots, q + x}_{m - a - 1}, n) & \text{if}\ a = 0.
    \end{cases}
  \]
  Note that if $a = 0$, then only $E(q=0)$ exists, so the above second case becomes $\eta^0 = (x,\dots,x,n)$. Note also that the maximum $q$ is $q_{\max} = n-x$.
\end{definition}

The notation $E(q)$ does not specify the mode $a$, nor the number of photons $x$, however, in the following text, this will always be clear from the context. 

\begin{figure}[t]
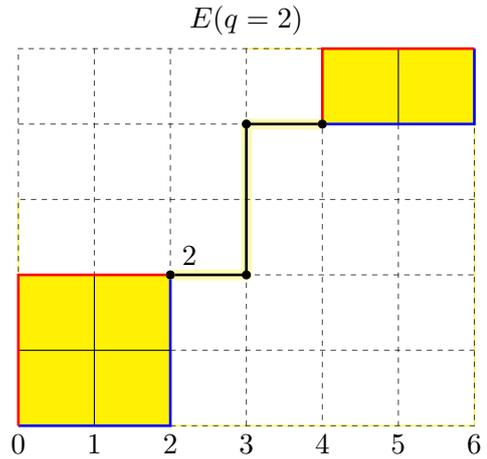

  \ctikzfig{helper_diagram}
  \caption{Helper diagram $E(q = 2)$ which represents the subspace of the full Fock space $\gen{\LL(m-1, n)}_\C$ (with $m=7$ and $n=5$) of those basis states that have $q=2$ photons in modes $\iicc{0,2}$, represented by the starting position of the S-shaped region, and $x = 2$ photons in mode $3$, represented by the height of the step in the S-shape. No other constraints are implemented in this skew diagram.}
  \label{fig:helper-diagram}
\end{figure}

The above definition is the first step to formally define the projection step in terms of lattice paths, i.e. in a way that can be written as a formula (and implemented in code). We use it to represent the subset of the basis of a full Fock space that remains after a given measurement. The next step is to take its \emph{intersection} with the diagram $\mu/\bot$ representing the state space of our system before measurement to obtain those basis states that agree with both. In order to obtain the intersection of skew diagrams, we first need some more theory of the Young order:

The poset $\LL(m-1, n)$ is an order-theoretic \emph{lattice} which means that it contains all finite meets and joins~\cite{stanley_enumerative_2012}. Recall that a \emph{meet} (also called \emph{supremum} or \emph{least upper bound}) of $\lambda$ and $\kappa$ is the unique least element $\lambda \wedge \kappa$ such that $\lambda \le \lambda \wedge \kappa$ and $\kappa \le \lambda \wedge \kappa$. Dually, the \emph{join} (also \emph{infimum} or \emph{greatest lower bound}) is the unique greatest element $x \vee y$ such that both $x$ and $y$ are greater.

\begin{lemma}[meet and join in Young order]
  Suppose $\lambda, \kappa \in \LL(m-1, n)$. Then their meet and join have the following components:
  \begin{align*}
    (\lambda \wedge \kappa)_a
    & = \min\{ \lambda_a, \kappa_a \}, \\
    (\lambda \vee \kappa)_a
    & = \max\{ \lambda_a, \kappa_a \}.
  \end{align*}
\end{lemma}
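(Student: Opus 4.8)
The plan is to prove both identities by the same pointwise argument, treating the join ($\vee$, given by the componentwise maximum) as the order-theoretic dual of the meet ($\wedge$, given by the componentwise minimum), so that it suffices to carry out the meet case in full and then dualize. For the meet I would establish two things about the sequence $\rho$ defined by $\rho_a \deq \min\{\lambda_a, \kappa_a\}$: first, that $\rho$ is itself a legitimate element of $\LL(m-1,n)$, so that the greatest lower bound is even being computed inside the correct poset; and second, that $\rho$ satisfies the universal property characterizing the greatest lower bound of $\lambda$ and $\kappa$ in the Young order.

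I expect the closure check to be the only step with genuine content, so I would dispatch it first. With $\rho_a = \min\{\lambda_a, \kappa_a\}$, I need $\rho$ to be nondecreasing, to lie in $\iicc{0,n}$ componentwise, and to satisfy $\rho_{m-1} = n$. The bound $0 \le \rho_a \le n$ is immediate since both $\lambda_a$ and $\kappa_a$ lie in $\iicc{0,n}$, and $\rho_{m-1} = \min\{n,n\} = n$ because every path in $\LL(m-1,n)$ terminates at height $n$. For monotonicity I would assume without loss of generality that $\lambda_{a-1} \le \kappa_{a-1}$, so that $\rho_{a-1} = \lambda_{a-1}$, and then combine $\lambda_{a-1} \le \lambda_a$ with $\lambda_{a-1} \le \kappa_{a-1} \le \kappa_a$ to conclude $\rho_{a-1} = \lambda_{a-1} \le \min\{\lambda_a, \kappa_a\} = \rho_a$. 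Hence $\rho \in \LL(m-1,n)$.

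With closure in hand, the universal property is immediate from the pointwise definition of the Young order. Since $\rho_a = \min\{\lambda_a,\kappa_a\} \le \lambda_a$ and likewise $\rho_a \le \kappa_a$ for every mode $a$, we obtain $\rho \le \lambda$ and $\rho \le \kappa$, so $\rho$ is a common lower bound. Conversely, if $\pi \in \LL(m-1,n)$ is any common lower bound, then $\pi_a \le \lambda_a$ and $\pi_a \le \kappa_a$ for all $a$, whence $\pi_a \le \min\{\lambda_a,\kappa_a\} = \rho_a$ and therefore $\pi \le \rho$. Thus $\rho$ is the greatest lower bound, establishing $(\lambda \wedge \kappa)_a = \min\{\lambda_a,\kappa_a\}$. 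The join identity follows by replacing $\min$ with $\max$ and reversing each inequality throughout: the closure argument is verbatim the same with $\max$ in place of $\min$, since the componentwise maximum of two nondecreasing sequences ending at $n$ again stays nondecreasing and ends at $n$, and the pointwise least-upper-bound argument dualizes cleanly. The main (indeed only) obstacle is the closure verification; once it is known that the componentwise extrema remain valid lattice paths, the extremality is a triviality inherited from the totally ordered structure of each coordinate.
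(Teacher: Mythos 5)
Your proof is correct and takes essentially the same route as the paper: both reduce the meet and join to componentwise min and max and then check that the resulting sequence is still a valid (nondecreasing, height-$n$) lattice path. You are somewhat more explicit than the paper — which simply invokes the product-order structure and remarks that the nondecreasing constraint is preserved — in that you spell out both the closure verification and the universal property inside the sub-poset, but the underlying argument is identical.
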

\begin{proof}
  Recall that in the Young order $\alpha \le \beta$ iff $\alpha_a \le \beta_a$ for all $a$, making it a product order of several copies of the linear order on integers. This means the meet and join are componentwise the meet and join of integers, which are the minimum and maximum, respectively. Note that this is compatible with the fact that the height representation of a lattice path must be a nondecreasing sequence: a component $a$ is a meet (resp. join), and so is $a+1$, etc., so the nondecreasing constraint is satisfied.
\end{proof}

\begin{lemma}[intersection of skew diagrams]
  \label{lem:intersection-skew-diagrams}
  Suppose $\alpha/\beta$ and $\gamma/\delta$ are skew diagrams in $\LL(m-1, n)$. Their intersection is either empty or equal to the following:
  \[
    \frac\alpha\beta \cap \frac\gamma\delta
    =
    \frac{ \alpha \wedge \gamma }{ \beta \vee \delta }.
  \]
\end{lemma}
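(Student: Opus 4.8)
The plan is to show the intersection of two skew diagrams (intervals in the Young lattice) is again a skew diagram, by exhibiting its top and bottom paths explicitly. Since a skew diagram $\alpha/\beta$ is by definition the order interval $\{\lambda : \beta \le \lambda \le \alpha\}$, the intersection
\[
  \frac\alpha\beta \cap \frac\gamma\delta
  = \{ \lambda : \beta \le \lambda \le \alpha \ \text{and}\ \delta \le \lambda \le \gamma \}
\]
is just the set of $\lambda$ satisfying all four inequalities simultaneously. The key observation is that $\lambda \le \alpha$ and $\lambda \le \gamma$ together are equivalent to $\lambda \le \alpha \wedge \gamma$ (the meet is the greatest lower bound, so being below both is the same as being below their meet), and dually $\beta \le \lambda$ and $\delta \le \lambda$ together are equivalent to $\beta \vee \delta \le \lambda$. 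This immediately rewrites the conjunction of four constraints as the single two-sided constraint $\beta \vee \delta \le \lambda \le \alpha \wedge \gamma$, which is precisely the interval $(\alpha \wedge \gamma)/(\beta \vee \delta)$.

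The core of the argument is therefore the universal property of meets and joins, applied in both directions. I would first invoke the previous lemma giving meets and joins componentwise as $\min$ and $\max$, which guarantees $\alpha \wedge \gamma$ and $\beta \vee \delta$ are genuinely elements of $\LL(m-1,n)$ (nondecreasing sequences ending at $n$). Then I would spell out the equivalence: for the upper bounds, $\lambda_a \le \alpha_a$ and $\lambda_a \le \gamma_a$ for all $a$ holds exactly when $\lambda_a \le \min\{\alpha_a,\gamma_a\} = (\alpha \wedge \gamma)_a$; the lower-bound half is symmetric with $\max$ and the join. Combining the two rewritten constraints yields the claimed interval.

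The only genuine subtlety — and the reason for the ``either empty or equal to'' phrasing — is that the expression $(\alpha \wedge \gamma)/(\beta \vee \delta)$ denotes a nonempty skew diagram only when its prescribed bottom lies below its prescribed top, i.e. when $\beta \vee \delta \le \alpha \wedge \gamma$. If this fails, the set of $\lambda$ satisfying all four inequalities is empty, matching the ``empty'' alternative in the statement; the displayed formula is understood to hold whenever the intersection is nonempty. I would note this case split explicitly but not belabor it, since it is a direct consequence of the interval notation rather than a separate computation. The main ``obstacle,'' such as it is, is purely bookkeeping: making sure the equivalences are stated as biconditionals in both directions so that no path is spuriously included or excluded, and confirming that the meet and join stay within the poset $\LL(m-1,n)$ — both of which are handed to us by the preceding componentwise lemma.
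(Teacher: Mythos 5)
Your proposal is correct and follows essentially the same route as the paper's proof: both reduce the intersection of the two order intervals to the single two-sided constraint $\beta \vee \delta \le \lambda \le \alpha \wedge \gamma$ via the universal property of meets and joins in the lattice $\LL(m-1,n)$, and both dispatch the empty case in the same way. Your version is slightly more explicit about the componentwise $\min$/$\max$ justification and the biconditional direction, but there is no substantive difference.
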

\begin{proof}
  The intersection must contain exactly the paths $\lambda$ such that $\beta \le \lambda \le \alpha$ and simultaneously $\delta \le \lambda \le \gamma$. It is possible that no such $\lambda$ exists, in which case the intersection is empty. If the intersection is nonempty, then it is clearly an interval (skew diagram). The greatest $\lambda$ that satisfies both $\lambda \le \alpha$ and $\lambda \le \gamma$ is the meet $\alpha \wedge \gamma$. Dually, the least $\lambda$ is the join $\beta \vee \delta$. Note that this relies on $\LL(m-1, n)$ being an order lattice.
\end{proof}

With the above tools, we can characterize the state spaces after the measurement projection:

\begin{lemma}
  \label{lem:projected-diagrams}
  The family of skew diagrams obtained from $\mu/\nu$ after the projection implementing the measurement of $x$ photons in mode $a$ consists of skew diagrams parametrized by the number of photons $q$ in modes $\iicc{0, a-1}$ written as follows:
  \[
    \frac{\mu^q}{\nu^q} = \frac\mu\nu \cap E(q)
    = \frac{\mu \wedge \varepsilon^q}{\nu \vee \eta^q}.
  \]
\end{lemma}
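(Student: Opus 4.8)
The plan is to establish the two equalities in turn. The right-hand equality, $\frac\mu\nu \cap E(q) = \frac{\mu \wedge \varepsilon^q}{\nu \vee \eta^q}$, is a direct application of \Cref{lem:intersection-skew-diagrams}: I take $\alpha/\beta = \mu/\nu$ and $\gamma/\delta = \varepsilon^q/\eta^q = E(q)$, so the intersection formula produces exactly the displayed skew diagram whenever the intersection is nonempty. The substance of the lemma therefore lies in the left-hand equality, $\mu^q/\nu^q = \frac\mu\nu \cap E(q)$, which asserts that intersecting the pre-measurement state space $\mu/\nu$ with the helper $E(q)$ isolates precisely the paths carrying $x$ photons in mode $a$ and $q$ photons in the modes $\iicc{0, a-1}$.

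To prove this, the key step is to identify $E(q)$, viewed as a subset of the full Fock space $\gen{\LL(m-1,n)}$, with the set $\{\lambda : \lambda_{a-1} = q,\ n^\lambda_a = x\}$. I would read off the componentwise bounds of $\varepsilon^q$ and $\eta^q$ from \Cref{def:measurement-helper}: at index $a-1$ both bounds equal $q$, forcing $\lambda_{a-1} = q$, and at index $a$ both equal $q+x$, forcing $\lambda_a = q+x$, i.e. $n^\lambda_a = \lambda_a - \lambda_{a-1} = x$. For every other index the bounds are slack ($\varepsilon^q_b \in \{q, n\}$ and $\eta^q_b \in \{0, q+x\}$), and I would argue they impose no genuine restriction: any nondecreasing path with $\lambda_{a-1} = q$ and $\lambda_a = q+x$ automatically satisfies $\lambda_b \le q$ for $b < a-1$ and $\lambda_b \ge q+x$ for $b > a$ by monotonicity, while $0 \le \lambda_b \le n$ always holds. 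This confirms the ``as free as possible'' design of the helper, so that $E(q)$ encodes exactly the two measurement constraints. Intersecting with $\mu/\nu$ then reproduces, by definition, the projected piece $\mu^q/\nu^q$ consisting of the paths of $\mu/\nu$ with $q$ photons to the left of mode $a$ and $x$ photons in mode $a$.

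The main obstacle I anticipate is bookkeeping rather than conceptual. First, the $a = 0$ case of \Cref{def:measurement-helper} must be handled separately: there are no modes to the left of mode $0$, only $q = 0$ is admissible, and $\eta^0 = (x, \dots, x, n)$, so I would check that the above identification still holds degenerately. Second, I must confirm that the intersection is genuinely nonempty exactly for $q$ in the stated range $0 \le q \le q_{\max} = \min\{\mu_{a-1}, \mu_a - x\}$, so that the nonempty branch of \Cref{lem:intersection-skew-diagrams} applies and the displayed skew diagram is not empty. This reduces to verifying $\nu \vee \eta^q \le \mu \wedge \varepsilon^q$ componentwise, which follows from the monotonic dependence of $\varepsilon^q$ and $\eta^q$ on $q$ together with the choice of $q_{\max}$, designed precisely so that the S-shape of height $x$ starting at height $q$ still fits inside $\mu/\nu$.
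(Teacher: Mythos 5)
Your proposal is correct and follows the same route as the paper's (much terser) proof: the second equality is \Cref{lem:intersection-skew-diagrams}, and the first holds because $E(q)$ pins down exactly $\lambda_{a-1}=q$ and $n^\lambda_a=x$ while leaving all other components unconstrained. Your explicit componentwise verification of the helper's bounds, the $a=0$ degenerate case, and the nonemptiness range of $q$ are details the paper leaves implicit (deferring the $q$-range to a remark after the lemma), but they do not change the argument.
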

\begin{proof}
  The intersection selects those lattice paths $\lambda$ in the original state space $\mu/\nu$ which also satisfy $\lambda_{a-1} = q$ and $n^\lambda_a = x$, the latter constraints imposed by the measurement helper $E(q)$. The second equality follows from \Cref{lem:intersection-skew-diagrams}.
\end{proof}

Note that the values of $q$ which yield nonempty intersections $\mu^q/\nu^q$ come from an interval $\iicc{q_{\min}, q_{\max}}$ whose endpoints depends on $\mu$ and $\nu$ and $x$. Notably, in our case of $\nu=\bot$, these are $q_{\min} = 0$ and $q_{\max} = \min\{\mu_{a-1}, \mu_a - x\}$.

\subsubsection{Contraction}
\label{app:lp-contraction}

Having the projection, we now formalize the next step, the contraction of the projected skew diagrams from \Cref{sec:contraction}. The contraction step is required to make the merge step in the next section work. Recall that the contraction removes a mode and all the photons contained therein.

\begin{definition}[contraction]
  In a poset $\LL(m-1, n)$, measuring $x$ photons in mode $a$ and then deleting it is represented by the function
  \[
    C_a^x : \bigcup_q E(q) \to \LL(m-2, n-x)
  \]
  that sends
  \[
    \begin{tikzcd}[ampersand replacement=\&]
      { (n_0, \dots, n_{a-1}, x, n_{a+1}, \dots, n_{m-1})} \\
      { (n_0, \dots, n_{a-1}, \phantom{x,\ } n_{a+1}, \dots, n_{m-1})}
      \arrow[maps to, from=1-1, to=2-1]
    \end{tikzcd}
  \]
  deleting the component $a$ in step representation and preserving the others. The $q$ in $\bigcup_qE(q)$ runs over all allowed values $\iicc{0, n-x}$.
  The domain is the subset of $\LL(m-1, n)$ where the contraction makes sense.
\end{definition}

Recalling \cref{eq:n-to-lambda}, the height representation is transformed as follows:
\begin{equation}
  \label{eq:contracted-lambda}
  (C_a^x \lambda)_b =
  \begin{cases}
    \lambda_b & \text{if}\ b \le a-1, \\
    \lambda_{b+1} - x & \text{if}\ b \ge a.
  \end{cases}
\end{equation}

What follows are two technical lemmas showing that the contraction is well-behaved with respect to the Young order: it is an order isomorphism, which means it is a bijection of sets and preserves the order structure of those sets. This is important because it is this order structure that allows us to efficiently count basis states.

\begin{lemma}[contraction is monotone]
  \label{lem:contraction-monotone}
  The contraction function $C_a^x : \bigcup_qE(q) \to \LL(m-2, n-x)$ preserves the Young order.
\end{lemma}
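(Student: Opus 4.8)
The plan is to verify order-preservation directly and componentwise, using the height-representation formula \eqref{eq:contracted-lambda} for the contraction. The Young order on $\LL(m-1,n)$ is by definition the product order, so $\lambda \le \kappa$ means precisely $\lambda_b \le \kappa_b$ for every index $b$; the target order on $\LL(m-2,n-x)$ is componentwise in exactly the same way. It therefore suffices to show that for any two paths $\lambda \le \kappa$ lying in the common domain $\bigcup_q E(q)$, the contracted paths satisfy $(C_a^x \lambda)_b \le (C_a^x \kappa)_b$ for all $b$.

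First I would split into the two cases of \eqref{eq:contracted-lambda}. For $b \le a-1$ the contraction leaves the component untouched, so $(C_a^x \lambda)_b = \lambda_b \le \kappa_b = (C_a^x \kappa)_b$ follows immediately from $\lambda \le \kappa$. For $b \ge a$ the contraction reindexes and subtracts the fixed constant $x$, giving $(C_a^x \lambda)_b = \lambda_{b+1} - x$; since $\lambda_{b+1} \le \kappa_{b+1}$ and the same $x$ is subtracted on both sides, the inequality is preserved, so $(C_a^x \lambda)_b \le (C_a^x \kappa)_b$. Combining the two cases gives $C_a^x \lambda \le C_a^x \kappa$, which is exactly monotonicity.

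The only point requiring care is that the comparison takes place in the correct poset: before comparing images I would confirm that $C_a^x \lambda$ and $C_a^x \kappa$ are genuine elements of $\LL(m-2,n-x)$, i.e.\ nondecreasing sequences terminating at height $n-x$. This is where restricting the domain to $\bigcup_q E(q)$ matters — membership there forces $n^\lambda_a = x$, so deleting mode $a$ in step representation leaves all remaining steps nonnegative and removes exactly $x$ photons. Granting this well-definedness (which really belongs to the definition of $C_a^x$ rather than to monotonicity), the argument above is immediate. I do not expect any genuine obstacle here: because the Young order is a product order and the contraction acts on the height representation merely by an index shift together with a uniform subtraction, it manifestly respects componentwise inequalities.
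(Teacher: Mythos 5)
Your proof is correct and follows essentially the same route as the paper's: a componentwise case split on $b \le a-1$ versus $b \ge a$ using \eqref{eq:contracted-lambda}, with the observation that subtracting the fixed constant $x$ preserves inequalities. Your added remark on well-definedness is a reasonable clarification (and your index $\lambda_{b+1}-x$ is the one consistent with \eqref{eq:contracted-lambda}), but the substance of the argument is identical.
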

\begin{proof}
  Take two paths $\lambda, \kappa \in \bigcup_qE(q)$, and suppose $\lambda \le \kappa$. Equivalently $\lambda_b \le \kappa_b$ for each $b$. The contraction deletes mode $a$ and all the photons contained therein. Following \Cref{eq:contracted-lambda}, for modes $b \le a-1$, we have \[(C_a^x\lambda)_b = \lambda_b \le \kappa_b = (C_a^x\kappa)_b.\] Similarly, for modes $b \ge a$, we have \[(C_a^x\lambda)_b = \lambda_{b-1} - x \le \kappa_{b-1} - x = (C_a^x\kappa)_b.\] We conclude that $C_a^x\lambda \le C_a^x\kappa$.
\end{proof}

\begin{lemma}[contraction is isomorphism]
  \label{lem:contraction-iso}
  The contraction function $C_a^x$ is an order isomorphism, i.e. a monotone bijection $\bigcup_qE(q) \isoto \LL(m-2, n-x)$ whose inverse is also monotone.
\end{lemma}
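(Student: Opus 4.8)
The plan is to produce an explicit two-sided inverse of $C_a^x$ and verify that it is monotone; together with \Cref{lem:contraction-monotone} this is exactly the statement that $C_a^x$ is an order isomorphism. Observe first that $\bigcup_q E(q)$ is precisely the set of paths in $\LL(m-1,n)$ whose step representation has $n_a = x$, and that $C_a^x$ simply deletes this $a$-th step. The natural inverse is therefore the \emph{insertion} map $I_a^x : \LL(m-2, n-x) \to \bigcup_q E(q)$ that reinserts a step of size $x$ at position $a$:
\[
  (n_0, \dots, n_{a-1}, n_{a+1}, \dots, n_{m-1}) \longmapsto (n_0, \dots, n_{a-1}, x, n_{a+1}, \dots, n_{m-1}).
\]

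First I would confirm that $I_a^x$ is well-defined, i.e. that its image indeed lies in $\bigcup_q E(q)$. Rewriting the insertion in height representation via \eqref{eq:n-to-lambda}, it sends $\sigma \in \LL(m-2, n-x)$ to the sequence $\lambda$ with $\lambda_b = \sigma_b$ for $b \le a-1$ and $\lambda_b = \sigma_{b-1} + x$ for $b \ge a$. Since $x \ge 0$ and $\sigma$ is nondecreasing, $\lambda$ is nondecreasing, and it terminates at $\lambda_{m-1} = \sigma_{m-2} + x = n$; hence $\lambda \in \LL(m-1, n)$, and by construction its step at mode $a$ equals $x$, so $\lambda \in E(q)$ for $q = \lambda_{a-1}$. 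That $I_a^x$ and $C_a^x$ are mutually inverse is then immediate in step representation: on $\bigcup_q E(q)$ the deleted component is exactly $x$, so deleting and reinserting (in either order) returns the original sequence. This already shows $C_a^x$ is a bijection.

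It then remains to check that $I_a^x$ is monotone. Suppose $\sigma \le \tau$ in $\LL(m-2, n-x)$, so $\sigma_b \le \tau_b$ for all $b$. Using the height-representation formula, for $b \le a-1$ we have $(I_a^x \sigma)_b = \sigma_b \le \tau_b = (I_a^x \tau)_b$, and for $b \ge a$ we have $(I_a^x \sigma)_b = \sigma_{b-1} + x \le \tau_{b-1} + x = (I_a^x \tau)_b$; hence $I_a^x \sigma \le I_a^x \tau$. Combining this with \Cref{lem:contraction-monotone}, the map $C_a^x$ is a monotone bijection whose inverse is also monotone, i.e. an order isomorphism, as claimed.

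I do not anticipate a real obstacle here, since the whole argument is bookkeeping carried out in the two equivalent representations. The only point needing a little care is the boundary case $a = 0$, where $E(q)$ is nonempty only for $q = 0$ and the ``modes left of $a$'' are absent: there the $b \le a-1$ branch of each formula is vacuous, $\bigcup_q E(q) = E(0)$, and one checks directly that deleting and reinserting the leading step $x$ gives mutually inverse monotone maps onto $\LL(m-2, n-x)$.
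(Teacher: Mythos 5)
Your proposal is correct and takes essentially the same route as the paper's proof: both exhibit the insertion map as an explicit inverse of $C_a^x$, check that it lands in $\bigcup_q E(q)$ (the paper phrases this as surjectivity via $E(q=\lambda_{a-1})$), and verify monotonicity of both directions by the componentwise computation already used in \Cref{lem:contraction-monotone}. Your extra care with the height-representation formulas and the $a=0$ boundary case is a harmless refinement of the same argument.
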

\begin{proof}
  The function $C_a^x$ is clearly injective: it forgets the $a$-th component of a lattice path in the step representation, but that component is, by construction, always $n_a = x$ for the paths contained in the domain $\bigcup_qE(q)$. Thus it forgets no information, and is clearly invertible, at least from its image, written $\bigcup_q\widetilde E(q)$, where each $\widetilde E(q) = C_a^x(E(q))$. Writing its inverse $(C_a^x)^{-1} : \bigcup_q\widetilde E(q) \to \bigcup_qE(q)$ is easy in the step representation: simply insert a component of value $x$ at position $a$ to undo the action of $C_a^x$. We know from \Cref{lem:contraction-monotone} that $C_a^x$ preserves order, and by a similar argument, we see that $(C_a^x)^{-1}$ does as well.

  What remains to be shown is that $C_a^x$ is surjective, i.e. that $\bigcup_q\widetilde E(q) = \LL(m-2, n-x)$. Take an arbitrary path $\lambda \in \LL(m-2, n-x)$. We show that we can act by the inverse $(C_a^x)^{-1}$. Inserting a component $x$ at position $a$ into $\vec n^\lambda$ is clearly possible and sends $\lambda$ to $E(q = \lambda_{a-1})$. Finally, we note that $E(q=\lambda_{a-1})$ is indeed a part of the domain $\bigcup_qE(q)$: recall that $q$ ranges over $\iicc{0,n-x}$, as does $\lambda_{a-1}$.
  We conclude that $C_a^x$ is a bijection of sets, and an order isomorphism.
\end{proof}

\subsubsection{Merge}
\label{app:lp-merge}

The final step is to merge the contracted skew diagrams into a single skew diagram, and indeed a downset, containing exactly the correct paths. We fill in the details omitted in \Cref{sec:merge}. Skew diagrams are intervals in the poset, and merging them means taking their union as sets:
\[
  \bigcup_q C_a^x \left(\frac{\mu^q}{\nu^q}\right).
\]

It is not entirely trivial to see that such a union is also an interval. Indeed, this is not the case when there is a gap between the intervals, i.e. there exist elements between them. To see the intuition, we use an example of the real line with its usual order: Suppose we have two intervals $[a,b]$ and $[c,d]$ s.t. $b \le c$. If $b < c$, then there exists some $x \in (b,c)$ that is not contained in either interval, nor in their union $[a,b] \cup [c,d]$. This means that the union is not itself an interval.

One can easily find examples to show that, without contraction, the union $\bigcup_qE(q)$ (and therefore $\bigcup_q \frac{\mu^q}{\nu^q}$) is not generally an interval. However, the contraction $C_a^x$ removes the gaps between consecutive helper diagrams, i.e. intervals $\widetilde E(q)$ and \mbox{$\widetilde E(q+1)$}, where we recall the notation $\widetilde E(q) \equiv C_a^x(E(q))$. The strategy to show this is to study the boundary between a contracted helper $\widetilde E(q+1)$ and a hypothetical gap between $\widetilde E(q)$ and $\widetilde E(q+1)$, showing that such gap cannot exist.

\paragraph{Boundaries of intervals}

To do this, define a \emph{lower boundary} of a subset~$S$ of a poset~$P$ as
\[ \partial S = \{ (\lambda, \kappa) : \kappa \in S, \lambda \in P \setminus S, \lambda \lessdot \kappa \}. \]
If $S$ is a contiguous region in the Hasse diagram, then $\partial S$ can be seen as a cut between $S$ and the region outside (and lower than) $S$.
We can dually define an \emph{upper boundary}, but the reasoning will be analogous, so it is enough to focus on $\partial S$.

The strategy is to show that for a contracted interval, the lower point~$\lambda$ of the boundary belongs to another contracted interval. We start with a general statement:

\begin{lemma}
  \label{lem:boundary-of-interval}
  Let $\alpha/\beta$ be some skew diagram, and let $(\lambda, \kappa) \in \partial(\alpha/\beta)$, then there exists a component~$b$ such that
  \[
    \lambda_c =
    \begin{cases}
      \beta_b - 1 & \text{if}\ c = b, \\
      \kappa_c & \text{if}\ c \ne b.
    \end{cases}
  \]
\end{lemma}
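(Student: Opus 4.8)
The plan is to read off the distinguished component $b$ directly from the covering relation $\lambda \lessdot \kappa$, and then to pin down its value by locating exactly where the membership $\lambda \in \alpha/\beta$ must fail. First I would invoke the description of covers in the Young order recalled earlier in this appendix: since $\lambda \lessdot \kappa$ means $\kappa$ has exactly one more box than $\lambda$, in height representation there is a unique component $b$ with $\kappa_b = \lambda_b + 1$, while $\lambda_c = \kappa_c$ for every $c \ne b$. This $b$ is the witness claimed by the lemma, and it already establishes the second case $\lambda_c = \kappa_c$ for $c \ne b$; all that remains is to show $\lambda_b = \beta_b - 1$.

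The key step is to show the failure of $\lambda \in \alpha/\beta$ is forced to happen at the coordinate $b$ and to come from the lower bound $\beta$ rather than the upper bound $\alpha$. Since $\lambda \lessdot \kappa$ and $\kappa \le \alpha$, transitivity gives $\lambda \le \alpha$, so the upper bound is automatically satisfied; hence $\lambda \notin \alpha/\beta$ can only mean $\lambda \not\ge \beta$, i.e. $\lambda_c < \beta_c$ for some coordinate $c$. For every $c \ne b$ I have $\lambda_c = \kappa_c \ge \beta_c$ because $\kappa \ge \beta$, so no violation can occur away from $b$; therefore the violating coordinate is $b$ itself, giving $\lambda_b < \beta_b$. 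On the other hand $\lambda_b = \kappa_b - 1 \ge \beta_b - 1$, again using $\kappa \ge \beta$. Squeezing the two bounds yields $\lambda_b = \beta_b - 1$, which is precisely the first case of the claimed formula.

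I do not expect a genuine obstacle here; the argument is an elementary order-theoretic squeeze once the covering relation is handled correctly. The only point requiring care is the very first claim, that $\lambda \lessdot \kappa$ perturbs exactly one height coordinate (by $+1$); this is where the nondecreasing lattice-path constraint is implicitly used, but it is exactly the standard single-box description of covers in Young's lattice already stated in this appendix, so it can be cited rather than re-derived. The dual statement for the upper boundary follows by the same reasoning with the roles of $\alpha$ and $\beta$ exchanged, so it suffices to treat $\partial(\alpha/\beta)$ as above.
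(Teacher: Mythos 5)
Your proof is correct and follows essentially the same route as the paper's: identify the single component $b$ where the cover $\lambda \lessdot \kappa$ differs, use $\lambda \le \kappa \le \alpha$ to rule out a violation of the upper bound, localize the violation of $\lambda \ge \beta$ to $c = b$, and squeeze $\beta_b - 1 \le \lambda_b < \beta_b$. No gaps.
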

\begin{proof}
  Recall that if $\lambda \lessdot \kappa$, then geometrically $\lambda$ has one fewer box than $\kappa$ as a Young diagram. In terms of the height representation, there must be exactly one component, here $b$, where they differ: $\lambda_b = \kappa_b - 1$.
  By definition of the boundary, \mbox{$\kappa \in \alpha/\beta$}, thus $\kappa_c \in \iicc{\beta_c, \alpha_c}$ for each $c$. On the other hand, $\lambda \notin \alpha/\beta$, so there must exist at least one $c$ for which $\lambda_c \notin \iicc{\beta_c, \alpha_c}$, which implies $\lambda_c < \beta_c$ since $\lambda \le \kappa$.
  Since $\lambda$ and $\kappa$ must agree on all but one element, the component $c$ where this happens is $c=b$.
  Together, this implies that $\kappa_b = \beta_b$ and $\lambda_b = \beta_b-1$.
\end{proof}

\begin{lemma}
  \label{lemma:boundary-in-helper}
  Let $(\lambda, \kappa) \in \partial C_a^x(E(q+1))$. Then the value $b$ from \Cref{lem:boundary-of-interval} is $b=a-1$.
\end{lemma}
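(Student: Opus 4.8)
The plan is to prove the claim by elimination. \Cref{lem:boundary-of-interval}, applied to the skew diagram $\widetilde E(q+1) = C_a^x(E(q+1))$, already guarantees that the distinguished component $b$ satisfies $\lambda_b = \beta_b - 1$ and $\kappa_b = \beta_b$, where $\beta = C_a^x\eta^{q+1}$ is the lower path of $\widetilde E(q+1)$. Since $(\lambda,\kappa)$ is a covering pair in the ambient poset $\LL(m-2, n-x)$, the element $\lambda$ must itself be a bona fide lattice path; I would use exactly this validity requirement to rule out every position except $b = a-1$.

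First I would compute $\beta$ explicitly by feeding $\eta^{q+1}$ from \Cref{def:measurement-helper} through the contraction formula \eqref{eq:contracted-lambda}. A short calculation tracking the index shift and the $-x$ offset on the three blocks of $\eta^{q+1}$ gives
\[
\beta_b =
\begin{cases}
0 & 0 \le b \le a-2, \\
q+1 & a-1 \le b \le m-3, \\
n-x & b = m-2.
\end{cases}
\]
The key structural feature is that the value $q+1$ first appears precisely at position $a-1$: this is the contracted image of the constraint $\lambda_{a-1} = q+1$ that $\widetilde E(q+1)$ enforces.

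Next I would eliminate the remaining positions block by block, in each case showing that decrementing there would force $\lambda$ out of $\LL(m-2, n-x)$. For $b \in [0, a-2]$ we have $\beta_b = 0$, so $\lambda_b = \beta_b - 1 = -1$, impossible for a nonnegative height. For $b \in [a, m-3]$ we have $\beta_{b-1} = \beta_b = q+1$; since $\lambda$ and $\kappa$ agree off $b$ and $\kappa \in \widetilde E(q+1)$ forces $\kappa_{b-1} \ge \beta_{b-1}$, we get $\lambda_{b-1} = \kappa_{b-1} \ge q+1 > q = \lambda_b$, so $\lambda$ fails to be nondecreasing. For $b = m-2$ the height is pinned at the endpoint value $n-x$ by both boundary paths, so decrementing it violates the requirement that every path terminates at height $n-x$. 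As $b = a-1$ is the only surviving position, we conclude $b = a-1$.

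The only real obstacle is the bookkeeping in computing $\beta$: one must carefully apply \eqref{eq:contracted-lambda} to each of the three blocks of $\eta^{q+1}$ and keep the index shift straight. Once $\beta$ is correct, the three exclusions follow immediately from nonnegativity, the nondecreasing property, and the fixed endpoint of lattice paths, with no further order theory beyond \Cref{lem:boundary-of-interval} required.
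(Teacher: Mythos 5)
Your proof is correct and follows essentially the same route as the paper's: nonnegativity of heights rules out $b\le a-2$, and the order/endpoint constraints on $\lambda$ as a valid path in $\LL(m-2,n-x)$ rule out $b\ge a$. The only difference is that you explicitly compute $\beta=C_a^x\eta^{q+1}$ and write out the eliminations that the paper relegates to a table with the derivations ``omitted as simple but tedious.''
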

\begin{proof}
  First, we show that $b \ge a-1$ which follows from \Cref{lem:boundary-of-interval}: we have $\lambda_b = C_a^x(\eta^{q+1})_b - 1$, and by \Cref{def:measurement-helper}, for $c \in \iicc{0, a-2}$, the bottom path has components $\eta^{q+1}_c = C_a^x(\eta^{q+1})_c = 0$. The value $\lambda_b$ must be nonnegative, and thus $b \ge a-1$.

  Next, we show that $b \le a-1$. By case-by-case analysis, we show that $(\lambda, \kappa)$ in the boundary of a contracted helper diagram can not exist otherwise. We display the cases in the table below, omitting their individual derivation which is simple but tedious. Each case shows the allowed values of $\lambda_c$ depending on the choice of $b$ and $c \ne b$.
  \begin{center}
    \begin{tabular}{ |r|c|c| }
      \hline
      & $c < b$ & $c > b$ \\
      \hline
      $c \in \iicc{0, a-2}$ & $\iicc{0,q}$ & $\{ q, q+1 \}$ \\
      $c = a-1$ & $\varnothing$ & $\{ q+1 \}$ \\
      $c \in \iicc{a, m-3}$ & $\varnothing$ & $\iicc{q+1, n-x}$ \\
      $c = m-2$ & $\varnothing$ & $\{ n-x \}$ \\
      \hline
    \end{tabular}
  \end{center}
  Cells which violate order requirements are shown by $\varnothing$ to signify that they represent cases with no possible value for~$\lambda_c$. Conclude that the only way to have a value of $\lambda_c$ for all $c$ is if $b \le a-1$.
\end{proof}

\begin{lemma}
  \label{lem:boundary-connects-intervals}
  Let $(\lambda, \kappa) \in \partial C_a^x(E(q+1))$. Then $\lambda \in C_a^x(E(q))$.
\end{lemma}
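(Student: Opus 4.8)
The plan is to turn the claim into a finite list of coordinatewise inequalities. By \Cref{lem:contraction-iso} the map $C_a^x$ is an order isomorphism, so it carries the interval $E(q) = \varepsilon^q/\eta^q$ to the interval $C_a^x(E(q)) = C_a^x(\eta^q)/C_a^x(\varepsilon^q)$ with the endpoints simply contracted. Membership $\lambda \in C_a^x(E(q))$ is therefore equivalent to $(C_a^x\eta^q)_c \le \lambda_c \le (C_a^x\varepsilon^q)_c$ for every coordinate $c$, and it suffices to verify these.

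First I would describe $\lambda$ and $\kappa$ explicitly. From \Cref{lem:boundary-of-interval} together with \Cref{lemma:boundary-in-helper}, the covering pair differs only in coordinate $b = a-1$, so $\lambda_c = \kappa_c$ for all $c \ne a-1$ and $\lambda_{a-1} = (C_a^x\eta^{q+1})_{a-1} - 1$. The one computation needed is to evaluate the contracted endpoints: applying \Cref{eq:contracted-lambda} to the paths of \Cref{def:measurement-helper} gives piecewise-constant vectors, in particular $(C_a^x\eta^{q+1})_{a-1} = q+1$, hence $\lambda_{a-1} = q$. I would tabulate $C_a^x\eta^q$, $C_a^x\varepsilon^q$, $C_a^x\eta^{q+1}$, $C_a^x\varepsilon^{q+1}$ once; this records that $C_a^x(E(q))$ pins coordinate $a-1$ to the single value $q$, allows the range $q \le \lambda_c \le n-x$ for $a \le c \le m-3$, and pins $c = m-2$ to $n-x$, while $C_a^x(E(q+1))$ shifts the $a-1$ pin to $q+1$ and the corresponding lower bound in the middle range to $q+1$.

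With these in hand the verification splits by coordinate. At $c = a-1$ we have $\lambda_{a-1} = q$, exactly the pinned value of $C_a^x(E(q))$. For $c \ge a$ I would use $\lambda_c = \kappa_c$ together with $\kappa \in C_a^x(E(q+1))$: the $\kappa$-bounds there are contained in the bounds required by $C_a^x(E(q))$ (the lower bound only tightens from $q$ to $q+1$, and the upper bounds agree), so these coordinates pass automatically. The delicate range is $0 \le c \le a-2$, where $\kappa \in C_a^x(E(q+1))$ only yields $\lambda_c = \kappa_c \le q+1$, one more than the bound $\lambda_c \le q$ demanded by $C_a^x(E(q))$; closing this off-by-one gap is the main obstacle. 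I would resolve it not from the helper bounds but from the path itself: $\lambda$ is a genuine height representation, hence nondecreasing, so $\lambda_c \le \lambda_{a-1} = q$ for every $c \le a-1$, while $\lambda_c \ge 0$ is automatic. Assembling the ranges gives $\lambda \in C_a^x(E(q))$. Throughout I would note that $a > 0$, so that $b = a-1$ is well-defined; for $a = 0$ only $E(0)$ exists and there is nothing to merge.
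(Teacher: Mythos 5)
Your proof is correct and follows essentially the same route as the paper's: use \Cref{lem:boundary-of-interval} and \Cref{lemma:boundary-in-helper} to pin down $\lambda_{a-1} = q$ and $\lambda_c = \kappa_c$ elsewhere, then verify coordinatewise membership in $C_a^x(E(q))$ against the contracted helper bounds. You are in fact more explicit than the paper at the coordinates $c \le a-2$, where the bound $\kappa \in C_a^x(E(q+1))$ alone only gives $\lambda_c \le q+1$ and one genuinely needs the monotonicity of the height representation $\lambda$ (which you invoke) to close the gap to $\lambda_c \le q$; the paper asserts the interval $\iicc{0,q}$ there without spelling out this step.
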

\begin{proof}
  This follows from \Cref{def:measurement-helper} and \Cref{lem:boundary-of-interval,lemma:boundary-in-helper}. We know that
  \[
    \lambda_{a-1} = \widetilde\eta^{q+1}_{a-1} - 1 = q = \widetilde\eta^q_{a-1},
  \]
  where we denote $\widetilde\alpha \equiv C_a^x(\alpha)$ for paths $\alpha$. For all $c \ne a-1$, we have $\lambda_c = \kappa_c \in \iicc{\widetilde\eta^{q+1}_c, \widetilde\varepsilon^{q+1}_c}$. For $c < a-1$, this is $\iicc{0, q}$, and for $c > a-1$, it is $\iicc{q+1, n-x}$. We conclude that $\lambda \in C_a^x(E(q))$.
\end{proof}

\begin{lemma}[union of consecutive intervals]
  \label{lem:union-of-cont-int}
  The union of consecutive contracted intervals $\widetilde E(q)$ and $\widetilde E(q+1)$ is
  \[
    \widetilde E(q) \cup \widetilde E(q+1) = \frac{\widetilde\varepsilon^{q+1}}{\widetilde\eta^q}.
  \]
\end{lemma}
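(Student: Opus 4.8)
The plan is to prove the two set inclusions separately, the reverse one (that the union leaves no gap) carrying the real content. First I would record the shapes of the contracted boundary paths. By \Cref{lem:contraction-iso} the contraction $C_a^x$ is an order isomorphism, so it carries the interval $E(q) = \varepsilon^q/\eta^q$ to the interval $\widetilde E(q) = \widetilde\varepsilon^q/\widetilde\eta^q$, and likewise for $q+1$; thus each contracted helper is genuinely a skew diagram whose endpoints I can read off from \Cref{def:measurement-helper} through the contraction formula \eqref{eq:contracted-lambda}. Carrying out this computation, the decisive observation is that the two diagrams differ, in essence, only in the value they force at the single pivot component $a-1$: the bottom paths satisfy $\widetilde\eta^q_{a-1} = q$ and $\widetilde\eta^{q+1}_{a-1} = q+1$, the top paths satisfy $\widetilde\varepsilon^q_{a-1} = q$ and $\widetilde\varepsilon^{q+1}_{a-1} = q+1$, while at every other component the relations $\widetilde\eta^q \le \widetilde\eta^{q+1}$ and $\widetilde\varepsilon^q \le \widetilde\varepsilon^{q+1}$ hold. (Here $a \ge 1$, since for $a=0$ only $E(0)$ exists and there is no consecutive pair.)

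For the easy inclusion $\widetilde E(q) \cup \widetilde E(q+1) \subseteq \widetilde\varepsilon^{q+1}/\widetilde\eta^q$, I would simply compare endpoints: the bottom of the target is $\widetilde\eta^q$, which is the bottom of $\widetilde E(q)$ and, being $\le \widetilde\eta^{q+1}$, lies below $\widetilde E(q+1)$; dually the top $\widetilde\varepsilon^{q+1}$ dominates $\widetilde\varepsilon^q$ and equals the top of $\widetilde E(q+1)$. Hence every element of either piece lies between $\widetilde\eta^q$ and $\widetilde\varepsilon^{q+1}$.

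The reverse inclusion is the crux, and here I would exploit that every height representation is a nondecreasing sequence. Take any $\lambda$ with $\widetilde\eta^q \le \lambda \le \widetilde\varepsilon^{q+1}$. The endpoint values at the pivot force $\lambda_{a-1} \in \{q, q+1\}$, so I split into two cases. If $\lambda_{a-1} = q$, then monotonicity gives $\lambda_c \le \lambda_{a-1} = q$ for every $c \le a-1$, which is exactly the extra upper bound needed to conclude $\lambda \le \widetilde\varepsilon^q$ (the constraint at the remaining components comes free from $\lambda \le \widetilde\varepsilon^{q+1}$), hence $\lambda \in \widetilde E(q)$. If instead $\lambda_{a-1} = q+1$, then $\lambda_c \ge \lambda_{a-1} = q+1$ for every $c \ge a-1$, which furnishes the missing lower bound $\lambda \ge \widetilde\eta^{q+1}$ (the terminal component being pinned at $n-x$ by the target interval itself), hence $\lambda \in \widetilde E(q+1)$. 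Either way $\lambda$ lies in the union, closing the inclusion. This dichotomy at the pivot $a-1$ is precisely the gap-free phenomenon isolated in \Cref{lem:boundary-connects-intervals}, which gives an alternative route: it shows that the element immediately below the lower boundary of $\widetilde E(q+1)$ already lies in $\widetilde E(q)$, so no path of the target interval can fall strictly between the two pieces.

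I expect the main obstacle to be purely the bookkeeping of the contracted endpoints — keeping straight which components of $\widetilde\eta^q$ and $\widetilde\varepsilon^{q+1}$ equal $0$, $q$, $q+1$, or $n-x$, and verifying that the pivot genuinely lands at component $a-1$ after mode $a$ is deleted — together with the boundary cases ($a=1$, and the terminal component fixed at $n-x$). None of this is conceptually hard once the nondecreasing property of height representations is brought to bear.
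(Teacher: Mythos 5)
Your proof is correct, but it takes a genuinely different route from the paper. The paper first records that the union has maximum $\widetilde\varepsilon^{q+1}$ and minimum $\widetilde\eta^q$, and then rules out a gap via the Hasse diagram: any $\lambda$ strictly between the two intervals could be chosen covered by some $\kappa \in \widetilde E(q+1)$, i.e.\ $(\lambda,\kappa) \in \partial \widetilde E(q+1)$, whereupon \Cref{lem:boundary-connects-intervals} (built on \Cref{lem:boundary-of-interval,lemma:boundary-in-helper}) forces $\lambda \in \widetilde E(q)$. You instead prove the two inclusions directly, with the reverse inclusion handled by the observation that $\widetilde\eta^q_{a-1} = q$ and $\widetilde\varepsilon^{q+1}_{a-1} = q+1$ pin $\lambda_{a-1} \in \{q, q+1\}$, after which the nondecreasing property of the height representation propagates that value into the missing upper bound $\lambda \le \widetilde\varepsilon^q$ (case $\lambda_{a-1}=q$) or the missing lower bound $\lambda \ge \widetilde\eta^{q+1}$ (case $\lambda_{a-1}=q+1$). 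Your computation of the contracted endpoints checks out ($\widetilde\varepsilon^q = (q,\dots,q,n{-}x,\dots,n{-}x)$ with $q$'s on $\iicc{0,a-1}$; $\widetilde\eta^q$ with zeros on $\iicc{0,a-2}$, $q$'s on $\iicc{a-1,m-3}$, and $n{-}x$ at $m-2$), and both cases close as you describe. Your argument is more elementary and self-contained: it needs neither the covering relation nor the boundary formalism, and it exhibits every element of the target interval as a member of one of the two pieces, sidestepping the paper's slightly delicate claim that a hypothetical gap element can be chosen adjacent to $\widetilde E(q+1)$. What the paper's route buys is a factorization of the component-by-component bookkeeping into reusable boundary lemmas. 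One small caveat: your statement that the two diagrams ``differ only at the pivot'' is loose — the top paths differ on all of $\iicc{0,a-1}$ and the bottom paths on $\iicc{a-1,m-3}$ — but your case analysis handles exactly those components via monotonicity, so this does not affect correctness.
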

\begin{proof}
  This requires three facts, two of which are immediate:
  First, the set $\widetilde E(q) \cup \widetilde E(q+1)$ has a maximum element $\widetilde\varepsilon^{q+1}$, i.e. the top element of the interval $\widetilde E(q+1)$.
  Similarly, it has a minimum $\widetilde\eta^{q}$, i.e. the bottom element of $\widetilde E(q)$. This follows from the construction of (contracted) helper diagrams.

  The final step is to show that there are no elements between the intervals $\widetilde E(q)$ and \mbox{$\widetilde E(q+1)$}. If $\lambda$ is between them, there exist $\alpha \in \widetilde E(q)$ and $\beta \in \widetilde E(q+1)$ s.t. $\alpha < \lambda < \beta$. It is possible to find such $\lambda$ that is adjacent to $\widetilde E(q+1)$ in the Hasse diagram, i.e. it is covered by a $\kappa \in \widetilde E(q+1)$. This is equivalent to saying $(\lambda, \kappa) \in \partial \widetilde E(q+1)$. From \Cref{lem:boundary-connects-intervals}, it follows that $\lambda \in \widetilde E(q)$. We conclude that there is no gap between the two consecutive contracted helper intervals, and thus their union is the interval $\widetilde\varepsilon^{q+1}/\widetilde\eta^q$.
\end{proof}

\subsubsection{Full description of measurement}

The result of \Cref{lem:union-of-cont-int} clearly extends to the whole union
\[
  \bigcup_q \widetilde E(q)
  = \frac{\widetilde\varepsilon^{q_{\max}}}{\widetilde\eta^{q_{\min}}}
  = \frac{\widetilde\varepsilon^{n-x}}{\widetilde\eta^{0}}.
\]
Furthermore, it extends to the union of contracted intersections $\bigcup_q C_a^x(\mu/\nu \cap E(q))$: each intersection is taken with the same skew diagram $\mu/\nu$, and the effect is simply a further restriction on modes away from $a-1$.

Finally, having proved all that is needed for \Cref{thm:lp-measurement} in the preceding sections, we recall its statement below:

\noindent
{\normalfont\bfseries Theorem\ \ref*{thm:lp-measurement}\ {\normalfont (measurement)}.}
Starting with a cumulative space $\gen{\downset\mu}$, the state space left after measuring $x$ photons in mode $a$ is a cumulative space $\negen{\downset{\mu^{(n_a = x)}}}$ where the maximal path $\mu^{(n_a = x)}$ is the contraction of $\mu^{q_{\max}}$. The latter is the maximum path of the skew diagram corresponding to $q = q_{\max} = \min\{ \mu_{a-1}, \mu_a - x \}$.

\subsection{Coupling new modes in initial $L_1$}
\label{sec:coupling-new-modes}

In \Cref{sec:tracking-state-space}, we mention a subtlety in applying beamsplitters from the initial loop $L_1$. The issue is that the initial loop couples a PCS to a new, previously unseen mode, with new photons; thus the rule from \Cref{thm:PCS-BS-evolution} does not apply.

Though it is likely not possible in general to couple an additional mode to an arbitrary mode of a PCS, it is possible in the special case of coupling the maximum mode of a PCS to a new mode via a beamsplitter of the initial $L_1$. Suppose we start with $m$ modes and $n$ photons. All we need to do is expand the lattice rectangle width by one mode, height by the number $x$ of new photons in the added mode, and to set the height of $\mu$ in the final two modes to the new total number of photons $n+x$:
\[
  (\mu_0, \dots, \mu_{m-2}, n) \mapsto (\mu_0, \dots, \mu_{m-2}, n+x, n+x).
\]

\subsection{Counting lattice paths}
\label{app:counting-lp}

\begin{figure}[t]
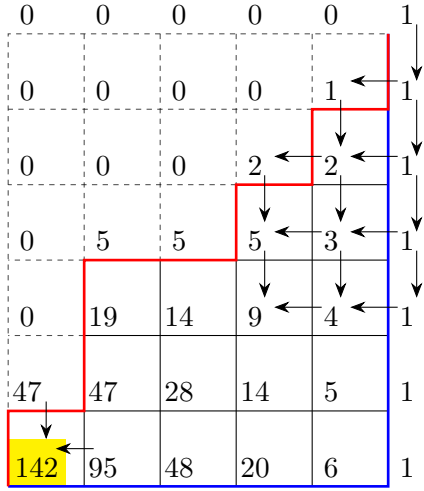

  \ctikzfig{dp_lp_counting4}
  \caption{A schematic explanation of the counting algorithm for an example skew diagram $\mu/\nu$ (in fact, a downset, i.e. with $\nu$ the bottom path). To each point $(x,y)$ in the lattice rectangle, we associate a number $A_{x,y}$. This is the number of allowed lattice paths within a part of the the skew diagram that start at point $(x,y)$ and terminate, as usual, in the top right corner $(W,H)$. The array $A$ is what \protect\Cref{alg:dp-lp-counting} computes. The total number of paths in the diagram is found at point $(0,0)$, i.e. $|\mu/\nu| = A_{0,0}$.}
  \label{fig:dp-lp-counting}
\end{figure}

\begin{myalgorithm}{%
    Pseudocode of the dynamic programming algorithm to count the lattice paths in a given skew diagram. For notational simplicity, we allow the indices $i,j$ of $A_{i,j}$ to be out of bounds ($i > m-1$ or $j > n$) and we assume $A_{i,j} = 0$ in those cases.
  }{\label{fig:pseudocode-dp-lp-counting}}
  \KwIn{Lattice paths $\nu \le \mu$ identifying a skew diagram $\mu/\nu$.}
  \KwOut{Cardinality $|\mu/\nu|$.}

  $m \gets$ length of $\mu$; $n \gets \mu_{m-1}$\;

  \lIf{$m = 1$ or $n = 0$}{
    \KwRet{0}
  }

  Initialize array $A = (A_{i,j})_{i,j}$ with all zeros, for $i=0,\dots,m-1$ and $j = 0, \dots, n$ \;

  $A_{m-1, n} \gets 1$\;

  \For{$x \gets m-1, m-2, \dots, 0$}{
    \lIf{$x = 0$}{$y_{\min} \gets 0$}
    \lElse{$y_{\min} \gets \nu_{x-1}$}

    \For{$y \gets \mu_x, \mu_x - 1, \dots, y_{\min}$}{
      $A_{x,y} \gets A_{x+1, y} + A_{x, y+1}$ where $A_{i,j} = 0$ if $i$ or $j$ are out of bounds\;
    }
  }

  \KwRet{$A_{0,0}$}

  \caption{count\_interval}
  \label{alg:dp-lp-counting}
\end{myalgorithm}

We give a dynamic programming algorithm to efficiently count a downset $\downset\mu$ or a skew diagram $\mu/\nu$. To formulate it, we recall that a lattice path can be understood as a sequence of steps right $(+1,0)$ or up $(0,+1)$.

Note that given a point $(x,y)$ in the lattice rectangle $\iicc{0,W} \times \iicc{0,H}$, the number of paths passing through this point is
\[ |\LL(x,y)| \cdot |\LL(W-x, H-y)|. \]
This is the number of paths that reach $(x,y)$, times the number of paths that start at $(x,y)$ and reach $(W,H)$. To count the paths in $\mu/\nu$, we restrict the above sets to regions allowed by the skew diagram.

Together, the two ideas above give us an efficient algorithm to compute the number of lattice paths in a skew diagram $\mu/\nu$; see example in \Cref{fig:dp-lp-counting} and pseudocode \Cref{alg:dp-lp-counting} in \Cref{fig:pseudocode-dp-lp-counting}.

To each point $(x,y)$ of the lattice rectangle $\iicc{0,W} \times \iicc{0,H}$, we associate an integer $A_{x,y}$ that is the number of paths starting at that point, ending at $(W,H)$, and staying within the region $\mu/\nu$. These form an array $A$, shown in \Cref{fig:dp-lp-counting}. From $(x,y)$, a path can go up to $(x,y+1)$ or right to $(x+1, y)$, and at each of those points, we have a corresponding value of $A$ which counts the (disjoint) set of paths starting there. This allows us to count paths starting at $(x,y)$:
\[
  A_{x,y} = A_{x+1, y} + A_{x, y+1}.
\]
Above, by abuse of notation, we assume that $A_{x,y} = 0$ if $(x,y) \notin \iicc{0,W} \times \iicc{0,H}$. Points outside of $\mu/\nu$ also get $A_{x,y} = 0$. Finally, we conventionally define $A_{W,H} = 1$: there is a single (empty) path from $(W,H)$ to itself.

To evaluate all entries of $A$, we start with $A_{W,H} = 1$ and progress through the array towards $(0,0)$, evaluating an entry once its dependencies are available, as shown in \Cref{fig:dp-lp-counting} and \Cref{alg:dp-lp-counting}. The final value is
$A_{0,0} = |\mu/\nu|$.

\section{Theoretical complexity of the progressive sampling algorithm}
\label{sec:theor-compl-progr}

In \Cref{sec:comp-with-other}, we compared the asymptotics of our Algorithm~PS to other algorithms in the literature. Here, we provide the derivation. This will concern our reference implementation of the algorithm, found in the companion Git repository.\footref{fn:git}

Since the algorithm is dynamic in nature, in that the computations for component $P_{a+1}$ depends on the number of photons measured after component $P_a$, it would be difficult to estimate its theoretical time complexity in the general case.
Thus here we focus on one kind of circuit of interest, the power-law architecture with loop lengths $\vec\ell = (1, \ell, \ell^2, \dots \ell^{\Lambda-1})$. Note that in this case, the number of modes per components of the progressive decomposition, given by \eqref{eq:relevant-modes}, is:
\begin{subequations}
  \begin{align}
    \nonumber
    R & = 1 + \sum_{i=1}^\Lambda \ell^{i-1} \\
    \label{eq:rel-modes-power-law-real}
      & =
        \begin{cases}
          1 + \Lambda & \text{if}\ \ell = 1,\\
          1 + \frac{\ell^{\Lambda} - 1}{\ell - 1} & \text{if}\ \ell > 1,
        \end{cases} \\
    \label{eq:rel-modes-power-asymp}
      & =
        \begin{cases}
          \Theta(\Lambda) & \text{if}\ \ell = 1,\\
          \Theta(\ell^{\Lambda-1}) & \text{if}\ \ell > 1.
        \end{cases}
  \end{align}
\end{subequations}
We take a total number of modes $m \ge R$ (even $m \gg R$ generally), and we take an input state $\ket{\vec n} = \ket{1,0,1,0,\dots}$ that we used in \Cref{sec:quantifying-complexity}. The analysis can be easily generalized to input $\ket{\vec n} = \ket{n_0, n_1, \dots, n_{m-1}}$ with $n_a \le 1$ for all modes $a$, by bounding the photon numbers by $R$ instead of $R/2$ in the following.

\subsection{Time to compute transition amplitudes}

\subsubsection{Single beamsplitter}

In our reference implementation, we use the following equation to compute the transition amplitudes for a beamsplitter $B_{ab}(\theta)$ between modes $a \ne b$ with beamsplitter angle $\theta$:
\begin{widetext}
  \begin{equation}
    \label{eq:BS-amplitude}
    [ B_{ab}(\theta) ]_{n_a, n_b}^{N_a, N_b} = \sqrt{n_a!\, n_b!\, N_a!\, N_b!} \sum_{k = \max\{ 0, N_b - n_a \}}^{\min\{ n_a, N_a \}} \frac{(-1)^{n_a - k + 1} (\cos \theta)^{n_b + k - l} (\sin \theta)^{n_a -k + l}}{k!\, (n_a - k)!\, l!\, (n_b - l)!},
  \end{equation}
\end{widetext}
where $k$ represents the number of photons starting in mode $a$ and ending in the same mode $a$, and where $l = N_a - k$, the number of photons transitioning from $b$ to $a$. Furthermore $n_a,n_b$ are the input, and $N_a, N_b$ the output numbers of \mbox{photons}.\footnote{To get (\ref{eq:BS-amplitude}) from (\ref{eq:BS-on-modes}), represent photons in modes $a,b$ by creation operators $\hat a_a^\dagger, \hat a_b^\dagger$ that transform  as $\hat a_c^\dagger \mapsto \sum_{d \in \{ a, b \}} B(\theta)_{dc} \hat a^\dagger_d$, and extend this to $(\hat a^\dagger_a)^{n_a} (\hat a^\dagger_b)^{n_b}$.}

Assuming $\Theta(1)$ time arithmetic\footnote{including using $q! = \exp\ln q! = \exp \sum_{i=1}^q \log i$ to efficiently compute factorials, as well as precomputing the first several values of $\log q!$}, the complexity of computing one transition amplitude from (\ref{eq:BS-amplitude}) is just the number of terms in the summation, i.e. the number of possible values of $k \in \iicc{\max\{0, N_b - na\}, \min\{ n_a, N_a \}}$. We show the size of this interval in the table below:
\begin{center}
  \begin{tabular}{|c|c c|}
    \hline
    & $N_b \le n_a$ & $N_b > n_a$ \\
    \hline
    $n_a \le N_a$ & $n_a + 1$ & $2 n_a - N_b + 1$ \\
    $n_a > N_a$ & $N_a + 1$ & $n_a + N_a - N_b + 1$ \\
    \hline
  \end{tabular}
\end{center}
Note that $\tilde n \deq n_a + n_b = N_a + N_b$, and $N_a = k + l$. It is clear that in each case the number can be bounded by $2\tilde n + 1$.

\paragraph{Simplification: beamsplitters see all}
To compute evolution of the state though a beamsplitter, we need to apply (\ref{eq:BS-amplitude}) for all possible values of $n_a, n_b, N_a, N_b$. We can exactly predict the constraints on these using the lattice path formalism from \Cref{sec:a-new-formalism}, however, this would be too complicated for an asymptotic analysis. Instead, we simplify and assume that \emph{every beamsplitter of a component $P_c$ may interact with all the photons in that component.} This will lead to an upper bound, because most beamsplitters will see much fewer photons. The number of photons in the component $P_c$ is denoted $n(P_c)$. Thus we bound $\tilde n = n_a + n_b \le n(P_c)$.

There are $\tilde n + 1$ possible way to distribute $\tilde n$ photons between $N_a, N_b$. Likewise, there are $\tilde n + 1$ ways to do this for the input numbers of photons $n_a, n_b$, but we shall count these in a later step when we reason about the size of the statevector to be evolved. Thus for each beamsplitter we need to compute up to $\tilde n + 1 \le n(P_c) + 1$ amplitudes, and together with the above number of terms per amplitude, we get up to $(n(P_c) + 1) \cdot (2 n(P_c) + 1)$ operations.

\subsubsection{All beamsplitters in the circuit}

We now count how many beamsplitters are contained in the circuit so that we can estimate the time needed to compute the transformation of the state through them all. Each loop of length $\ell'$ generates a sequence of $m - \ell'$ beamsplitters: there is a beamsplitter starting at each mode, except the $\ell'$ last ones. For simplicity, we upper bound this by $m$, noting also that generally $m \gg R > \ell'$. There are $\Lambda$ loops, so the number of beamsplitters in the circuit is bounded by $m\Lambda$.

Note that the reason we can count all beamsplitters in the circuit, instead of worrying about individual components, is a later simplification that assumes all components behave similarly.

Observe that the complexity depends on the number of photons in mode $P_c$. This complicates the analysis, because these numbers depend on the actual SPBS marginal probabilities. We will again simplify the analysis by taking an appropriate assumption.
Note first that
\[
  n(P_{c+1}) = n(P_c) - n'_c + n_{c + 1 + R},
\]
which corresponds to removing the $n'_c$ measured photons coming from the previous component $P_c$, and adding $n_{c + 1 + R}$ photons in the new input mode.

\paragraph{Simplification: number of photons roughly constant}
Without any more information, it is reasonable to assume that on average, the number of photons in a component will stay roughly the same, i.e. $n(P_{c+1}) \approx n(P_c)$. This takes into account that we add photons, but not in all new input modes, and we remove photons by measurement, but we cannot predict how many. At its core, this is an entropy argument: we cannot assume any more information.

With this simplification, the number of photons in each component becomes $n(P_c) \approx n(P_0) = \lceil R/2 \rceil \approx R/2$ (or generally $\le R$ for an input with $n_a \le 1$). Then the number of operations needed to compute amplitudes becomes the number of operations per beamsplitter multiplied by the number of beamsplitters (still ignoring the state space):
\begin{align}
  \nonumber
  T^{\mathrm{amp}}
  & = \mathcal{O}\left( m \Lambda \cdot \left( \frac R2 + 1 \right) \cdot (R + 1) \right) \\
  & = \mathcal{O}(m \Lambda R^2).
\end{align}
Recall again that $R = 1 + \Lambda$ if $\ell = 1$, and $R = \mathcal{O}(\ell^{\Lambda - 1})$ if $\ell > 1$. Then the time to compute the amplitudes of beamsplitters needed to evolve the state is:
\begin{equation}
  \label{eq:total-time-ev}
  T^{\mathrm{amp}} =
  \begin{cases}
    \mathcal{O}(m \Lambda^3) & \text{if}\ \ell = 1, \\
    \mathcal{O}(m \Lambda \ell^{2(\Lambda - 1)}) & \text{if}\ \ell > 1.
  \end{cases}
\end{equation}

\subsection{Effect of state in memory}

Finally, we recall that so far we only computed the time to compute the transition amplitudes. In order to evolve the state vector, we have to multiply its coordinate components by these values, which essentially means iterating over these components.

The number of statevector components, i.e. the dimension of the reachable state space, at each step, is given by the lattice paths. However, this would again make asymptotic analysis difficult. Instead, we extend the simplification we had already taken before: each beamsplitter can see all the photons in a progressive component $P_c$, and so we further assume that the Fock space at the input and output of each beamsplitter is the same as the final space at the end of the progressive component. Having assumed that $n(P_c) \approx n(P_0) \approx R/2$, we can bound the dimension of the relevant space by:
\[
  \binom{R + n(P_0) - 1}{R}
  \approx \binom{\frac32 R - 1}{R}
\]
We simplify this using Stirling's approximation
\[ \ln x! = x \ln x - x + \mathcal{O}(x) \]
which gives
\[ \binom{x}{qx} \sim x \Bigl( -q \ln q - (1-q) \ln (1-q) \Bigr) \]
for a large $x$ and $q \in (0,1)$. For us $x = \frac32 R - 1$ and $q \to \frac23$ as $R$ grows, and thus
\begin{equation}
  \label{eq:stirling-3/2}
  \binom{\frac32 R - 1}{R} = \mathcal{O}\left( \left( \frac{\sqrt{27}}{2} \right)^R \right) = \mathcal{O}( 2.6^R )
\end{equation}
where we take an upper bound $\sqrt{27}/2 = 2.598\dots < 2.6$.

Finally, noting that the above is independent of the component $P_c$, we obtain a bound on the total running time by multiplying the above with $T^{\mathrm{amp}}$. We recall again that $R = 1 + \Lambda$ if $\ell = 1$, and $R = \mathcal{O}(\ell^{\Lambda - 1})$ if $\ell > 1$. Thus the running time is:
\begin{equation}
  \label{eq:runtime-final}
  T =
  \begin{cases}
    \mathcal{O}(m \Lambda^3 2.6^\Lambda) & \text{if}\ \ell = 1, \\
    \mathcal{O}(m \Lambda \ell^{2(\Lambda - 1)} 2.6^{\ell^{\Lambda - 1}}) & \text{if}\ \ell > 1.
  \end{cases}
\end{equation}

\subsubsection{Measurement subsumed}

In the end, we note that we did not mention the measurement step that occurs after every progressive component $P_c$. This consists of iterating over the components of the state vector, summing up the marginal probabilities of sampling a particular photon number using a hash table, and then taking a sample from the resulting distribution. This can be done in time $\mathcal{O}(n(P_c) 2.6^R) = \mathcal{O}(R 2.6^R)$ that is already subsumed by the bound in (\ref{eq:runtime-final}).

\bibliographystyle{quantum}
\bibliography{bibliography}

\end{document}